\pgfplotsset{compat=1.18}
\newtheorem{thm}{Theorem}
\newtheorem{lem}[thm]{Lemma}
\newtheorem{prop}[thm]{Proposition}
\newtheorem{cor}[thm]{Corollary}
\newtheorem{defn}[thm]{Definition}
\newtheorem{exmp}[thm]{Example}
\newtheorem{remark}[thm]{Remark}
\newtheorem{quest}[thm]{Question}
\newcommand{\ZZ}{\mathbb Z}
\newcommand{\Ext}{\mathrm{Ext}}
\newcommand{\Tor}{\mathrm{Tor}}
\DeclareMathOperator{\Hom}{Hom}
\DeclareMathOperator{\im}{im}
\title{Witt Groups and Bulk-Boundary \\Correspondence for Stabilizer States}
\author{Błażej Ruba* and Bowen Yang**}
\begin{document}
\maketitle

\begin{abstract}
We establish a bulk--boundary correspondence for translation-invariant stabilizer states in arbitrary spatial dimension, formulated in the framework of modules over Laurent polynomial rings. To each stabilizer state restricted to half-space geometry we associate a boundary operator module. Boundary operator modules provide examples of quasi-symplectic modules, which are objects of independent mathematical interest. In their study, we use ideas from algebraic $L$-theory in a setting involving non-projective modules and non-unimodular forms. Our results about quasi-symplectic modules in one spatial dimension allow us to resolve the conjecture that every stabilizer state in two spatial dimensions is characterized by a corresponding abelian anyon model with gappable boundary. Our techniques are also applicable beyond two dimensions, such as in the study of fractons.
\end{abstract}

\tableofcontents
    \section{Introduction}

   Stabilizer states are quantum many-body states defined as the common invariant vectors of abelian subgroups of Pauli groups. These groups are discrete analogues of Heisenberg groups, with the simplest example generated by the Pauli matrices. They also arise naturally as ground states of certain local Hamiltonians with commuting terms. In many respects, they play a role analogous to that of gaussian (or~quasi-free) states for bosons and fermions. Interest in stabilizer states first arose in quantum information theory \cite{CRSS,Gottesman}. This line of work led to the discovery of integrable lattice spin models with stabilizer ground states and other rich structure. The most celebrated and influential example is Kitaev’s toric code \cite{KITAEV2003}, whose ground states realize the topological quantum field theory (TQFT) of $\mathbb{Z}_2$ gauge theory. More recently, stabilizer constructions of other TQFTs have been identified \cite{ellison2022pauli, ellison2023pauli, shirley2022three}.

    In three dimensions and beyond, stabilizer states give rise to an even richer variety of models. Alongside higher-dimensional generalizations of models realizing TQFTs, there also exist ``fracton'' models~\cite{nandkishore2019fractons, pretko2020fracton}, characterized by striking physical properties that defy conventional TQFT descriptions. Examples include Haah’s cubic code and the X-cube model~\cite{haah2013commuting, vijay2016fracton}. The discovery of such systems has motivated the development of novel algebraic frameworks for their classification and analysis~\cite{haah2013commuting, ruba2024homological, wickenden2024planon}.

    As we review in Section \ref{sec:Laurent}, the study of translation-invariant stabilizer states reduces to the following algebraic setup, which is also of independent mathematical interest. Consider the Laurent polynomial ring $R = \mathbb Z_n[x_1^{\pm 1}, \dots, x_d^{\pm 1}]$ equipped with the involution $\overline x_i = x_i^{-1}$. Let $L$ be a free $R$-module and $P =L_0^{\vphantom{*}} \oplus L_0^*$ be equipped with the sesquilinear pairing represented by the matrix
\begin{equation}
\begin{pmatrix}
0 & I_q \\
-I_q & 0
\end{pmatrix}.
\end{equation}
While in this paper we work with Laurent polynomial rings $R$, it is natural to study general rings with involution.

The central question is to what extent we can understand and classify Lagrangian submodules, namely, submodules $L \subset P$ satisfying $L^\perp = L$. Importantly, we do not assume $L$ to be a direct summand, so $L$ need not be projective. Lagrangian submodules of $P$ always exist, as~evidenced by $L_0$.

In the mathematical literature, attention is often restricted to Lagrangian submodules that are direct summands of $P$. In this setting, the pair $(L_0, L)$ defines a so-called formation~\cite{ranicki1973algebraicI,luck2024surgery}, and the classification problem falls within the scope of algebraic $L$-theory; specifically, the group $L_3(R)$. If $L$ is not a direct summand, other tools are required. The existing results are most complete for $d=2$ and prime number $n$, for which a satisfactory classification was obtained in \cite{haah2021classification}. However, even departure from the case of prime $n$ leads to a much richer landscape of stabilizer states, and requires different methods of study.


From the physical perspective, one crucial phenomenon is that stabilizer ground states can admit local excitations that cannot be created by acting on the ground state with local operators. Algebraically, they correspond to involution-antilinear homomorphisms $ L \to R$ that do not extend to $P$. In \cite{ruba2024homological}, we showed that their classification can be phrased in terms of homological algebra, namely via the module $Q^0:= \Ext^1_R(\overline{P/L},R)$. Building on this observation, we proposed considering charge modules $Q^i := \Ext^{i+1}_R(\overline{P/L},R)$, and studied their algebraic properties as well as physical interpretation in terms of excitations and symmetries. All charge modules vanish if $L$ is a direct summand. 

We remark that the dual of the charge module $Q^i$ can be computed as 
\begin{equation}
    D_i := \Tor_{i+1}^R(P/L,\widehat R) \cong \Hom_{\mathbb Z_n}(Q^i,\mathbb Z_n),
\end{equation}
as we explain in Section \ref{sec:23}. The case $i=0$ is the easiest to understand, and we find that $D_0 $ can be described in terms of certain non-local operators which detect non-trivial excitations but commute with all local operators. We believe that for stabilizer states this is the right general definition of the set of detectors proposed in \cite{wickenden2025excitation}.



The main topic of this article is a construction that we call the bulk-boundary correspondence, following standard physical terminology. It associates a~Lagrangian submodule $L$ with another module $P_\partial$, defined over the Laurent polynomial ring $R_\partial = \mathbb Z_n[x_1^{\pm 1}, \dots, x_{d-1}^{\pm 1}]$ in one fewer variable. The boundary module $P_\partial$ naturally carries a sesquilinear pairing~$\Omega_\partial$, which encodes the commutation relations of certain operators localized near the boundary of a~half-space in $\mathbb Z^d$. Related constructions are also considered in the physics literature~\cite{liang2024operator, schuster2023holographic, kobayashi2024generalized}, and find applications in quantum error correction~\cite{liang2025generalized, liang2025planar}.



Besides describing boundary conditions for lattice models in half-space geometry, our construction also leads to new results in the classification of stabilizer states. Moreover, it naturally leads to a class of objects called quasi-symplectic modules, first defined in \cite{ruba2024homological}. In their study, we define a new Witt group that generalizes $L_2(R_\partial)$, but differs from it in two key aspects: the modules are not necessarily projective, and the forms are not required to be unimodular. In fact, the failure of unimodularity of $P_\partial$ is related to the charge module $Q^0$ of $L$ via a homomorphism we call boundary-to-bulk map. Most of Section \ref{sec:bb_map} discusses conditions under which the boundary-to-bulk map is an isomorphism between $Q^0$ and an $R_\partial$-module $E_{P_\partial}$, which is constructed entirely in terms of $P_\partial$.


 Our construction has a counterpart in algebraic $L$-theory in the case where $L\subset P$ is a direct summand. Observe that $R = R_\partial[x_d^{\pm 1}]$. The bulk-boundary correspondence manifests as the fundamental theorem of $L$-theory, which relates $L_{i+1}(R)$ to $L_i(R_\partial)$~\cite{ranicki1973algebraicI, ranicki1973algebraicII}. This has been used in \cite{haah2023nontrivial,haah2025topological} to analyze Clifford quantum cellular automata (QCAs). An analogous fundamental theorem also exists in algebraic $K$-theory~\cite{quillen2006higher}. See Section 4.3 for further discussion of bulk-boundary correspondence in the case of Lagrangian direct summands. 
 

The bulk-boundary construction is applicable in any dimension $d$, including the fracton models. This is noteworthy because it contributes to systematic understanding of fractons, which is far from complete. We point out an interesting feature: the bulk-boundary map is surjective (likely an isomorphism) if the system is divided into half-spaces along a generic direction. Unlike in systems with fully mobile excitations, this surjectivity statement may fail for certain special boundary directions, namely those for which some local excitations can not be moved between the half-spaces by acting with local operators. 

In the case of two-dimensional models, we use bulk-boundary correspondence to settle a~conjecture \cite{ellison2022pauli} relating stabilizer states to a~class of TQFTs: abelian anyon models that admit topological boundary conditions. Recall that an~abelian anyon model is specified by the data of a finite abelian group $Q$ equipped with a~non-degenerate quadratic form $\theta$ (see~Appendix~\ref{app:metric_group}). In the physics literature, the elements of $Q$ are often called anyon types, the addition in $Q$ fusion, and $\theta$ the topological spin function. The existence of topological boundary conditions is equivalent to the existence of a Lagrangian subgroup in $Q$. 

In order to relate two-dimensional stabilizer states with abelian anyon models, we identify $Q$ as $Q^0$ introduced earlier (while higher $Q^i$ vanish for $d=2$ \cite{ruba2024homological}); the construction of $\theta$ goes back to \cite{levin2003fermions,haah2021classification}. Our contributions, in this article and in~\cite{ruba2024homological}, are to rigorously establish the following:
\begin{enumerate}
    \item The finiteness of $Q$, first shown by the present authors in~\cite{ruba2024homological}, with an alternative proof provided in this paper.
    \item The non-degeneracy of $\theta$, proved in this paper using the bulk--boundary correspondence together with a corresponding result on quasi-symplectic modules in dimension one. 
    \item The existence of a Lagrangian subgroup of $(Q, \theta)$, obtained via the bulk--boundary correspondence and our theory of Witt groups of quasi-symplectic modules in dimension one.
    \item That a choice of boundary conditions for a stabilizer state in half-space geometry gives rise to a Lagrangian subgroup of $Q$. 
\end{enumerate}

The article is organized as follows. In Section~\ref{sec:Laurent}, we introduce the algebraic framework underlying translation-invariant stabilizer codes, modeling Pauli operators and their commutation relations using modules over Laurent polynomial rings. We review the definition of the charge module \( Q_L \), describe its homological description, and discuss the concept of coarse-graining. In Section~\ref{sec:qs}, we introduce quasi-symplectic modules as a generalization of standard symplectic modules. Section~3.2 develops a Witt group that classifies such modules up to metabolic equivalence. Section~3.3 specializes to the one-dimensional case, in which we associate to $P_\partial$ a finite abelian group with a quadratic form, and relate this construction to the Witt group. In Section~4, we formulate a bulk-boundary correspondence for stabilizer codes defined on half-spaces. Section~4.1 introduces the boundary operator module \( P_\partial \) and shows that it is quasi-symplectic. Section~4.2 presents a map between $Q^0$ and a module $E_{P_{\partial}}$ naturally associated to $P_\partial$; we discuss conditions for its injectivity and surjectivity, and prove that it is always an isomorphism for $d=2$. Section~4.3 compares this construction to the boundary algebra associated with Clifford QCAs. Appendix \ref{app:Heisenberg} reviews background material on Heisenberg type groups, and Appendix \ref{app:metric_group} on quadratic forms and metric groups.

\section*{Acknowledgments}
We are grateful to Yu-An Chen, Dan Freed, Wilbur Shirley, Shmuel Weinberger, and Evan Wickenden for valuable discussions. We also thank Dan Freed, Mike Freedman and Anton Kapustin for reading a draft of this paper. The work of B.R. was supported by the National Science Centre (NCN) under Sonata Bis 13 grant no. 2023/50/E/ST1/00439. B.Y. acknowledges support from the Harvard CMSA and from the Simons Foundation through the Simons Collaboration on Global Categorical Symmetries.


    \section{Overview of the Laurent polynomial method} \label{sec:Laurent}

    \subsection{Pauli operators}
    \label{sec:ringR_moduleP}

    We study quantum spin systems defined on the lattice $\mathbb{Z}^{d}$, with
    translation symmetry and local degrees of freedom at each site. The algebra $\mathcal{A}
    _{\lambda}$ of operators associated with any single point $\lambda \in \mathbb{Z}
    ^{d}$ is taken to be that of $m$ copies of the $n$-state clock and shift system,
    i.e.\ it is generated by operators $Z_{\lambda, j}, X_{\lambda,j}$,
    $j = 1, \dots, m$, satisfying the algebraic relations
    \begin{equation}
        Z_{\lambda,j}^{n} = X_{\lambda,j}^{n} = 1, \quad Z_{\lambda,j}X_{\lambda,j}
        = e^{\frac{2\pi i}{n}}X_{\lambda,j}Z_{\lambda,j}, \quad Z_{\lambda,j}X_{\lambda,k}
        = X_{\lambda,k}Z_{\lambda,j}\text{ for }j \neq k.
        \label{eq:ZX_commutation_rules}
    \end{equation}
    The operators $Z_{\lambda,j}$ commute among themselves, as do the operators
    $X_{\lambda,j}$. Commutation between $Z$ and $X$ operators is governed by the
    relations above. The algebra $\mathcal{A}_{\lambda}$ is the algebra of all
    linear operators on a complex vector space of dimension $n^{m}$. The~collection
    of operators
    \begin{equation}
        \prod_{j=1}^{m} X_{\lambda,j}^{a_{j}}Z_{\lambda,j}^{b_j},
    \end{equation}
    indexed by $2m$-tuples $(a_{j},b_{j})_{j=1}^{m} \in \mathbb{Z}_{n}^{2m}$ of integers
    modulo $n$, is a basis of $\mathcal{A}_{\lambda}$.

    The algebra $\mathcal{A}_{S}$ of operators supported in a finite subset $S$
    of $\mathbb{Z}^{d}$ is defined as the tensor product
    \begin{equation}
        \mathcal{A}_{S} = \bigotimes_{\lambda \in S}\mathcal{A}_{\lambda}.
    \end{equation}
    When $S \subset S'$, the algebra $\mathcal{A}_{S}$ embeds naturally into $\mathcal{A}
    _{S'}\cong \mathcal{A}_{S} \otimes \mathcal{A}_{S' \setminus S}$ via tensoring
    with $1 \in \mathcal{A}_{S' \setminus S}$. This identification allows one to
    define the algebra $\mathcal{A}_{S'}$ of local operators supported in an
    infinite subset $S' \subset \mathbb{Z}^{d}$ as the union of algebras
    $\mathcal{A}_{S}$ with finite $S$ contained in $S'$. We denote by $\mathcal{A}
    := \mathcal{A}_{\mathbb{Z}^d}$ the algebra of all local operators.

    We define the \textbf{Pauli group} $\mathcal{P}$ as the multiplicative group
    consisting of all operators
    \begin{equation}
        z \prod_{\lambda \in \mathbb{Z}^d}\prod_{j=1}^{m} X_{\lambda,j}^{a_{\lambda,j}}
        Z_{\lambda,j}^{b_{\lambda,j}}, \label{eq:Pauli_operator}
    \end{equation}
    where $z \in \mathbb{C}$ with $|z|=1$, $a_{\lambda,j}, b_{\lambda,j}\in \mathbb{Z}
    _{n}$, and only finitely many $a_{\lambda,j}, b_{\lambda,j}$ are nonzero. In
    other words, $\mathcal{P}$ consists of all finite products of the clock and
    shift operators multiplied by phase factors. We will call such operators \textbf{Pauli
    operators}.

    The \textbf{projective Pauli group} $P$ is defined as the quotient of $\mathcal{P}$
    by its center, consisting of all complex phase factors. Algebraically, $P$
    is isomorphic to the additive group of finitely supported functions
    $\mathbb{Z}^{d} \to \mathbb{Z}_{n}^{2m}$,
    \begin{equation}
        P \cong \bigoplus_{\lambda \in \mathbb{Z}^d}\mathbb{Z}_{n}^{2m}.
    \end{equation}
    We will denote the group operation in $P$ additively, even though it
    originates from operator multiplication modulo phase.

    The group $\mathcal{P}$ is a central extension of $P$ by the group of all
    complex scalars. Commutation relations of elements in $\mathcal{P}$ are described
    by a bilinear and alternating form $\omega : P \times P \to \mathbb{Z}_{n}$.
    For $p,p' \in P$ let $T_{p}, T_{p'}\in \mathcal{P}$ be representatives. Then:
    \begin{equation}
        T_{p} T_{p'}= T_{p'}T_{p} e^{\frac{2 \pi i}{n} \omega(p,p')}. \label{eq:curlyP_commutation}
    \end{equation}
    Explicitly, if $T_{p},T_{p}'$ is as in \eqref{eq:Pauli_operator}, with exponents
    $(a_{\lambda,j},b_{\lambda_j})$ and $(a_{\lambda,j}',b_{\lambda_j}')$,
    respectively, all finitely supported, then:
    \begin{equation}
        \omega(p,p') = \sum_{\lambda \in \mathbb{Z^d}}\sum_{j=1}^{m} (b_{\lambda,j}
        a_{\lambda,j}' - a_{\lambda,j}b_{\lambda,j}').
    \end{equation}
    One can reconstruct the group $\mathcal{P}$ from the data $(P,\omega)$, see
    Appendix \ref{app:central}.

    The lattice $\mathbb{Z}^{d}$ acts on $P$ by translations of the lattice
    sites, and this action endows $P$ with the structure of a module over the group
    ring of $\mathbb{Z}^{d}$ over $\mathbb{Z}_{n}$. This group ring is
    isomorphic to the ring of Laurent polynomials in $d$ variables,
    \begin{equation}
        R := \mathbb{Z}_{n}[x_{1}^{\pm 1}, \dots, x_{d}^{\pm 1}],
    \end{equation}
    where each variable $x_{i}$ corresponds to translation by one unit in the $i$-th
    lattice direction. We will use the multi-index notation for monomials in $R$:
    \begin{equation}
        x^{\lambda}:= \prod_{i=1}^{d} x_{i}^{\lambda_i}\qquad \text{for }\lambda
        = (\lambda_{1},\dots,\lambda_{d}) \in \mathbb{Z}^{d}.
    \end{equation}
    We will also denote by overline the $\mathbb{Z}_{n}$-linear involution on $R$
    which takes $x^{\lambda}$ to $x^{-\lambda}$. For example:
    \begin{equation}
        \overline{x_1 + 2x_2^{-1} x_3}= x_{1}^{-1}+2 x_{2} x_{3}^{-1}.
    \end{equation}

    By identifying finitely supported functions
    $\mathbb{Z}^{d} \to \mathbb{Z}_{n}$ with polynomials:
    \begin{equation}
        \bigoplus_{\lambda \in \mathbb{Z^d}}\mathbb{Z}_{n} \ni (c_{\lambda})_{\lambda
        \in \mathbb{Z^d}}\mapsto \sum_{\lambda \in \mathbb{Z^d}}c_{\lambda} x^{\lambda}
        \in R,
    \end{equation}
    we obtain an isomorphism
    \begin{equation}
        P \cong R^{2m}.
    \end{equation}
    Under this identification, each element $p \in P$ corresponds to a $2m$-tuple
    of Laurent polynomials
    \begin{equation}
        p = (p_{1}, p_{2}, \dots, p_{2m}) \in R^{2m}.
    \end{equation}
    For example, the tuple $p=(1, 0, \dots , 0 , x^{\lambda} + x^{\mu})$ is represented
    by the operator:
    \begin{equation}
        T_{p}= X_{0,1}Z_{\lambda,m}Z_{\mu,m}.
    \end{equation}
    The action of $R$ on $P$ is then given by coordinate-wise polynomial
    multiplication of Laurent polynomials:
    \begin{equation}
        f \cdot p = (f p_{1}, f p_{2}, \dots, f p_{2m}), \qquad f \in R.
    \end{equation}
    In particular, translation by a lattice vector $\lambda$ corresponds to
    multiplication by the monomial $x^{\lambda}$.

    The bilinear form $\omega: P \times P \to \mathbb{Z}_{n}$, which encodes the
    commutation relations of operators in $\mathcal{P}$, can be lifted to a
    polynomial-valued pairing $\Omega : P \times P \to R$. This enriched form captures
    not only the original commutation phase but also its behavior under translations.
    It is characterized by the following properties:
    \begin{itemize}
        \item $\Omega$ is $R$-sesquilinear, i.e. $\mathbb{Z}_{n}$-bilinear and for
            $p,p' \in P$ and $r \in R$ we have
            \begin{equation}
                \Omega(p,rp') = \Omega(\overline r p , p') = r \Omega(p,p'). \label{eq:Omega_sesqulinear}
            \end{equation}

        \item The constant term of $\Omega(p,p')$ is $\omega(p,p')$.
    \end{itemize}
    As a consequence, $\Omega$ is anti-hermitian: $\Omega(p,p') = - \overline{\Omega(p',p)}$
    for every $p,p' \in P$.

    The polynomial $\Omega(p,p')$ encodes not only the complex phase arising from
    the commutation of $T_{p}$ with $T_{p}'$, but also of all the translates. Indeed,
    the coefficient of $x^{\lambda}$ in $\Omega(p,p')$ is
    $\omega(p,x^{-\lambda}p')$, which also equals $\omega(x^{\lambda} p , p')$ by
    translation invariance or by \eqref{eq:Omega_sesqulinear}. Thus, the
    algebraic properties of $\Omega$ succinctly reflect the translation symmetry
    of the commutation relations.

    Under the identification $P \cong R^{2m}$, there exists a simple formula for
    $\Omega$ in terms of Laurent polynomials. Let
    $p=(p_{1},\dots,p_{2m}), p' = (p_{1}',\dots,p_{2m}') \in R^{2m}$. Then:
    \begin{equation}
        \Omega(p,p') =\sum_{j=1}^{m} (\overline{p_{m+j}}p'_{j} - \overline{p_j}p'
        _{m+j}).
    \end{equation}

    We remark that the setup discussed in this Section can be slightly generalized by allowing the exponent $n$ in \eqref{eq:ZX_commutation_rules} to depend on $j$; in other words, every lattice point could host several clock and shift systems with a different number of states. In this case free modules are replaced by a class of modules called quasi-free in \cite{ruba2024homological}.

    \subsection{Stabilizer codes}
    \label{sec:stabilizer}

    A large class of quantum spin systems of interest are \textbf{stabilizer codes}.
    They are defined by specifying a set of commuting Pauli operators, called
    \textbf{stabilizers}, whose common invariant vectors are the ground states. These states can be viewed as the ground states of a local Hamiltonian constructed
    from the stabilizers. While this viewpoint is important for the physical
    interpretation, the Hamiltonian operator typically plays a~secondary role in
    the mathematical analysis.

    Algebraically, a~stabilizer code is
    specified by a subgroup $L \subset P$ corresponding to a commuting family of operators; that is, $\omega(l, l') = 0$ for all $l, l' \in L$. For translation-symmetric stabilizer codes, $L$ is an $R$-submodule, and the commutativity condition implies that $\Omega(l,l')=0$ for all $l,l' \in L$.

    \begin{defn}
        The \textbf{orthogonal complement} of a subgroup $S \subset P$ is defined
        as
        \begin{equation}
            S^{\perp} := \{ p \in P \mid \omega(p, s) = 0 \text{ for all }s \in S
            \}.
        \end{equation}
        We say that $S$ is \textbf{isotropic} if $S \subset S^{\perp}$, i.e.\ if
        $\omega(s,s')=0$ for all $s,s' \in S$. 
        
        If $S = S^\perp$, we say that $S$ is \textbf{Lagrangian}.
    \end{defn}

    Given an isotropic subgroup $L \subset P$, the notion of its invariant
    vectors becomes meaningful only after lifting each $l \in L$ to an element of the Pauli group~$\mathcal{P}$. Such a~lift is unique only up to a phase
    factor. In what follows, we assume that a consistent choice of lifts, defining a group homomorphism $L \to \mathcal P$, has
    been fixed. With a slight abuse of notation we identify elements
    of $L$ with the corresponding operators in $\mathcal{P}$. See Appendix~\ref{app:stabilizer}
    for a more detailed discussion.

    A particularly important class of stabilizer codes is characterized by the condition that $L$ is Lagrangian.
    Physically, this implies that the stabilizer group fully
    constrains the system, so that the state fixed by all stabilizers, called the
    \textbf{stabilizer state} $\ket{0}$, is unique\footnote{In geometries other
    than $\mathbb{Z}^{d}$, such as finite volumes with periodic boundary conditions,
    the~ground state need not be unique. Instead, the system can possess
    multiple ground states that are locally indistinguishable.}; see Appendix \ref{app:stabilizer} for a precise statement. 
    
    Our primary
    interest is in stabilizer codes defined by Lagrangian submodules. They are sometimes
    referred to as \textbf{topological stabilizer codes}, as they satisfy the so-called
    local topological order condition \cite{haah2013commuting}, and frequently exhibit features
    believed to characterize phases of quantum matter that are distinct from
    product states. Notable examples include the toric code in two dimensions and
    more exotic models in three or more dimensions.

    An excited state, or an excitation, is a common eigenstate $\ket{s}$ of the
    stabilizers for which not all eigenvalues are equal to $1$. For a Lagrangian
    code, an excitation is labeled by its \textbf{syndrome map}: a group
    homomorphism $s_0: L \;\longrightarrow\; \mathbb{Z}_{n}$ such that
    \begin{equation}
        l\ket{s}=\exp \left( \frac{2 \pi i}{n}s_0(l) \right) \ket{s}.
    \end{equation}
    The subscript $0$ indicates that $s_0$ is the constant term of a map $s $, which takes $l$ to formal power series in $x_1^{\pm 1},\dots,x_d^{\pm 1}$ encoding values of $s_0$ on all translates of $l$. Particularly important are the \textbf{localized} excitations, for~which
    the state $\ket{s}$ differs from the ground state $\ket{0}$ only within a bounded region of $\mathbb{Z}^{d}$. This localization can be captured by requiring that $s$ is valued in polynomials, rather than general formal power series. Equivalently, for every $l \in L$ we have $s_0(x^{\lambda} l)=0$ for all but finitely many $\lambda \in \mathbb Z^d$. This discussion leads to the following definitions.
    \begin{defn} \label{def2}
\begin{itemize}
    \item The $R$-module of formal power series $\sum_{\lambda \in \mathbb Z^d} c_\lambda x^\lambda$, with no restrictions on the coefficients $c_\lambda \in \mathbb Z_n$, is denoted by $\widehat R$. 
    \item If $M$ is an $R$-module, we obtain a new $R$-module $\overline M$ whose underlying abelian group is still $M$, but whose ring action is twisted by the involution. That is, $\overline M$ consists of elements $m \in M$, with the action of $R$ defined by $(r,m) \mapsto \overline r m$. If $N$ is another $R$-module, we identify $\Hom_R(\overline M,N)$ with the module of $R$-antilinear maps $M \to N$.
    \item If $M$ is an $R$-module, we have a canonical isomorphism
    \begin{equation}
    \Hom_R(\overline M , \widehat R) \ni \alpha \mapsto \alpha_0 \in \Hom_{\mathbb Z_n}(M,\mathbb Z_n)
    \end{equation}
    given by extracting the constant term. The inverse map is given by
    \begin{equation}
        \alpha(m) = \sum_{\lambda \in \mathbb Z^d} \alpha_0(x^\lambda m) x^\lambda. \label{eq:localized}
    \end{equation}
    \item We say that $\alpha \in \Hom(\overline M,  \widehat R)$ and the corresponding $\alpha_0 \in \Hom_{\mathbb Z_n}(M,\mathbb Z_n)$ are \emph{localized} if $\alpha$ takes values in $R$, or, equivalently, if for every $m \in M$ we have \begin{equation}\label{eq: localized}\alpha_0 (x^\lambda m) =0 \quad \text{for all but finitely many} \quad \lambda \in \mathbb Z^d.\end{equation} We denote
    \begin{equation} \label{eq: star-functional corr}
        M^* := \Hom_R(\overline M,R).
    \end{equation}
\end{itemize}
\end{defn}

    \begin{exmp}
        Let $\ket{0}$ be the stabilizer state satisfying $l\ket{0}= \ket{0}$ for
        all $l\in L$. If~$T_{p} \in \mathcal{P}$ represents $p\in P$, the state $T
        _{p} \ket{0}$ is an excitation whose syndrome map is
        \begin{equation}
            L \ni l \mapsto \omega(l,p) \in \mathbb Z_n,
            \label{eq:locally_created_syndrome_map}
        \end{equation}
        with the polynomial-valued counterpart $\Omega(l,p)\in R$. Such excitations are localized in the sense of \eqref{eq: localized}.
        Two Pauli operators $p,q\in P$ give rise to the same syndrome map if and only if
        $p-q\in L$. Hence the excitations creatable by local Pauli operators are
        classified by the quotient $P/L$.
    \end{exmp}

    \begin{defn}
    The localized excitations are classified by the quotient
    \begin{equation}
        Q_L:=\frac{\text{localized excitation}}{\text{locally creatable excitation}}
        =\frac{L^{*}}{P/L}.
        \label{eq:Q_def}
    \end{equation}
    In other words, $Q_L$ is the cokernel of the map
    \begin{equation}
        P \ni p \mapsto \left. \omega( \cdot , p ) \right|_{L} \in L^{*}.
    \end{equation}
    \end{defn}

    In~\cite{ruba2024homological}, it was shown that the module $Q_L$ can be expressed
    using homological algebra as $\Ext_{R}^{1}(\overline{P/L}, R)$. This
    characterization is pivotal for three reasons:
    \begin{enumerate}
        \item It allows to prove general results about the structure of stabilizer
            codes;

        \item It facilitates computations, since using homological methods is
            often more efficient than working from the definition;

        \item It inspires generalization of $Q_L$ to higher invariants of stabilizer
            codes.
    \end{enumerate}
    We elaborate on point 3 above. In \cite{ruba2024homological} it was proposed to define higher invariants of stabilizer codes by considering derived functors in higher degrees. Specifically, 
    \begin{equation}
     Q_L^i:= \mathrm{Ext}^{i+1}(\overline{P/L},R).   
     \label{eq:Qi_def}
    \end{equation}
    This definition, which reproduces \eqref{eq:Q_def} for $i=0$, is rather abstract. As shown in~\cite{ruba2024homological}, the groups $Q_L^i$ can be interpreted in terms of extended excitations, such as line or membrane-like defects, as well as in terms of higher symmetries. In the present work, we will not consider the case $i>0$.

    In the study of Pauli stabilizer codes, \textbf{coarse-graining} is the process of grouping together degrees of freedom associated to nearby lattice points in order to focus on the large-scale structure while discarding microscopic details. Algebraically, this is implemented by restricting the group of translations from $\mathbb Z^d$ to some finite index subgroup $\Lambda \subset \mathbb Z^d$. More precisely, coarse-graining is the operation of restriction of scalars: $R$-modules $M$ are studied as modules over the subring $R_\Lambda \subset R$ spanned by the monomials $x^{\lambda}$ with $\lambda \in \Lambda$. We highlight that $\Lambda \cong \mathbb Z^d$, so~$R_\Lambda$ is non-canonically isomorphic to $R$. Moreover, $R$ is a finitely generated free module over $R_\Lambda$, so freeness and finite generation are properties preserved by coarse-graining.

    Importantly, the locality of functionals is coarse-graining invariant: if~$M$ is an $R$-module and $\alpha_0 : M \to \mathbb Z_n$ is a group homomorphism, then the property of $\alpha_0$ being localized (cf.\ \eqref{eq: localized}) does not depend on whether $M$ is considered as an $R$-module or an $R_\Lambda$-module. As a consequence \cite[Proposition 34]{ruba2024homological}, the construction $Q^i$ in \eqref{eq:Qi_def} commutes with coarse-graining: if $L \subset P$ arises from $L' \subset P'$ by coarse-graining, then $Q^i_L$ is obtained from $Q^i_{L'}$ by coarse-graining.
    
\subsection{Laurent polynomial rings} \label{sec:23}

In the final part of this section, we review some standard facts about Laurent polynomial rings $
R := \mathbb{Z}_n[x_1^{\pm1},\dots,x_d^{\pm1}]$ that will be used later. More details can be found in the Appendix of~\cite{ruba2024homological}. 

Let $ n = \prod_{i=1}^k p_i^{r_i} $ be the prime factorization of $n$.
Then the Chinese Remainder Theorem gives a canonical isomorphism
\begin{equation}
R \cong \prod_{i=1}^k \mathbb{Z}_{p_i^{r_i}}[x_1^{\pm1},\dots,x_d^{\pm1}],
\end{equation}
and a corresponding decomposition of $R$-modules. All module-theoretic constructions decompose accordingly and can be analyzed prime-by-prime. Thus, it suffices to study the case where \(n = p^r\) is a prime power. 

If $n=p^r$, every element of $R$ is either \textbf{regular} (i.e.\ not a zero divisor) or nilpotent. The prime ideals of $R$ are in one-to-one correspondence with those in $\mathbb F_p[x_1^{\pm 1},\dots,x_d^{\pm 1}]$. In particular, the Krull dimension of $R$ is $d$.


\begin{defn}
An \( R \)-module \( E \) is called an injective cogenerator (in the category of $R$-modules) if it is an injective $R$-module and for every nonzero \( R \)-module \( M \), there exists a nonzero homomorphism \( M \to E \).
\end{defn}

\begin{prop}
    The module $\widehat R$ of formal power series is an injective cogenerator in the category of $R$-modules, and for every $R$-module $M$
    \begin{equation}
        \Hom_R(\overline M, \widehat R)\cong \Hom_{\ZZ_n}(M, \ZZ_n).
            \label{eq:Rhat_duality}
    \end{equation}
\end{prop}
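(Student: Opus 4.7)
The second half of the statement, the isomorphism $\Hom_R(\overline M,\widehat R)\cong \Hom_{\ZZ_n}(M,\ZZ_n)$, was already established in Definition \ref{def2}: the constant-term map $\alpha\mapsto\alpha_0$ is $R$-linear and bijective, with inverse given by \eqref{eq:localized}. So my plan focuses on showing that $\widehat R$ is an injective cogenerator, and I will use that isomorphism as the main lever.

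First, I would prove injectivity. The functor $M \mapsto \Hom_R(\overline M, \widehat R)$ is isomorphic to the composite $M \mapsto \Hom_{\ZZ_n}(M, \ZZ_n)$ of (i) restriction of scalars from $R$ to $\ZZ_n$ and (ii) $\ZZ_n$-linear dualization. Restriction is obviously exact, so exactness of the composite reduces to exactness of $\Hom_{\ZZ_n}(-,\ZZ_n)$, i.e., to showing that $\ZZ_n$ is self-injective. By CRT it suffices to check this when $n = p^r$, where $\ZZ_{p^r}$ is a principal Artinian local ring and hence quasi-Frobenius; one explicit verification is to use Baer's criterion on ideals $(p^k)$, where any map $(p^k)\to \ZZ_{p^r}$ is determined by the image of $p^k$, which must lie in the annihilator of $p^{r-k}$, i.e., in $(p^k)$, and so extends to all of $\ZZ_{p^r}$. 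Since $\overline{(-)}$ is an auto-equivalence of the category of $R$-modules, exactness of $\Hom_R(\overline{-},\widehat R)$ is equivalent to exactness of $\Hom_R(-,\widehat R)$, which is by definition injectivity of $\widehat R$.

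Second, I would prove the cogenerator property. Given a nonzero $R$-module $M$, I want to produce a nonzero $R$-homomorphism $M \to \widehat R$. Using the isomorphism (applied to $\overline M$ in place of $M$), this is equivalent to producing a nonzero $\ZZ_n$-linear map $M \to \ZZ_n$. By CRT reduce to $n = p^r$. Pick any nonzero $m\in M$; the cyclic $\ZZ_{p^r}$-submodule generated by $m$ is isomorphic to $\ZZ_{p^r}/(p^k)$ for some $0\le k<r$, and the map $1\mapsto p^{r-k}$ defines a nonzero $\ZZ_{p^r}$-linear injection $\ZZ_{p^r}/(p^k)\hookrightarrow \ZZ_{p^r}$. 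By the self-injectivity of $\ZZ_{p^r}$ established above, this extends to a nonzero map $M\to \ZZ_{p^r}$, as desired.

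The main obstacle is really conceptual bookkeeping rather than hard mathematics: one must carefully distinguish between $R$-linearity, $\ZZ_n$-linearity, and the involution twist $\overline{(-)}$, and recognize that both injectivity and the cogenerator property for $\widehat R$ over $R$ follow from the corresponding facts for $\ZZ_n$ over itself via the natural isomorphism. The only nontrivial input is the self-injectivity of $\ZZ_n$, which is a standard fact about quasi-Frobenius rings but is worth recording explicitly since it is used repeatedly throughout the paper.
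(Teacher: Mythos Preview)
Your proof is correct and takes essentially the same approach as the paper: both reduce via the isomorphism of Definition~\ref{def2} to the fact that $\mathbb{Z}_n$ is an injective cogenerator over itself, which the paper simply cites as known while you spell out the standard argument (Baer's criterion plus explicit embedding of cyclic modules). One minor indexing slip in the cogenerator step: the range should be $1 \le k \le r$ rather than $0 \le k < r$, so that the cyclic submodule $\mathbb{Z}_{p^r}/(p^k)$ is nonzero and the image $p^{r-k}$ is nonzero in $\mathbb{Z}_{p^r}$.
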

\begin{proof}
    The isomorphism \eqref{eq:Rhat_duality} is given explicitly in Definition \ref{def2}. The rest of the claim follows because $\mathbb Z_n$ is an injective cogenerator in the category of $\mathbb Z_n$-modules.
    \end{proof}
    

When $n$ is not a prime number, $R$ has zero divisors, and the theory of $R$-modules is more complicated than in the case of prime $n$. A good property of $R$ which facilitates arguments based on homological algebra is that it is a Gorenstein ring (see e.g.\ \cite{Bruns_Herzog}). We record some consequences below. 

\begin{prop} \label{prop:Ext_bounds}
    Let $M$ be an $R$-module, then 
    \begin{equation}
        \Ext^i_R(M, R)=0 \text{ for } i>d. 
    \end{equation}
    If $M$ is finitely generated, Krull dimension of its Ext modules obeys the bound
    \begin{equation}
        \dim (\Ext^i(M,R)) \leq d -i. 
    \end{equation}
    If in addition $M$ is torsion-free, then 
    \begin{equation}
      \dim (\Ext^i(M,R)) \leq d -i-1, \qquad  \Ext^d_R(M, R)=0. 
    \end{equation}
\end{prop}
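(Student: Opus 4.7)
The plan is to leverage the fact that $R$ is Gorenstein of Krull dimension $d$ (as noted above) together with a localization argument at prime ideals. The first bound is immediate: a Gorenstein ring of dimension $d$ has global self-injective dimension $d$, so the functor $\Ext^i_R(-,R)$ vanishes identically for $i > d$ on all $R$-modules, which gives the first statement without any finiteness or torsion-freeness assumption.

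For the dimension bound on finitely generated $M$, I would localize at an arbitrary prime $\mathfrak p \subset R$ of height $c$. Ext commutes with localization when $M$ is finitely generated, so
\begin{equation}
\Ext^i_R(M,R)_{\mathfrak p} \;\cong\; \Ext^i_{R_{\mathfrak p}}(M_{\mathfrak p}, R_{\mathfrak p}),
\end{equation}
and the right-hand side vanishes for $i > c$ because $R_{\mathfrak p}$ is Gorenstein local of injective dimension $c$. Consequently any prime in the support of $\Ext^i(M,R)$ has height at least $i$, which yields $\dim R/\mathfrak p \leq d - i$ and thus $\dim \Ext^i(M,R) \leq d - i$.

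For the improved bound under torsion-freeness, the strategy is to show that the \emph{top} local Ext vanishes at every prime of positive height, tightening the support containment by one. Fix $\mathfrak p$ of height $c \geq 1$. Since $R$ is Cohen--Macaulay, $\mathfrak p$ contains an element of $R$ that is a non-zero divisor, and by torsion-freeness of $M$ this element is also regular on $M$, so $\mathrm{depth}_{\mathfrak p R_{\mathfrak p}}(M_{\mathfrak p}) \geq 1$. I would then appeal to local duality for the Gorenstein local ring $R_{\mathfrak p}$, which identifies the top Ext $\Ext^c_{R_{\mathfrak p}}(M_{\mathfrak p}, R_{\mathfrak p})$ with the Matlis dual of the zeroth local cohomology $H^0_{\mathfrak p R_{\mathfrak p}}(M_{\mathfrak p})$; the latter vanishes because depth is positive. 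This refines the support containment to primes of height $\geq i+1$ for $i \geq 1$, giving $\dim \Ext^i(M,R) \leq d - i - 1$, with $\Ext^d_R(M,R) = 0$ recovered as the case $i = d$. The main step requiring external input is the invocation of local duality, which is a standard but nontrivial consequence of the Gorenstein hypothesis (see Bruns--Herzog); the rest of the argument is essentially bookkeeping once one has the correct form of Gorenstein-ness at each localization.
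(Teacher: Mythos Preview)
Your argument is correct. The paper does not actually supply a proof of this proposition: it is stated as a consequence of $R$ being Gorenstein, with a reference to Bruns--Herzog and to the appendix of the authors' earlier paper \cite{ruba2024homological}. Your localization-and-local-duality argument is precisely the standard route one would take to extract these bounds from the Gorenstein hypothesis, so there is nothing to compare against here beyond noting that you have filled in what the paper leaves to the literature.

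Two small remarks. First, your observation that a height $\geq 1$ prime contains a regular element is even more direct in this specific ring than the general Cohen--Macaulay argument you sketch: the paper records that every element of $R$ is either regular or nilpotent, so any prime strictly containing the nilradical contains a regular element. Second, you correctly restrict the sharpened bound to $i \geq 1$; the inequality $\dim \Ext^i(M,R) \leq d-i-1$ is indeed false for $i=0$ (take $M=R$), and the paper's intended application is to $Q^i = \Ext^{i+1}(\overline{P/L},R)$ with $i \geq 0$, so only $\Ext^j$ for $j \geq 1$ is relevant.
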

%

\begin{prop}{(Propositions 5.1 and 5.3 in \cite{cartan1999homological})}
Let $A$, $B$, and $C$ be $R$-modules, with $A$ finitely generated and $C$ injective. Then for all $i \geq 0$, there are canonical isomorphisms
\begin{equation}
\sigma_i: \Tor_i^R(A, \Hom_R(B, C)) \cong \Hom_R(\Ext^i_R(A, B), C),
\end{equation}
\begin{equation}
\rho^i: \Ext^i_R(A, \Hom_R(B, C)) \cong \Hom_R(\Tor_i^R(A, B), C).
\end{equation}
\end{prop}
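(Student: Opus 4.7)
The plan is to prove both isomorphisms by reducing to the case where $A$ is a finitely generated free module, and then deriving the general case by taking a resolution and passing to (co)homology. Let me first describe the base case $i=0$.

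For $\sigma_0$, I would define the natural map
\begin{equation}
\sigma_0 : A \otimes_R \Hom_R(B,C) \longrightarrow \Hom_R(\Hom_R(A,B), C), \qquad a \otimes f \longmapsto \bigl(\psi \mapsto f(\psi(a))\bigr).
\end{equation}
When $A = R$, both sides reduce canonically to $\Hom_R(B,C)$, and the map is the identity. By additivity of both functors, $\sigma_0$ is an isomorphism whenever $A$ is a finitely generated free module. Note that finite generation is essential here: an infinite direct sum in the first argument of $\Hom$ turns into a product, so $\sigma_0$ would fail to be surjective for infinitely generated free $A$. For $\rho^0$, I would instead use the tensor-hom adjunction
\begin{equation}
\rho^0 : \Hom_R(A, \Hom_R(B,C)) \xrightarrow{\ \cong\ } \Hom_R(A \otimes_R B, C),
\end{equation}
which is a natural isomorphism for arbitrary $A$, with no finiteness restriction.

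To bootstrap to higher $i$, I would pick a resolution $F_\bullet \to A$ by finitely generated free $R$-modules (possible since $A$ is finitely generated and $R$ is Noetherian for $n=p^r$, and in general by the Chinese Remainder decomposition reviewed above). For $\sigma_i$, apply the levelwise isomorphism $\sigma_0$ to obtain an isomorphism of complexes
\begin{equation}
F_\bullet \otimes_R \Hom_R(B,C) \;\cong\; \Hom_R\bigl(\Hom_R(F_\bullet, B), C\bigr).
\end{equation}
The left-hand complex has $\Tor_i^R(A, \Hom_R(B,C))$ as its $i$-th homology. For the right-hand complex, since $C$ is injective, the contravariant functor $\Hom_R(-,C)$ is exact and therefore commutes with taking (co)homology; applied to the cochain complex $\Hom_R(F_\bullet, B)$, whose cohomology is $\Ext^i_R(A,B)$, this yields $\Hom_R(\Ext^i_R(A,B), C)$ in homological degree $i$. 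The induced map on homology is the desired $\sigma_i$. The argument for $\rho^i$ is analogous: apply $\rho^0$ levelwise to obtain an isomorphism
\begin{equation}
\Hom_R(F_\bullet, \Hom_R(B,C)) \;\cong\; \Hom_R(F_\bullet \otimes_R B, C),
\end{equation}
then take cohomology on both sides, again using exactness of $\Hom_R(-,C)$ on the right to convert $H_i(F_\bullet \otimes_R B) = \Tor_i^R(A,B)$ into $\Hom_R(\Tor_i^R(A,B), C)$.

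The construction is essentially formal once the $i=0$ statements are in hand, so I do not expect serious obstacles. The delicate point is checking that the maps are genuinely canonical, meaning independent of the chosen resolution $F_\bullet$ and natural in $A$, $B$, $C$; this follows from the standard comparison-of-resolutions argument and the functoriality of $\sigma_0$ and $\rho^0$. A minor subtlety is keeping track of variance: $\Hom_R(-,C)$ is contravariant, so it converts the cochain complex $\Hom_R(F_\bullet,B)$ into a chain complex, which is why a cohomological $\Ext$ on one side matches a homological $\Tor$ on the other.
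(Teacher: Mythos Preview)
Your proof is correct and is essentially the standard argument one finds in Cartan--Eilenberg. Note, however, that the paper does not supply its own proof of this proposition: it is stated with an explicit citation to \cite{cartan1999homological} and used as a black box to derive Corollary~\ref{cor:tor_ext}, so there is no in-paper argument to compare against. Your observation that the finite-generation hypothesis on $A$ is needed only for $\sigma_i$ (to make $\sigma_0$ an isomorphism on free modules) and not for $\rho^i$ (where tensor--hom adjunction holds unconditionally) is also accurate.
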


\begin{cor} \label{cor:tor_ext}
    Let $M$ be a finitely generated $R$-modules, for all $i \geq 0$, there is a~canonical isomorphism
    \begin{equation}
        \Tor_i^R(M, \widehat R) \cong \Hom_R(\Ext^i_R(M, R), \widehat R).
    \end{equation}
    In particular, \begin{equation} \label{eq: Ext_Tor_vanish1}
        \Tor_i^R(M, \widehat R)=0 \text{ if and only if } \Ext^i_R(M, R)=0.
    \end{equation} 
\end{cor}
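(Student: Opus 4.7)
The plan is to deduce the isomorphism as a direct specialization of the Cartan--Eilenberg isomorphism $\sigma_i$ stated just above, and then extract the vanishing statement from the cogenerator property of $\widehat R$.

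First, I would take $A = M$, $B = R$, and $C = \widehat R$ in the isomorphism
\begin{equation*}
\sigma_i : \Tor_i^R(A, \Hom_R(B, C)) \cong \Hom_R(\Ext^i_R(A, B), C).
\end{equation*}
The hypothesis that $A = M$ is finitely generated is precisely what we are given, and the requirement that $C = \widehat R$ be injective is the content of the preceding proposition. Using the canonical identification $\Hom_R(R, \widehat R) \cong \widehat R$ (evaluation at $1$) on the left-hand side, the isomorphism $\sigma_i$ becomes exactly
\begin{equation*}
\Tor_i^R(M, \widehat R) \cong \Hom_R(\Ext^i_R(M, R), \widehat R),
\end{equation*}
which is the claimed canonical isomorphism. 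Naturality in $M$ follows from naturality of $\sigma_i$ in $A$.

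For the \textbf{in particular} clause, I would use that $\widehat R$ is an injective cogenerator, again by the preceding proposition. This means that for any $R$-module $N$, we have $\Hom_R(N, \widehat R) = 0$ if and only if $N = 0$. Applying this with $N = \Ext^i_R(M, R)$ and combining with the isomorphism just established yields
\begin{equation*}
\Tor_i^R(M, \widehat R) = 0 \iff \Hom_R(\Ext^i_R(M, R), \widehat R) = 0 \iff \Ext^i_R(M, R) = 0.
\end{equation*}

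There is essentially no serious obstacle here: the statement is a packaging of the Cartan--Eilenberg isomorphism together with the injective-cogenerator property of $\widehat R$. The only point worth double-checking is that the canonical identification $\Hom_R(R, \widehat R) \cong \widehat R$ intertwines the $R$-module structures correctly so that the resulting composite isomorphism is genuinely canonical, but this is immediate from unwinding the definitions.
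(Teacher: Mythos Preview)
Your proof is correct and follows exactly the same approach as the paper: specialize the Cartan--Eilenberg isomorphism $\sigma_i$ with $A=M$, $B=R$, $C=\widehat R$, then invoke the cogenerator property of $\widehat R$ for the vanishing equivalence.
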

\begin{proof}
    Let $A=M, B= R$ and $C= \widehat R$ in the proposition above. Then \eqref{eq: Ext_Tor_vanish1} is deduced from the fact that $\widehat R$ is a cogenerator. 
\end{proof}

By Corollary \ref{cor:tor_ext}, $\mathbb Z_n$-duals of charge modules of a stabilizer code described by a~Lagrangian $L \subset P$ are given by Tor modules:
\begin{equation}
    \mathrm{Tor}_{i+1}^R (P/L,\widehat R) \cong \Hom_R (Q^i_L, \widehat R) \cong \Hom_{\mathbb Z_n}(Q^i_L, \mathbb Z_n).
\end{equation}
In the case $i=0$, this has the simple interpretation that every nontrivial localized excitation can be detected by a non-local operator which is a product of infinitely many stabilizers which locally cancel out, yielding an operator which is well-defined on local excitations and acts trivially on locally creatable excitations. Indeed, the long exact sequence of $\mathrm{Tor}_\bullet^R(\cdot,\widehat R)$ modules induced by the short exact sequence $0 \to  L \to  P \to P/L \to 0 $ exhibits $\mathrm{Tor}_1^R(P/L,\widehat R)$ as the kernel
\begin{equation}
 \mathrm{Tor}_1^R(P/L,\widehat R) \cong \ker(  L \otimes_R \widehat R \to  P \otimes_R \widehat R)
 \label{eq:Tor_as_Ker}
\end{equation}
of the map induced by the inclusion $L \subset P$. The modules $L \otimes_R \widehat R$ and $P \otimes_R \widehat R$ describe infinite products of stabilizers and of general Pauli operators, respectively. Naively, one might expect that elements in the kernel \eqref{eq:Tor_as_Ker} must correspond to trivial operators, but this is wrong. As an example, consider the product of all plaquette operators in the toric code. One can define the action of such product on any state which is a local excitation of the ground state, and it acts on the $m$ excitation as multiplication by $-1$.

    \section{Systems without on-site product structure} \label{sec:qs}


    \subsection{Quasi-symplectic modules} \label{sec:quasi-symp}

    We now introduce the notion of quasi-symplectic modules $(P, \Omega)$, which
    generalize the structure discussed in Section~\ref{sec:ringR_moduleP}. The
    specific case discussed there will be referred to as the \textbf{standard
    symplectic module}.

    \begin{defn}
        A \textbf{quasi-symplectic module} is a finitely generated $R$-module $P$
        equipped with a sesquilinear form $\Omega : P \times P \to R$ satisfying
        \begin{itemize}
            \item For every $p \in P$, the constant term of $\Omega(p,p)$
                vanishes,

            \item If $\Omega(\cdot, p) =0$, then $p=0$.
        \end{itemize}
        It follows from these properties that $\Omega$ is anti-hermitian, i.e.\ $\Omega
        (p,p')=-\overline{\Omega(p',p)}$.
    \end{defn}

    Given a quasi-symplectic module $(P,\Omega)$, one obtains an alternating
    bilinear form $\omega : P \times P \to \mathbb{Z}_{n}$ by taking the
    constant term of $\Omega$. This form determines a~unique (up~to isomorphism)
    central extension $\mathcal{P}$ of $P$ by the group of complex phase factors,
    in which the commutation relations are governed by $\omega$ as described in
    \eqref{eq:curlyP_commutation} (see~Appendix~\ref{app:central}).

    Importantly, the fact that $\omega$ arises as the constant term of an $R$-valued
    form $\Omega$ reflects the \emph{locality} of the commutation relations: for
    any $p, p' \in P$, the quantity $\omega(p, x^{\lambda}p')$ vanishes for all but
    finitely many $\lambda \in \mathbb{Z}^{d}$. That is, each operator commutes with
    all but finitely many lattice translates of another. This locality property
    serves as a substitute for the on-site nature of degrees of freedom encoded
    in standard symplectic modules.

    The notions of orthogonal complements and of isotropic and Lagrangian
    submodules are defined for quasi-symplectic modules in the same way as for standard
    symplectic modules in Section~\ref{sec:stabilizer}. One can also construct stabilizer states corresponding to Lagrangian submodules, see Appendix \ref{app:stabilizer}.

    \begin{exmp}
        If $(P,\Omega)$ is a quasi-symplectic module and $M \subset P$ is a submodule,
        then $\left. \Omega \right|_{M}$ descends to the quotient
        $M / (M \cap M^{\perp})$, endowing this quotient with the structure of a
        quasi-symplectic module.

        This construction can be used, for example, to study Pauli operators that are invariant
        under some symmetry group acting on $\mathcal{P}$ by automorphisms. The algebra of such invariant local operators does not
        necessarily have the structure of the tensor product of on-site operator
        algebras. Nevertheless, it can still be described using a~quasi-symplectic
        module. 
    \end{exmp}

    Later, we will encounter a natural construction that gives rise to quasi-symplectic
    modules: the algebra of boundary degrees of freedom of a stabilizer code
    defined on a half-space geometry. First, however, let us develop some general
    theory that is applicable regardless of the specific context in which a quasi-symplectic
    module arises. As we will see, it produces certain invariants which allow to distinguish truly nontrivial quasi-symplectic modules from standard symplectic modules. 

    \begin{prop} \label{prop: quasi-sym}
    \cite{ruba2024homological} Let $(P,\Omega)$ be a quasi-symplectic module
        and let $M \subset P$ be a~submodule.
        \begin{enumerate}
            \item $P$ is a torsion-free module.

            \item The cokernel of the map
                \begin{equation}
                    P \ni p \mapsto \left. \Omega(\cdot , p) \right|_{M} \in M^{*}
                    \label{eq:quasi_symp_cokernel}
                \end{equation}
                is a torsion module.

            \item The double orthogonal complement $M^{\perp \perp}$ coincides with
                the saturation of $M$ in $P$; that is, it is the set of all $p \in P$
                such that there exists a regular element $r \in R$ such that $rp
                \in M$.
                \item We have $M^{\perp \perp \perp} = M^{\perp}$. In particular, $(\perp \perp)^2 = \perp \perp$.
        \end{enumerate}
    \end{prop}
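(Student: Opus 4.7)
The plan is to dispatch (1) and (4) directly from the definitions, and to reduce (2) and (3) to a duality computation over the localization $S^{-1}R$ at the multiplicative set $S$ of regular elements. After using the Chinese Remainder Theorem to assume $n = p^r$, the ring $S^{-1}R$ is a local Artinian Gorenstein ring with nilpotent maximal ideal $S^{-1}(pR)$; so $\Hom_{S^{-1}R}(-, S^{-1}R)$ is exact on finitely generated modules, preserves length, and satisfies biduality $N \cong N^{**}$.

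Parts (1) and (4) are immediate. For (1), if $rp = 0$ with $r$ regular, sesquilinearity gives $r\Omega(q, p) = \Omega(q, rp) = 0$ for every $q \in P$; regularity cancels, so $\Omega(\cdot, p) = 0$, and non-degeneracy forces $p = 0$. For (4), the containment $N \subset N^{\perp\perp}$ for any $N \subset P$ is clear from the definition; applying it to $N = M^\perp$ gives $M^\perp \subset M^{\perp\perp\perp}$, while applying the order-reversing $\perp$ to $M \subset M^{\perp\perp}$ gives the reverse inclusion. The identity $(\perp\perp)^2 = \perp\perp$ follows.

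For (2), factor the map as $P \xrightarrow{\Omega} P^* \xrightarrow{\mathrm{res}} M^*$. The cokernel of restriction embeds into $\Ext^1_R(\overline{P/M}, R)$ via the long exact sequence of $\Ext$, and this module has Krull dimension at most $d-1$ by Proposition~\ref{prop:Ext_bounds}, hence is torsion. For the cokernel of $\Omega$ itself, localize at $S$: the injection $S^{-1}\Omega : S^{-1}P \hookrightarrow (S^{-1}P)^*$ is a map between finitely generated $S^{-1}R$-modules of equal length (since dualization over the self-injective ring $S^{-1}R$ preserves length), and is therefore an isomorphism. Hence $\mathrm{coker}(\Omega)$ is $S$-torsion, and combining the two bounds shows that the cokernel of $P \to M^*$ is torsion.

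Part (3) is the substance. The inclusion $\mathrm{Sat}(M) \subset M^{\perp\perp}$ is direct: if $rp \in M$ with $r$ regular and $q \in M^\perp$, then $r\Omega(q, p) = \Omega(q, rp) = 0$, and cancelling $r$ places $p$ in $M^{\perp\perp}$. For the reverse inclusion I would first verify that $\perp$ commutes with $S^{-1}$, i.e.\ $(S^{-1}M)^\perp = S^{-1}(M^\perp)$, which follows from sesquilinearity together with the fact that $\overline{r}$ is regular whenever $r$ is. By the unimodularity of $S^{-1}\Omega$ established in (2), $(S^{-1}P, S^{-1}\Omega)$ is a quasi-symplectic module with unimodular form over an Artinian Gorenstein local ring. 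Under the isomorphism $S^{-1}P \cong (S^{-1}P)^*$ induced by $\Omega$, the complement $(S^{-1}M)^\perp$ corresponds to $(S^{-1}P/S^{-1}M)^* \subset (S^{-1}P)^*$; iterating the identification and invoking biduality gives $(S^{-1}M)^{\perp\perp} = S^{-1}M$. Therefore $S^{-1}(M^{\perp\perp}/M) = 0$, so $M^{\perp\perp}/M$ is $S$-torsion, which is exactly $M^{\perp\perp} \subset \mathrm{Sat}(M)$. The main obstacle is the Matlis-duality input over $S^{-1}R$ in the prime-power case: the nilpotent maximal ideal prevents a direct reduction to linear algebra over a field, and one must carefully invoke self-injectivity of $S^{-1}R$ both for the length-preservation step in (2) and for biduality in (3).
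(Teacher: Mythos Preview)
The paper does not prove this proposition; it simply cites \cite{ruba2024homological}. So there is no in-paper proof to compare against, and your argument must be judged on its own.

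Your proof is correct. Parts (1) and (4) are handled cleanly. For (2) and (3), passing to the total ring of fractions $S^{-1}R$ (after reducing to $n=p^r$ via the Chinese Remainder Theorem) is the natural move: $S^{-1}R$ is then local Artinian Gorenstein, so $\Hom_{S^{-1}R}(-,S^{-1}R)$ is exact and length-preserving, giving both the unimodularity of $S^{-1}\Omega$ and the biduality $(S^{-1}M)^{\perp\perp}=S^{-1}M$. The compatibility $S^{-1}(M^\perp)=(S^{-1}M)^\perp$ follows, as you note, from exactness of localization applied to the kernel description $M^\perp=\ker(P\to M^*)$ together with $S^{-1}(M^*)\cong(S^{-1}M)^*$ for finitely presented $M$.

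One small redundancy: in (2), once you have established that $S^{-1}\Omega$ is an isomorphism and that $(S^{-1}P)^*\to(S^{-1}M)^*$ is surjective by self-injectivity, the localized map $S^{-1}P\to (S^{-1}M)^*$ is already surjective, so the cokernel is torsion directly. The separate $\Ext^1$ bound via Proposition~\ref{prop:Ext_bounds} is correct but unnecessary; if you keep it, you should make explicit the exact sequence $\coker(P\to P^*)\to\coker(P\to M^*)\to\coker(P^*\to M^*)\to 0$ that justifies ``combining the two bounds''.
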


    \begin{defn}
        For a quasi-symplectic module $P$, we let $E_P$ be the module
        \begin{equation}
            E_P :=P^{*} / P,
        \end{equation}
        or, equivalently, the cokernel of \eqref{eq:quasi_symp_cokernel} for $M=P$. 

       When $E_P$ is the trivial module, that is when $\Omega$ induces an isomorphism $P \cong P^*$, we
    call $P$ a \textbf{symplectic module}. As suggested by the adapted terminology,
    standard symplectic modules are examples of symplectic modules.
    \end{defn}

    The operation $\perp \perp$ has properties analogous to topological closure. In the remainder of this Section, we describe certain natural topologies in which $\perp \perp$ serves as the closure operator when applied to submodules. This construction is not essential to follow the rest of the text.

    First consider the rank one free module $R$. As an abelian group, \begin{equation}
        R \cong \bigoplus_{\lambda \in \mathbb Z^d} \mathbb Z_n \subset \prod_{\lambda \in \mathbb Z^d} \mathbb Z_n.
    \end{equation} 
    We equip $R$ with the subspace topology induced by the product topology on $\prod_{\lambda \in \mathbb Z^d} \mathbb Z_n$. This topology is metrizable, and a sequence of polynomials $r_j \in R$ converges to $r \in R$ if and only if for every $\lambda \in \mathbb Z^d$ the coefficient of $r_j-r$ at $x^\lambda$ vanishes for sufficiently large $j$ (dependent on $\lambda$); we think of $r_j-r$ as escaping to infinity in~$\mathbb Z^d$. The topology we described makes $R$ a topological abelian group with addition, but multiplication is only separately continuous. 
    
    Finite rank free modules $R^t$ are given the product topology. If $M$ is any finitely generated module, we choose a finite generating set $e_1,\dots,e_t$ in $M$ and equip $M$ with the quotient topology induced by the epimorphism
    \begin{equation}
        R^t \ni (f_1,\dots,f_t) \mapsto \sum_{i=1}^t f_i e_i \in M.
    \end{equation}
    If $m_j$ is a sequence in $M$, then $m_j$ converges to an element $m \in M$ if and only if for every finite subset $\Gamma \subset \mathbb Z^d$ and for sufficiently large $j$ (dependent on $\Gamma$)
    \begin{equation}
        m-m_j \in \sum_{i=1}^t \sum_{\lambda \in \mathbb Z^d \setminus \Gamma} \mathbb Z_n x^\lambda e_i. 
    \end{equation}

\begin{prop} \label{prop:topologies}
    Let $M$, $N$ be finitely generated modules carrying quotient topologies induced by epimorphisms $\sigma_M : R^t \to M$ and $\sigma_N : R^s \to N$, corresponding to some choices of generators. 
    \begin{enumerate}
        \item Every module homomorphism $M \to N$ is continuous.
        \item The topology on $M$ does not depend on the choice of generators.
        \item The closure of $\{ 0 \}$ in $M$ is the torsion submodule of $M$. In particular, $M$ is Hausdorff if and only if it is torsion-free.
    \end{enumerate}
\end{prop}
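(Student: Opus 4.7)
\textit{Parts 1 and 2.} The plan is to lift the homomorphism $\phi : M \to N$ through the given presentations. Choose preimages $v_i \in R^s$ of $\phi \circ \sigma_M(e_i)$ for the standard basis vectors $e_i$ of $R^t$, and define $\tilde\phi : R^t \to R^s$ as the $R$-linear extension $e_i \mapsto v_i$. Concretely $\tilde\phi$ is given by an $s \times t$ matrix with entries in $R$, and multiplication by any fixed element of $R$ acts continuously on $R$, because each Laurent coefficient of $r f$ is a finite $\mathbb Z_n$-linear combination of the coefficients of $f$. Hence $\tilde\phi$ is continuous, and so is $\sigma_N \circ \tilde\phi = \phi \circ \sigma_M$; the universal property of the quotient topology on $M$ then forces $\phi$ itself to be continuous. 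For part 2, apply part 1 to the identity $\mathrm{id}_M$ between two candidate topologies in both directions; continuity either way shows that the two topologies coincide.

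\textit{Part 3, $T(M) \subseteq \overline{\{0\}}_M$.} The first step is to show that $rR$ is dense in $R$ whenever $r \in R$ is regular. The ring $R$ embeds densely in the compact Hausdorff topological group $\widehat R = \prod_\lambda \mathbb Z_n$, and multiplication by $r$ on $\widehat R$ has closed image by compactness. That image is all of $\widehat R$: dualising the short exact sequence $0 \to R \xrightarrow{r} R \to R/rR \to 0$ by $\Hom_R(-,\widehat R)$ and using the injectivity of $\widehat R$ gives an exact sequence $0 \to \Hom_R(R/rR,\widehat R) \to \widehat R \xrightarrow{r} \widehat R \to 0$. Therefore $r \widehat R = \widehat R$, and since $rR$ is dense in $r\widehat R$, it is dense in $R$, so $R/rR$ carries the indiscrete topology. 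For a torsion element $m \in M$ with $rm = 0$ and $r$ regular, part 1 supplies a continuous surjection $R/rR \twoheadrightarrow Rm$, transferring the indiscrete topology to $Rm$ and placing $m \in \overline{\{0\}}_{Rm}$. Part 1 applied to the inclusion $Rm \hookrightarrow M$ then yields $m \in \overline{\{0\}}_M$.

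\textit{Part 3, $\overline{\{0\}}_M \subseteq T(M)$ --- the main obstacle.} The plan is to separate every $m \notin T(M)$ from $0$ by a continuous $\mathbb Z_n$-linear functional $\alpha_0 : M \to \mathbb Z_n$; since $\mathbb Z_n$ is discrete, its kernel is an open neighbourhood of $0$ missing $m$. By the constant-term isomorphism of Definition~\ref{def2}, continuous $\alpha_0$ correspond precisely to localized functionals $\alpha \in M^* = \Hom_R(\overline M, R)$, and these are in canonical bijection with $\gamma \in \Hom_R(M, R)$ by post-composition with the conjugation on $R$. So it is enough to exhibit $\gamma \in \Hom_R(M, R)$ with $\gamma(m) \neq 0$: a subsequent monomial translation $\gamma \mapsto x^{-\lambda_0} \gamma$ moves a chosen nonzero coefficient of $\gamma(m)$ into the constant term, yielding the desired $\alpha_0$ with $\alpha_0(m) \neq 0$. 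The main technical input will be the Gorenstein structure of $R$ recalled in Section~\ref{sec:23}. Because $R$ is Cohen--Macaulay, associated primes coincide with minimal primes; and because $\mathrm{Ann}_R(m)$ contains no regular element, prime avoidance supplies a prime $\mathfrak p \in \mathrm{Ass}(R)$ with $\mathrm{Ann}_R(m) \subseteq \mathfrak p$. The localisation $R_\mathfrak p$ is then Artinian Gorenstein, hence self-injective and a cogenerator, and $m$ is nonzero in $M_\mathfrak p$ (any $s \notin \mathfrak p$ annihilating $m$ would lie in $\mathrm{Ann}_R(m) \subseteq \mathfrak p$). The cogenerator property provides a map $\delta : M_\mathfrak p \to R_\mathfrak p$ with $\delta(m_\mathfrak p) \neq 0$; since $M$ is finitely generated, $\Hom$ commutes with localisation, so $\delta = \gamma/s$ for some $\gamma \in \Hom_R(M, R)$ and $s \notin \mathfrak p$, forcing $\gamma(m) \neq 0$ in $R$ and completing the argument.
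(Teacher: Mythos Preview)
Your proof is correct. Parts 1 and 2 match the paper's argument essentially verbatim. Part 3, however, proceeds quite differently in both directions.

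For the inclusion $T(M)\subseteq\overline{\{0\}}$, the paper constructs an explicit sequence: after arranging (via a Newton-polytope argument) that the regular annihilator has the form $1-g-h$ with $g$ ``pointing to infinity'' in a chosen direction and $h$ nilpotent, it writes $m=(g+h)^N m$ and shows the right-hand side escapes any finite window as $N\to\infty$. Your route through the compact dual $\widehat R$, surjectivity of multiplication by $r$ via injectivity of $\widehat R$, and the resulting indiscreteness of $R/rR$ is cleaner and avoids any combinatorics. (The ``closed image by compactness'' remark is not actually needed once you have $r\widehat R=\widehat R$.) The trade-off is that the paper's explicit construction is reused verbatim later, in the proof of Proposition~\ref{prop:double_perp_closure} and again in Proposition~\ref{prop: V_surj_generic}, so the authors get mileage out of the hands-on version that your abstract argument does not supply.

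For the reverse inclusion $\overline{\{0\}}\subseteq T(M)$, both arguments reduce to the same fact---a non-torsion element is detected by some $\gamma\in\Hom_R(M,R)$---but the paper simply cites \cite[Proof of Lemma~8]{ruba2024homological}, whereas you give a self-contained proof via localisation at a minimal prime and the injective-cogenerator property of Artinian Gorenstein local rings. Your argument is a nice independent verification of the cited fact.
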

\begin{proof}
    1. Let $\phi : M \to N$ be a module homomorphism. We have to show that $\phi \circ \sigma_M$ is continuous. By freeness, there exists a homomorphism $\phi' : R^t \to R^s$ such that $\sigma_N \circ \phi' = \phi \circ \sigma_M$. It is easy to check that homomorphisms of free modules are continuous. Thus $\sigma_N \circ \phi' = \phi \circ \sigma_M$ is continuous, and hence $\phi$ is continuous by the universal property of the quotient map $\sigma_N$.

    2. By 1., the identity endomorphism of $M$ is a homeomorphism between topologies defined by any two epimorphisms $R^t \to M$ and $R^s \to M$. 

    3. First we prove $\supset$. We can assume that $n$ is a prime power. Let $m \in M$ be annihilated by a regular element $f \in R$, and let
\begin{equation}
    S = \{ \lambda \in \mathbb Z^d \, | \, \text{the coefficient of } f \text{ at } x^\lambda \text{ is invertible} \}.
\end{equation}
$S$ is a finite set, and it is nonempty because $f$ is regular. If $S$ has only one element, then $f$ is invertible and hence $m=0$. Thus we may assume that $S$ has at least two elements. Let $S_{\mathrm{ext}}$ be the set of extreme points of the convex hull of $S$ in $\mathbb R^d$. In other words, $S_{\mathrm{ext}}$ is the set of $s \in S$ which are not in the convex hull of $S \setminus \{ s \}$. We have $S_{\mathrm{ext}} \neq \emptyset$. We replace $f$ by $x^\lambda f$ for some $\lambda \in \mathbb Z^d$ to ensure that $0\in S_{\mathrm{ext}}$. We~can also assume that the constant term of $f$ equals $1$.

It is well known that every vertex of a convex polytope in $\mathbb R^d$ is the unique minimizer over the polytope of some linear functional. Therefore, there exists $\delta \in \mathbb R^d$ such that for every $\lambda \in S \setminus \{ 0 \}$ we have 
\begin{equation}
    \delta \cdot \lambda = \sum_{i=1}^d \delta_i \lambda_i > 0 . 
\end{equation}
We let $\delta_0 > 0$ be the minimum of $\delta \cdot \lambda$ over $S \setminus \{ 0 \}$. 

We express $f$ in the form $1-g-h$, where $-g$ contains the terms of $f$ proportional to $x^\lambda$ with $\lambda \in S \setminus \{ 0 \}$, and $-h$ contains the terms proportional to $x^\lambda$ with $\lambda \not \in S$. Then $h$ is nilpotent. We let $r$ be an integer such that $h^r =0$. We have
\begin{equation}
    m = (g+h) m, 
\end{equation}
and therefore for any integer $N > 0$:
\begin{equation}
    m = (g+h)^N m.
\end{equation}
We take $N \geq r $ and perform the binomial expansion:
\begin{equation}
    m = \left( \sum_{k=0}^r \binom{N}{k} g^{N-k} h^k  \right) m .
    \label{eq:m_equals_m_far_away}
\end{equation}
The polynomial in the parenthesis contains only terms proportional to $x^\lambda$ with
\begin{equation}
    \delta \cdot \lambda \geq N \delta_0  + c,
\end{equation}
where $c$ is a constant independent of $N$. Hence the right hand side of \eqref{eq:m_equals_m_far_away} converges to zero as $N \to \infty$. This completes the proof of $\supset$. 

 To establish $\subset$, note that if $m \in M$ is in the closure of $\{ 0 \}$, then $\varphi(m) =0$ for all $\varphi \in M^*$. Then $m$ is a torsion element, see e.g.~\cite[Proof of Lemma 8]{ruba2024homological}.
\end{proof}

We remark that if $M' \subset M$ is a submodule, the subspace of topology induced on $M'$ may be strictly coarser than our preferred topology induced by a choice of generators in $M'$. A simple example of this is $M = R$ and $M'=(x-1)$, the ideal generated by $x-1$. Then the sequence
\begin{equation}
    x^j - x^{-j} = (x-1) \sum_{k=-j}^{j-1} x^k
\end{equation}
converges to zero as $j \to \infty$ in the module $M$, but not in $M'$. 

\begin{prop} \label{prop:double_perp_closure}
    Let $(P,\Omega)$ be a quasi-symplectic module and let $M \subset P$ be a~submodule. Then $M^{\perp \perp}$ is the closure of $M$ in $P$.
\end{prop}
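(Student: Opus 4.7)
The plan is to prove the two inclusions separately. For $\overline{M} \subseteq M^{\perp\perp}$, I would write
\begin{equation}
    M^{\perp\perp} = \bigcap_{q \in M^\perp} \ker\bigl(\Omega(\cdot, q)\bigr)
\end{equation}
and argue that this intersection is closed. Pulled back along any surjection $R^t \twoheadrightarrow P$, the map $\Omega(\cdot, q) \colon P \to R$ takes the form $(r_1, \ldots, r_t) \mapsto \sum_i \overline{r_i}\, f_i$ for fixed $f_i \in R$, which is continuous since addition, the involution, and multiplication by a fixed element are all continuous on $R$. Continuity then passes to $P$ by the universal property of the quotient topology. Because $R$ is torsion-free over itself, Proposition \ref{prop:topologies} part 3 makes $R$ Hausdorff, so each $\ker\bigl(\Omega(\cdot, q)\bigr)$ is closed; since $M \subseteq M^{\perp\perp}$, this yields $\overline{M} \subseteq M^{\perp\perp}$.

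For the reverse inclusion, I would combine Proposition \ref{prop: quasi-sym} part 3 with Proposition \ref{prop:topologies} part 3 via the projection $\pi \colon P \to P/M$. By the former, any $p \in M^{\perp\perp}$ satisfies $rp \in M$ for some regular $r \in R$, so $\pi(p)$ is a torsion element of $P/M$; by the latter, $\pi(p)$ lies in the closure of $\{0\}$ in $P/M$. The map $\pi$ is an open continuous surjection of topological abelian groups (openness because $\pi^{-1}(\pi(U)) = U + M$), and the quotient topology on $P/M$ coincides with the topology induced by any choice of generators. Hence $\pi^{-1}\bigl(\overline{\{0\}_{P/M}}\bigr) = \overline{M}$, and we conclude $p \in \overline{M}$.

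The main subtlety is the topological matching in the last step: one must check that the quotient topology on $P/M$ inherited from $P$ agrees with the topology defined via a choice of generators of $P/M$. This follows from factoring $R^t \twoheadrightarrow P \twoheadrightarrow P/M$ and invoking Proposition \ref{prop:topologies} part 2. If the topological bookkeeping is inconvenient, one can instead replay the binomial argument from the proof of Proposition \ref{prop:topologies} part 3: reduce to the prime power case for $n$, normalize $r$ as $1 - g - h$ with $h$ nilpotent and $g$ supported in an open half-space, iterate $p \equiv (g+h)p \pmod{M}$ to obtain $p - (g+h)^N p \in M$, and note that $(g+h)^N p \to 0$ in $P$, so the sequence $p - (g+h)^N p$ in $M$ converges to $p$.
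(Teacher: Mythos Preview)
Your proof is correct. For the inclusion $\overline{M}\subseteq M^{\perp\perp}$ you spell out in detail what the paper summarizes as ``$\Omega$ is separately continuous''; these are the same argument. For the reverse inclusion you take a slightly different route from the paper: the paper replays the binomial construction from the proof of Proposition~\ref{prop:topologies}~(3) directly inside $P$, producing $r_j\to 0$ with $p-r_jp\in M$ (exactly your alternative at the end), whereas your primary approach applies Proposition~\ref{prop:topologies}~(3) to $P/M$ as a black box and then checks that the quotient topology on $P/M$ matches the generator topology so that $\pi^{-1}(\overline{\{0\}})=\overline{M}$. Your packaging is arguably cleaner in that it reuses the lemma rather than its proof, at the cost of the small topological verification you flag; the paper's direct replay avoids that bookkeeping. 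Both are valid and short.
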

\begin{proof}
The quasi-symplectic form $\Omega$ is separately continuous, so the closure of $M$ is contained in $M^{\perp \perp}$. For the opposite inclusion, if $m \in M^{\perp \perp}$, then by 3. in Proposition \ref{prop: quasi-sym} there exists a regular element $f \in R$ such that $fm \in M$. That is, $fm$ represents zero in $P/M$. Proceeding as in the proof of 3. in Proposition \ref{prop:topologies}, we~construct a~sequence of elements $r_j \in R$ such that
\begin{equation}
    m = r_j m \text{ mod } M, \qquad r_j \to 0.
\end{equation}
Thus we can find a sequence $m_j$ in $M$ such that $m=r_j m + m_j$. Then $m_j \to m$.
\end{proof}

 We remark that in standard symplectic modules, $\perp \perp$ coincides with the closure when applied to arbitrary subgroups. This follows easily from the Pontryagin duality between $R^{2q}$ regarded as a discrete group and the compact group $\widehat R^{2q} \cong \prod_{\lambda \in \mathbb Z^d} \mathbb Z_n^{2q}$. The result in Proposition \ref{prop:double_perp_closure} is more general in that it allows for arbitrary quasi-symplectic modules, but it applies only to submodules. This is because it relies on the algebraic description of $\perp \perp$ given in 3. of Proposition \ref{prop: quasi-sym}.

      \subsection{Witt group for quasi-symplectic modules}
    \label{sec:weak_witt}

This subsection develops a classification theory for quasi-symplectic modules over $R = \mathbb{Z}_n[x_1^{\pm 1}, \dots, x_d^{\pm 1}]$ via the \textbf{Witt group}. Classically, the Witt group (or algebraic $L$-group~\cite{ranicki1973algebraicI}) classifies (skew-)hermitian or quadratic forms up to stable equivalence, identifying as trivial those that are \textbf{metabolic}, i.e. those admitting a~Lagrangian submodule that is a~direct summand. The weak Witt group~\cite{scharlau2012quadratic} relaxes this, not requiring Lagrangian submodules of metabolic forms to be direct summands. Adopting this convention, we refer to our construction simply as the Witt group, though some authors may call it the weak Witt group. Another difference between our treatment and a part of the literature is that we do not restrict attention to projective or free modules. 

A major advantage of the Witt group is that it addresses a fundamental question--namely, which quasi-symplectic modules admit a Lagrangian submodule--without requiring a classification up to isomorphism, which is substantially more difficult. As we demonstrate below, calculations in the Witt group can be reduced to the case of prime $n$, whereas no such reduction appears to be available for classification up to isomorphism.

    \begin{defn}
    Given two quasi-symplectic modules $(P, \Omega)$ and $(P', \Omega')$ over $R$, their \textbf{orthogonal sum} is $(P \oplus P', \Omega \oplus \Omega')$, and the \textbf{opposite form} of $(P, \Omega)$ is $(P, -\Omega)$. Both of them are quasi-symplectic. Define the shorthand notation 
    \begin{equation}
     P\ominus P' := (P \oplus P', \Omega \oplus -\Omega').   
    \end{equation}
    \end{defn}

    \begin{defn}
    We say that $(P, \Omega)$ and $(P', \Omega')$ are \textbf{isomorphic} if there exists an $R$-module isomorphism $\phi: P \to P'$ such that $ \Omega' \circ (\phi \times \phi) = \Omega$. We say they are Witt equivalent, or~write $(P, \Omega)\sim (P', \Omega')$, if $P\ominus P'$ contains a Lagrangian submodule. In~particular, $(P, \Omega)$ contains a Lagrangian submodule if and only if $(P, \Omega)\sim 0$. In~this case $P$ is said to be metabolic.
    \end{defn}
    It is easy to see that isomorphism between quasi-symplectic modules defines an equivalence relation. To show that Witt equivalence is also an equivalence relation, two~technical lemmas are needed.
        \begin{lem} \label{lem: inverse_form}
\begin{enumerate}
    \item The orthogonal sum of any two metabolic modules is metabolic.
    \item $P\ominus P$ is metabolic for any quasi-symplectic module $(P, \Omega)$; in fact it admits a~Lagrangian direct summand.
\end{enumerate}
    \end{lem}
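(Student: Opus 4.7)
The plan is to exhibit explicit Lagrangian submodules in both cases; both constructions are standard moves from symplectic linear algebra that should carry over to the quasi-symplectic setting without resistance, since the definition of Lagrangian depends only on $\Omega$ and not on projectivity of $P$.

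For part 1, given Lagrangians $L_i \subset P_i$ in quasi-symplectic modules $(P_i, \Omega_i)$ for $i = 1, 2$, I would take $L := L_1 \oplus L_2 \subset P_1 \oplus P_2$ equipped with the form $\Omega_1 \oplus \Omega_2$. Isotropy is immediate from $\Omega_i|_{L_i \times L_i} = 0$. For the inclusion $L^\perp \subset L$, I would pair an arbitrary $(q_1, q_2) \in L^\perp$ with the elements $(l_1, 0)$ and $(0, l_2)$ separately; the resulting identities $\Omega_1(l_1, q_1) = 0$ for all $l_1 \in L_1$ and $\Omega_2(l_2, q_2) = 0$ for all $l_2 \in L_2$ force $q_i \in L_i^\perp = L_i$.

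For part 2, the natural Lagrangian in $P \ominus P = (P \oplus P, \Omega \oplus -\Omega)$ is the diagonal
\begin{equation}
    \Delta := \{ (p, p) : p \in P \}.
\end{equation}
Isotropy follows from the cancellation $(\Omega \oplus -\Omega)((p, p), (p', p')) = \Omega(p, p') - \Omega(p, p') = 0$. To verify $\Delta^\perp \subset \Delta$, I would take $(q_1, q_2) \in \Delta^\perp$; then for every $p \in P$ one has
\begin{equation}
    \Omega(p, q_1) - \Omega(p, q_2) = \Omega(p, q_1 - q_2) = 0,
\end{equation}
and the non-degeneracy axiom in the definition of quasi-symplectic modules (the second bullet) yields $q_1 = q_2$. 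Finally, the decomposition $(p_1, p_2) = (p_2, p_2) + (p_1 - p_2, 0)$ exhibits $P \oplus P = \Delta \oplus (P \oplus 0)$, so $\Delta$ is a direct summand.

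I do not anticipate any significant obstacle: the proof uses only the definition of Lagrangian, the behavior of $\perp$ under orthogonal sums, and the non-degeneracy axiom of quasi-symplectic modules, none of which requires projectivity or freeness of the ambient module. The only point worth stating carefully is that $\Omega_1 \oplus \Omega_2$ and $\Omega \oplus -\Omega$ are again quasi-symplectic forms on the respective direct sums, which is an immediate check from the axioms.
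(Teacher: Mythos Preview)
Your proof is correct and matches the paper's approach exactly: the paper dismisses part 1 as obvious (you spell out the direct sum of Lagrangians, which is the intended argument), and for part 2 it takes the diagonal, verifies $L^\perp = L$ via nondegeneracy, and exhibits $P \oplus 0$ as a complement, precisely as you do.
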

    \begin{proof}
    1. is obvious. To prove 2., let $L$ be the diagonal submodule of $P \oplus P$, defined by $L = \{ (p, p) \mid p \in P \}$. We claim that $L^\perp = L$. Indeed, for any $p, q \in P$, we have
    \begin{equation}
    (\Omega \oplus -\Omega)\big((p, p), (q, q)\big) = \Omega(p, q) - \Omega(p, q) = 0.
    \end{equation}    Now, let $(a, b) \in P \oplus P$ be in $L^\perp$. This means
    \begin{equation}
    \Omega(a, p) - \Omega(b, p) = 0 \quad \text{for all } p \in P,
    \end{equation}
    so $\Omega(a-b, p) = 0$ for all $p \in P$. By nondegeneracy, $a = b$, so $(a, b) \in L$. We showed that $L^\perp = L$. Pairs $(p,0) \in P \ominus P$ form a complementary submodule, so $L$ is a~direct summand.
    \end{proof}
   \begin{lem} \label{lem: cancellative} 
        Let $(P, \Omega)$ and $(P', \Omega')$ be quasi-symplectic modules such that $L\subset P$ and $K\subset P\oplus P'$ are Lagrangian respectively. Let
        \begin{equation}
            L'_0:=\{y\in P': \text{ there exists } x\in L \text{ with } (x,y)\in K\}, \qquad \qquad L' = L_0'^{\perp \perp}.
        \end{equation}
        Then $L'$ is Lagrangian in $P'$.
    \end{lem}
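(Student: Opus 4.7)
The plan is to prove $L' = L'^\perp$ directly. Since $L' = L_0'^{\perp\perp}$, part 4 of Proposition \ref{prop: quasi-sym} gives $L'^\perp = L_0'^{\perp\perp\perp} = L_0'^{\perp}$, so it suffices to establish the two inclusions $L_0'^{\perp\perp} \subset L_0'^\perp$ (isotropy of $L'$) and $L_0'^\perp \subset L_0'^{\perp\perp}$ (coisotropy of $L'$).

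For isotropy, I would first check $L_0' \subset L_0'^\perp$ by an elementwise calculation: given $y_1,y_2 \in L_0'$ with witnesses $x_1,x_2 \in L$ satisfying $(x_i,y_i) \in K$, the fact that $K$ is isotropic in $P\oplus P'$ with form $\Omega\oplus(-\Omega')$ yields $\Omega(x_1,x_2) = \Omega'(y_1,y_2)$, and then isotropy of $L$ forces both sides to vanish. Applying $\perp$ reverses inclusions, so $L_0'^{\perp\perp} \subset L_0'^\perp$, completing this half.

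The main obstacle is the opposite inclusion, since the definition of $L_0'$ is existential and cannot be attacked by elementwise pairing alone. The plan is to translate it into an orthogonality statement inside $\tilde P := P\oplus P'$ by means of a symplectic-reduction argument along the coisotropic submodule $W := L\oplus P'$, whose orthogonal complement in $\tilde P$ is $W^\perp = L \oplus 0$. Setting $K_L := K \cap W$, its image under the projection to $P'$ is precisely $L_0'$ by definition. Using $K^\perp = K$ and the formula $(A+B)^\perp = A^\perp \cap B^\perp$, I would compute
\begin{equation}
    (K + L \oplus 0)^\perp = K \cap (L\oplus P') = K_L,
\end{equation}
whence $K_L^\perp = (K + L\oplus 0)^{\perp\perp}$.

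With this identity in hand, the conclusion should follow cleanly. For any $y' \in L_0'^\perp$, the pair $(0,y') \in \tilde P$ is orthogonal to every $(x_0,y_0) \in K_L$, because their $\tilde\Omega$-pairing equals $-\Omega'(y',y_0)$ which vanishes by construction. Hence $(0,y') \in K_L^\perp = (K + L\oplus 0)^{\perp\perp}$, and by part 3 of Proposition \ref{prop: quasi-sym} this is the saturation of $K + L\oplus 0$, so there exists a regular $r \in R$ with $r(0,y') = (x_0,y_0) + (l,0)$ for some $(x_0,y_0) \in K$ and $l \in L$. Reading off the first component gives $x_0 = -l \in L$, and the second gives $y_0 = ry'$, so $(x_0, ry') \in K$ witnesses $ry' \in L_0'$. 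A final application of part 3 of Proposition \ref{prop: quasi-sym}, now in $P'$, places $y' \in L_0'^{\perp\perp}$, finishing the proof.
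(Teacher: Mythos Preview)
Your proof is correct and follows essentially the same route as the paper's: show $L_0'\subset L_0'^{\perp}$ by the elementwise calculation, then for $y\in L_0'^{\perp}$ establish $(0,y)\in (K\cap(L\oplus P'))^{\perp}=(K+(L\oplus 0))^{\perp\perp}$ and invoke part~3 of Proposition~\ref{prop: quasi-sym} to produce a regular $r$ with $ry\in L_0'$. The only cosmetic difference is that you work with the form $\Omega\oplus(-\Omega')$ on $P\oplus P'$ whereas the paper uses $\Omega\oplus\Omega'$; the argument is insensitive to this sign.
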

\begin{proof} 
    Given $y_1, y_2\in L_0'$, let $x_1, x_2\in L$ be such that $(x_1, y_1),(x_2, y_2)\in K$. Then 
    \begin{equation}
     \Omega'(y_1, y_2)=(\Omega \oplus \Omega')((x_1,y_1), (x_2, y_2))- \Omega(x_1, x_2)=0.   
    \end{equation}
    Hence, $L_0'\subset L_0'^\perp$, which implies 
    \begin{equation}
    L'=L_0'^{\perp \perp}\subset L_0'^{\perp \perp\perp}=L'^\perp.
    \end{equation}

Conversely, if $y \in L'^{\perp} = L_0'^{\perp}$, then by definition of $L'_0$,
\begin{equation}
  (0,y)\in (K\cap(L\oplus P'))^{\perp} = (K + (L \oplus 0))^{\perp \perp}.
\end{equation}
By 3. in Proposition \ref{prop: quasi-sym}, there exist a regular element $r \in R$ and $(a,x) \in K$, $l \in L$ such that
\begin{equation}
    (0,ry) = (a,x) + (l,0).
\end{equation}
Comparing components, we get that $a=-l \in L$, and hence $ry=x \in L_0'$. Thus $y \in L'$.
\end{proof}

    \begin{prop}
        The Witt equivalence is an equivalence relation.
    \end{prop}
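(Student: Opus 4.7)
The plan is to verify the three axioms of an equivalence relation in turn, using Lemmas \ref{lem: inverse_form} and \ref{lem: cancellative}.

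Reflexivity is immediate: $(P,\Omega) \sim (P,\Omega)$ asks for a Lagrangian submodule of $P \ominus P$, which is produced by part 2 of Lemma \ref{lem: inverse_form} (the diagonal). For symmetry, suppose $L$ is Lagrangian in $P \ominus P' = (P \oplus P', \Omega \oplus (-\Omega'))$. The factor swap $\sigma: P \oplus P' \to P' \oplus P$, $(p,p') \mapsto (p',p)$, sends $L$ bijectively to a submodule $\sigma(L)$, and the form pushed forward along $\sigma$ equals the \emph{opposite} of the sesquilinear form on $P' \ominus P$. Since the condition $M = M^\perp$ is manifestly invariant under replacing the form by its negative, $\sigma(L)$ is Lagrangian in $P' \ominus P$, so $(P',\Omega') \sim (P,\Omega)$.

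For transitivity, let $L_1 \subset P \ominus P'$ and $L_2 \subset P' \ominus P''$ be Lagrangian. Their direct sum $L_1 \oplus L_2$ is Lagrangian in the four-factor orthogonal sum $(P \ominus P') \oplus (P' \ominus P'')$. I would then reorder the factors so that the two copies of $P'$ are adjacent: this identifies the underlying module with $(P' \oplus P') \oplus (P \oplus P'')$ carrying the form $(-\Omega') \oplus \Omega' \oplus \Omega \oplus (-\Omega'')$, which differs from the $\ominus$-form on $(P' \ominus P') \oplus (P \ominus P'')$ only by a sign on the first summand, harmless for the Lagrangian property. Call $K$ the image of $L_1 \oplus L_2$ under this reordering; it is Lagrangian in $(P' \ominus P') \oplus (P \ominus P'')$. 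By part 2 of Lemma \ref{lem: inverse_form}, the diagonal $\Delta \subset P' \ominus P'$ is Lagrangian. Applying Lemma \ref{lem: cancellative} with the input Lagrangian $\Delta$ (playing the role of "$L \subset P$" in the lemma, with $P' \ominus P'$ playing the role of "$P$") and $K$ (playing the role of "$K \subset P \oplus P'$", with $P \ominus P''$ playing the role of "$P'$") produces a Lagrangian submodule of $P \ominus P''$, establishing $(P,\Omega) \sim (P'',\Omega'')$.

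The heavy lifting is already packaged into Lemma \ref{lem: cancellative}; the main obstacle in this argument is therefore not a deep technical step but rather careful bookkeeping of signs and permutations of orthogonal summands. I would state explicitly once and for all that since $M = M^\perp$ does not depend on the sign of the pairing, one may pass freely between $\ominus$-sums and their sign-reversals during the rearrangement, which legitimizes the final invocation of Lemma \ref{lem: cancellative}.
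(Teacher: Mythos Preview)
Your proof is correct and follows essentially the same route as the paper: reflexivity from Lemma~\ref{lem: inverse_form}, symmetry from the evident swap, and transitivity by taking the direct sum of the two Lagrangians inside the four-factor module, then invoking Lemma~\ref{lem: inverse_form} on the $P'\ominus P'$ block and cancelling it via Lemma~\ref{lem: cancellative}. The only difference is that you spell out the factor-reordering and sign-invariance bookkeeping explicitly, whereas the paper leaves it implicit.
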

        \begin{proof}
    Reflexivity holds because of Lemma~\ref{lem: inverse_form}. Symmetry is immediate. For transitivity, suppose $(P, \Omega) \sim (P', \Omega')$ via a Lagrangian $L \subset P \ominus P'$, and $(P', \Omega') \sim (P'', \Omega'')$ via a Lagrangian $L' \subset P' \ominus P''$. Then $L \oplus L'$ is Lagrangian in $P \ominus P'\oplus P' \ominus P''$. By Lemma~\ref{lem: inverse_form}, $(P' \oplus P', -\Omega' \oplus \Omega')$ is metabolic, and Lemma~\ref{lem: cancellative} implies that $P \ominus P''$ is metabolic. Thus, $(P, \Omega) \sim (P'', \Omega'')$.
    \end{proof}

        \begin{defn}
    Let $\mathcal{W}$ denote the set of isomorphism classes of quasi-symplectic modules over $R$. This set forms a commutative monoid under orthogonal sum, with the metabolic modules forming a submonoid.

    The Witt group $W(R)$ of quasi-symplectic modules is defined as the set of Witt equivalence classes $\mathcal{W}/\sim$. This name is justified below.
    \end{defn}

\begin{prop}
    The set $W(R)$ is a group. 
\end{prop}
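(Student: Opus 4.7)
The plan is to verify the four group axioms (well-definedness of the induced operation, associativity, identity, inverses) using the two technical lemmas already proved in the preceding subsection. The commutative monoid structure on $\mathcal{W}$ under orthogonal sum is transparent, so the real content is that Witt equivalence is compatible with $\oplus$ and that inverses exist.

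First I would show that orthogonal sum descends to a well-defined binary operation on $\mathcal{W}/\sim$. Suppose $(P_1,\Omega_1) \sim (P_1',\Omega_1')$ witnessed by a Lagrangian $L_1 \subset P_1 \ominus P_1'$, and similarly $(P_2,\Omega_2)\sim (P_2',\Omega_2')$ witnessed by $L_2 \subset P_2 \ominus P_2'$. Under the obvious reshuffling isomorphism
\begin{equation}
(P_1 \oplus P_2) \ominus (P_1' \oplus P_2') \;\cong\; (P_1 \ominus P_1') \oplus (P_2 \ominus P_2'),
\end{equation}
the submodule $L_1 \oplus L_2$ is Lagrangian, so $(P_1 \oplus P_2, \Omega_1 \oplus \Omega_2)\sim (P_1' \oplus P_2', \Omega_1' \oplus \Omega_2')$. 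Associativity and commutativity of the induced operation are immediate from the corresponding properties of $\oplus$ on the level of forms.

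Next, the identity: the zero module (trivially Lagrangian in itself) represents a neutral element, since $(P,\Omega) \oplus 0 \cong (P,\Omega)$ as quasi-symplectic modules. Equivalently, by Lemma~\ref{lem: inverse_form}.1 any metabolic module represents the trivial class. Finally, for inverses, I would invoke Lemma~\ref{lem: inverse_form}.2, which asserts that $P \ominus P$ admits the diagonal as a Lagrangian (even a Lagrangian direct summand). This says precisely that $[P,\Omega] + [P,-\Omega] = 0$ in $W(R)$, so $[P,-\Omega]$ is an inverse for $[P,\Omega]$.

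There is no substantive obstacle: the nontrivial ingredients (transitivity of $\sim$ via Lemma~\ref{lem: cancellative}, and the existence of a Lagrangian in $P\ominus P$) have already been established, and the remaining content is bookkeeping with the reshuffling isomorphism. If anything, the only point requiring mild care is checking that the reshuffling of summands intertwines $\Omega_1 \oplus (-\Omega_1') \oplus \Omega_2 \oplus (-\Omega_2')$ with $(\Omega_1 \oplus \Omega_2) \oplus -(\Omega_1' \oplus \Omega_2')$, which is a direct term-by-term verification on pairs of elements.
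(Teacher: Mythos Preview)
Your proof is correct and follows essentially the same approach as the paper: verify that orthogonal sum respects Witt equivalence (so the monoid structure descends to $\mathcal{W}/{\sim}$), then invoke Lemma~\ref{lem: inverse_form} to produce $(P,-\Omega)$ as the inverse of $(P,\Omega)$. The paper's version is simply more terse, omitting the explicit reshuffling verification you spell out.
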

\begin{proof}
    The quotient $\mathcal{W} /\sim$ carries an induced monoid structure because 
    \begin{equation*}
     (P, \Omega)\sim (P', \Omega') \text{ and }  (M, \Theta)\sim (M', \Theta') \implies (P\oplus M, \Omega\oplus \Theta)\sim(P'\oplus M', \Omega'\oplus \Theta').
    \end{equation*}
    Moreover, $W(R)$ is a group because every $(P, \Omega)$ has an inverse $(P, -\Omega)$.
\end{proof}

Recall that a quasi-symplectic module $(P, \Omega)$ is called symplectic if $E_P$ is trivial. Clearly, orthogonal sum of two symplectic modules is symplectic, and the opposite of a symplectic module is symplectic. Therefore, Witt equivalence classes of symplectic modules form a subgroup of $W(R)$.

\begin{defn}
 We let $W^s(R)$ be the subgroup of $W(R)$ consisting of Witt equivalence classes of symplectic modules.   
\end{defn}

   \begin{lem} \label{lem:reduction_Witt_equivalent}
Let $(P, \Omega)$ be a quasi-symplectic module and $N \subset P$ a submodule satisfying $N^\perp \subset N$. Then $N$ embeds as a Lagrangian submodule in $P \ominus N/ N^{\perp}$. In~particular, $P$ and $N / N^{\perp}$ represent the same elements in the Witt group $W(R)$.

If $P$ is symplectic, $N = N^{\perp \perp}$, and every homomorphism $N \to R$ admits an extension to $P$, then $N/N^{\perp}$ is symplectic. Hence $P$ and $N/N^{\perp}$ are equal in $W^s(R)$.
\end{lem}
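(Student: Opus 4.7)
First I would record that $\Omega$ descends to a well-defined sesquilinear form $\bar\Omega$ on $N/N^{\perp}$: changing a representative of $[m]$ by an element of $N^{\perp}$ does not affect pairings with any $n \in N$, and the same on the other slot. The resulting form makes $(N/N^{\perp},\bar\Omega)$ into a quasi-symplectic module---vanishing of the constant term of $\bar\Omega([n],[n])$ is inherited from $\Omega$, and nondegeneracy is exactly the content of quotienting by $N^{\perp}$. Finite generation is automatic since $R$ is Noetherian.

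For the first assertion, the plan is to exhibit a concrete Lagrangian in $P \ominus (N/N^{\perp})$, namely the image of the graph-type map $\Delta : N \to P \oplus (N/N^{\perp})$ defined by $n \mapsto (n, [n])$. This map is injective (project to the first coordinate) and its image is isotropic, since
\begin{equation}
(\Omega \oplus -\bar\Omega)\bigl((n,[n]),(n',[n'])\bigr) = \Omega(n,n') - \bar\Omega([n],[n']) = 0.
\end{equation}
The key step is the reverse inclusion $\Delta(N)^{\perp} \subset \Delta(N)$: given $(p, \bar m)$ perpendicular to every $\Delta(n)$, choose any lift $m \in N$ of $\bar m$; then $\Omega(p-m, n) = 0$ for all $n \in N$, so $p - m \in N^{\perp}$. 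The hypothesis $N^{\perp} \subset N$ forces $p \in N$ and $\bar m = [m] = [p]$, hence $(p,\bar m) = \Delta(p)$. Metabolicity of $P \ominus (N/N^{\perp})$ is immediate, which gives $[P] = [N/N^{\perp}]$ in $W(R)$.

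For the second assertion, I need to establish that $N/N^{\perp}$ is symplectic, i.e.\ that the map $[n] \mapsto \bar\Omega(\cdot, [n])$ from $N/N^{\perp}$ to $(N/N^{\perp})^{*}$ is surjective (injectivity already follows from nondegeneracy of $\bar\Omega$). Given $\bar\phi \in (N/N^{\perp})^{*}$, let $\phi \in N^{*}$ be its pullback along the quotient $N \to N/N^{\perp}$. By hypothesis $\phi$ extends to some $\tilde\phi \in P^{*}$; since $P$ is symplectic, $\tilde\phi = \Omega(\cdot, p)$ for some $p \in P$. Because $\phi$ vanishes on $N^{\perp}$ by construction, $\Omega(n_0, p) = 0$ for every $n_0 \in N^{\perp}$, so $p \in (N^{\perp})^{\perp} = N^{\perp\perp} = N$. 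Hence $\bar\phi([m]) = \phi(m) = \Omega(m,p) = \bar\Omega([m],[p])$, as required. The final sentence of the lemma follows because $W^{s}(R)$ is a subgroup of $W(R)$.

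The main obstacle---really the entire content of the second part---is ensuring that the element $p \in P$ produced from the symplecticity of $P$ actually lies in $N$, so that its class in $N/N^{\perp}$ makes sense. This is precisely where all three of the hypotheses ($P$ symplectic, $N = N^{\perp\perp}$, and extendability of homomorphisms from $N$ to $P$) enter simultaneously; removing any one of them breaks the identification.
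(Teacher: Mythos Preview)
Your proof is correct and follows essentially the same approach as the paper: the graph map $n \mapsto (n,[n])$ is exactly the embedding the paper uses, and your argument for the second part---pulling back $\bar\phi$ to $N^*$, extending to $P^*$, representing via $\Omega(\cdot,p)$, and then using $N = N^{\perp\perp}$ to locate $p$ in $N$---matches the paper's reasoning step for step. Your write-up is in fact more detailed than the paper's, which defers the Lagrangian verification to an earlier lemma and compresses the symplectic check into two sentences.
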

\begin{proof}
Define the map $i: N \to P \oplus N / N^{\perp }$ by $x \mapsto (x, [x])$. Verification that the image of $i$ is Lagrangian is analogous to the proof of Lemma \ref{lem: inverse_form}.

We identify $(N/N^{\perp})^*$ with elements of $N^*$ vanishing on $N^{\perp}$. By the assumptions, we have surjections $P \to P^* \to N^*$, and an element $p \in P$ is mapped to $(N/N^{\perp})^*$ if and only if $p \in N^{\perp \perp} = N$. Hence the symplectic form defines an epimorphism $N \to (N/N^{\perp})^*$ with kernel $N^\perp$, i.e.\ an isomorphism $N/N^{\perp} \to (N/N^{\perp})^*$.
\end{proof}

The Chinese remainder theorem implies that
\begin{equation}
W(\mathbb{Z}_n[x_1^{\pm 1}, \dots, x_d^{\pm 1}])\cong \bigoplus_k W(\mathbb{Z}_{p_i^{r_i}}[x_1^{\pm 1}, \dots, x_d^{\pm 1}]),
\end{equation}
where $n = \prod_{i=1}^k p_i^{r_i}$ is the prime factorization of $n$: $p_i$ are distinct primes and $r_i \geq 1$. This reduces the theory to the case of $n$ being a prime power, $n=p^r$ for some prime $p$ and $r \geq 1$. The next result allows to reduce to the case $r=1$.

\begin{thm} \label{thm:reduction_to_prime}
Any quasi-symplectic module $(P, \Omega)$ is equivalent in the Witt group to a quasi-symplectic module $(P', \Omega')$ with $pP'=0$. 

\end{thm}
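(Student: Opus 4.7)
\textbf{Plan}: The strategy is a devissage that in $r-1$ steps turns $(P, \Omega)$ into a Witt-equivalent module annihilated by $p$. By the preceding discussion we are already in the prime-power case $n = p^r$. I would track at each stage the pair $(p^k R,\, p^{r-k})$: the submodule of $R$ in which the form values live and the annihilator of the underlying module. Starting from $(P^{(0)}, \Omega^{(0)}) = (P, \Omega)$ with values in $R$ and annihilator $p^r$, the goal is to construct a sequence $(P^{(k)}, \Omega^{(k)})$ for $k = 0, \ldots, r-1$ with $\Omega^{(k)}$-values in $p^k R$ and $p^{r-k} P^{(k)} = 0$, each Witt-equivalent to $P$. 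The final output $P' := P^{(r-1)}$ is then $p$-torsion.

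\textbf{Inductive step} ($k \leq r-2$): The natural coisotropic submodule is
\[
N := \{\, x \in P^{(k)} : \Omega^{(k)}(x,y) \in p^{k+1} R \text{ for all } y \in P^{(k)} \,\} = (p^{r-k-1} P^{(k)})^{\perp},
\]
where the second equality uses $\mathrm{Ann}_R(p^{r-k-1}) = p^{k+1}R$. I would set $P^{(k+1)} := N / N^{\perp}$, so Lemma \ref{lem:reduction_Witt_equivalent} yields $P^{(k)} \sim P^{(k+1)}$ in $W(R)$ provided $N^{\perp} \subset N$. The form on $N/N^{\perp}$ inherits values in $p^{k+1}R$ by the definition of $N$, and the bound $p^{r-k-1} N = 0$ (which gives $p^{r-k-1} P^{(k+1)} = 0$) follows from a direct computation: for $x \in N$ one has $\Omega^{(k)}(p^{r-k-1} x, y) \in p^r R = 0$ for every $y$, so nondegeneracy forces $p^{r-k-1} x = 0$.

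\textbf{Main obstacle}: verifying the coisotropy $N^{\perp} \subset N$. By Proposition \ref{prop: quasi-sym}(3), $N^{\perp} = (p^{r-k-1} P^{(k)})^{\perp \perp}$ equals the saturation of $p^{r-k-1}P^{(k)}$. For $z$ in this saturation there exist a regular $\rho \in R$ and $w \in P^{(k)}$ with $\rho z = p^{r-k-1} w$, whence $\overline{\rho}\, \Omega^{(k)}(z, y) = p^{r-k-1} \Omega^{(k)}(w, y)$. Combined with the inductive bound $\Omega^{(k)}(w, y) \in p^k R$, this gives $\overline{\rho}\, \Omega^{(k)}(z, y) \in p^{r-1} R = \mathrm{Ann}_R(p)$. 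Regularity of $\overline{\rho}$ then forces $p\, \Omega^{(k)}(z, y) = 0$, so $\Omega^{(k)}(z, y) \in p^{r-1} R \subset p^{k+1} R$, where the last inclusion uses $k \leq r - 2$ and is exactly what causes the iteration to halt at $k = r-1$, just as $P^{(k)}$ becomes $p$-torsion.
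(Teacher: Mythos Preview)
Your proof is correct and follows essentially the same devissage as the paper: both repeatedly apply Lemma~\ref{lem:reduction_Witt_equivalent} to a coisotropic of the form $(p^{j}P)^{\perp}$ to lower the $p$-exponent annihilating the module. The paper's version is slightly slicker in two respects: it works directly with the isotropic submodule $N=p^{t}P$ (whose isotropy is immediate from $2t\geq s$) rather than its perp, avoiding your saturation/regularity argument for coisotropy; and it takes $t=\lceil s/2\rceil$, halving the exponent at each step rather than decrementing by one, so it does not need to track the auxiliary filtration $\Omega^{(k)}(P^{(k)},P^{(k)})\subset p^{k}R$.
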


\begin{proof}
    Let $s$ be the smallest positive integer such that $p^sP=0$. If $s=1$, the~theorem's condition is met. If $s>1$, set $t = \lceil s/2 \rceil$; note that $0 < t < s$. The~submodule $N = p^t P$ is isotropic. Consequently, by Lemma \ref{lem:reduction_Witt_equivalent}, $P$ is Witt equivalent to $P_1 = N^\perp / N^{\perp\perp}$. Furthermore, this module $P_1$ is annihilated by $p^t$. Since $t < s$, this step reduces the exponent of $p$ annihilating the module. Repeating this procedure eventually yields a quasi-symplectic module $P'$ Witt equivalent to $P$ and satisfying $pP'=0$ (possibly $P'=0$).
\end{proof}

\begin{cor} \label{cor:W_reduction_iso}
    There is an isomorphism 
    \begin{equation}
    W(\mathbb{F}_p[x_1^{\pm 1}, \dots, x_d^{\pm 1}])\rightarrow W(\ZZ_{p^r}[x_1^{\pm 1}, \dots, x_d^{\pm 1}]), \qquad (P, \Omega)\mapsto(P, [p^{r-1}]\Omega).
    \label{eq:W_iso_prime}
    \end{equation}
    On the right-hand side, $P$ is regarded as a~$\ZZ_{p^r}[x_1^{\pm 1}, \dots, x_d^{\pm 1}]$-module via the quotient map $\ZZ_{p^r}[x_1^{\pm 1}, \dots, x_d^{\pm 1}] \to \mathbb{F}_p[x_1^{\pm 1}, \dots, x_d^{\pm 1}]$. Its quasi-symplectic form $\Omega$ is turned into one valued in $\ZZ_{p^r}[x_1^{\pm 1}, \dots, x_d^{\pm 1}]$ using the inclusion
    \begin{equation}
    [p^{r-1}]:\mathbb{F}_p[x_1^{\pm 1}, \dots, x_d^{\pm 1}] \to \ZZ_{p^r}[x_1^{\pm 1}, \dots, x_d^{\pm 1}], \qquad f \mapsto p^{r-1} f.
    \end{equation}
\end{cor}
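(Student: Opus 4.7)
The strategy is to check that the assignment in \eqref{eq:W_iso_prime} defines a well-defined group homomorphism, and then verify injectivity and surjectivity, with the latter following immediately from Theorem \ref{thm:reduction_to_prime}. Throughout I would use two elementary facts: $[p^{r-1}]$ is injective as a map of abelian groups, and $[p^{r-1}](gf) = g \cdot [p^{r-1}](f)$ for any $g \in \mathbb Z_{p^r}[x_1^{\pm 1}, \dots, x_d^{\pm 1}]$ and $f \in \mathbb F_p[x_1^{\pm 1}, \dots, x_d^{\pm 1}]$, where on the right $g$ acts through the quotient map. This identity holds because $g - (g \bmod p)$ lies in $p \cdot \mathbb Z_{p^r}[x_1^{\pm 1}, \dots, x_d^{\pm 1}]$, which is annihilated by $p^{r-1}$.

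First I would verify that $(P, [p^{r-1}]\Omega)$ is genuinely a quasi-symplectic $\mathbb Z_{p^r}[\dots]$-module. The identity above yields sesquilinearity over $\mathbb Z_{p^r}[\dots]$; the vanishing of the constant term of $\Omega(p,p)$ is preserved by $[p^{r-1}]$; and non-degeneracy transfers through the injectivity of $[p^{r-1}]$. Additivity of the map and compatibility with orthogonal sums are immediate. To see that Witt equivalence is respected, I would observe that for any submodule $L \subset P$ the orthogonal complement $L^\perp$ computed with respect to $\Omega$ coincides with the one computed with respect to $[p^{r-1}]\Omega$, again by injectivity of $[p^{r-1}]$. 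Consequently, any Lagrangian submodule over $\mathbb F_p[\dots]$ is automatically Lagrangian over $\mathbb Z_{p^r}[\dots]$, and the construction descends to Witt groups.

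For injectivity, suppose $(P, [p^{r-1}]\Omega)$ is metabolic, with $L \subset P$ satisfying $L = L^\perp$. Since $pP = 0$, the submodule $L$ is automatically an $\mathbb F_p[\dots]$-submodule, and the preceding coincidence of orthogonal complements shows that $L$ is Lagrangian with respect to $\Omega$ as well, so $(P, \Omega) = 0$ in the source Witt group. For surjectivity, Theorem \ref{thm:reduction_to_prime} lets me represent any class in $W(\mathbb Z_{p^r}[\dots])$ by some $(P', \Omega')$ with $pP' = 0$. For any $v, w \in P'$, sesquilinearity gives $p \cdot \Omega'(v,w) = \Omega'(v, p w) = 0$, so $\Omega'$ is valued in the $p$-torsion subgroup of $\mathbb Z_{p^r}[\dots]$, which equals the image of $[p^{r-1}]$. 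Dividing by $p^{r-1}$ produces a unique $\mathbb F_p[\dots]$-valued quasi-symplectic form $\tilde \Omega'$ on $P'$ whose image under \eqref{eq:W_iso_prime} is $(P', \Omega')$.

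The only subtle point is the sesquilinearity verification in the first step, because $[p^{r-1}]$ is not a ring homomorphism; once the identity $[p^{r-1}](gf) = g \cdot [p^{r-1}](f)$ is in place, the remaining verifications are routine and are glued together by Theorem \ref{thm:reduction_to_prime}.
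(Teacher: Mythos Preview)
Your proof is correct and follows essentially the same approach as the paper's: injectivity via the observation that Lagrangians for $\Omega$ and $[p^{r-1}]\Omega$ coincide, and surjectivity via Theorem~\ref{thm:reduction_to_prime} together with the fact that $p\Omega'=0$ forces $\Omega'$ to take values in $(p^{r-1})$. You supply more detail on well-definedness and sesquilinearity than the paper does, but the argument is the same.
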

\begin{proof}
    A quasi-symplectic $\mathbb{F}_p[x_1^{\pm 1}, \dots, x_d^{\pm 1}]$-module $(P, \Omega)$ is metabolic if and only if $(P, [p^{r-1}]\Omega)$ is metabolic. Thus the map is injective. As for surjectivity, given a~quasi-symplectic $\ZZ_{p^r}[x_1^{\pm 1}, \dots, x_d^{\pm 1}]$-module $(P', \Omega')$, one can assume that $pP'=0$ by \ref{thm:reduction_to_prime}. Then for any $a, b\in P'$ one has
    \begin{equation}
     p\Omega'(a, b)=0, \text{  hence  }   \Omega'(a, b)\in (p^{r-1})\subset R.
    \end{equation}
    Hence $(P,\Omega)$ is in the image of \eqref{eq:W_iso_prime}.
\end{proof}

This reduction to modules over $\mathbb{F}_p[x_1^{\pm 1}, \dots, x_d^{\pm 1}]$ is particularly advantageous in one dimension ($d=1$). Indeed, the ring $\mathbb{F}_p[x^{\pm 1}]$ is a principal ideal domain, over which all finitely generated torsion-free modules are free. The classification of quasi-symplectic modules is a problem in matrix algebra over $\mathbb{F}_p[x^{\pm 1}]$, and has been analyzed in \cite{haah2025topological}. We record an important result below, before dedicating the subsequent subsection to the one-dimensional case.

\begin{lem} \label{lem:surjective_res}
    Let $d=1$. Suppose that $(P, \Omega)$ is a quasi-symplectic module, and $M \subset P$ is a submodule satisfying $M=M^{\perp \perp}$. The restriction homomorphism $P^* \to M^*$ is surjective.
\end{lem}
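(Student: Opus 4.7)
The plan is to recast the surjectivity of $P^* \to M^*$ as the vanishing of an Ext group, and then deduce that vanishing from the Gorenstein Ext bounds in dimension one. Applying the contravariant functor $(-)^{*} = \Hom_{R}(\overline{-},R)$ to the short exact sequence $0 \to M \to P \to P/M \to 0$ (overlining preserves exactness, as it only twists the $R$-action) yields the long exact sequence
\[
0 \to (P/M)^{*} \to P^{*} \to M^{*} \to \Ext^{1}_{R}(\overline{P/M},R) \to \cdots,
\]
so it suffices to establish $\Ext^{1}_{R}(\overline{P/M},R)=0$.

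Next, I would verify that $P/M$ is finitely generated and torsion-free. Finite generation is immediate since $R$ is Noetherian and $P$ is a finitely generated $R$-module by the definition of quasi-symplectic. Torsion-freeness is where the hypothesis $M=M^{\perp\perp}$ enters: by part 3 of Proposition \ref{prop: quasi-sym}, $M^{\perp\perp}$ is the saturation of $M$ in $P$, so $M$ equals its own saturation. Consequently, if $[p]\in P/M$ satisfies $r[p]=0$ for some regular $r\in R$, then $rp\in M$ forces $p\in M$, i.e.\ $[p]=0$. Overlining does not affect either finite generation or torsion-freeness, so $\overline{P/M}$ inherits both properties.

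Finally, I would invoke Proposition \ref{prop:Ext_bounds}: for any finitely generated torsion-free $R$-module $N$ one has $\Ext^{d}_{R}(N,R)=0$. Specializing to $d=1$ and $N=\overline{P/M}$ gives $\Ext^{1}_{R}(\overline{P/M},R)=0$, which combined with the long exact sequence above yields the claim.

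There is no serious obstacle. The only real content is recognizing that the condition $M=M^{\perp\perp}$ is exactly the saturation condition needed to guarantee torsion-freeness of the quotient, after which the one-dimensional Gorenstein vanishing takes over. Notably, this argument proceeds uniformly over $R=\mathbb{Z}_{n}[x^{\pm 1}]$ for arbitrary $n$, and does not require the passage to $\mathbb{F}_{p}[x^{\pm 1}]$ via Theorem \ref{thm:reduction_to_prime} or the PID structure mentioned in the surrounding discussion.
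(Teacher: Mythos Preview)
Your proposal is correct and follows essentially the same route as the paper: use Proposition~\ref{prop: quasi-sym}(3) to deduce that $P/M$ is torsion-free from $M=M^{\perp\perp}$, then invoke the $d=1$ case of Proposition~\ref{prop:Ext_bounds} to get $\Ext^{1}_{R}(\overline{P/M},R)=0$, hence the dualized short exact sequence stays exact. Your write-up is just more explicit about the long exact sequence and the Noetherian finite-generation step than the paper's two-line version.
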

\begin{proof}
$P/M$ is torsion-free by 3. in Proposition \ref{prop: quasi-sym}, so $\Ext^1(\overline{P/M},R)$ is trivial (see~Proposition~\ref{prop:Ext_bounds}). Therefore, the short exact sequence $0 \to M \to P \to P/M \to 0$ remains exact upon applying $*$. 
\end{proof}

\begin{prop} \label{prop:dim1_Witt_Haah}
\begin{enumerate}
    \item We have $W^s(\ZZ_{p^r}[x^{\pm 1}]) = W^s(\mathbb{F}_p[x^{\pm 1}]) =0$. That is, every symplectic module is metabolic in dimension one.
    \item Every quasi-symplectic module over $\mathbb{Z}_{p^r}[x^{\pm 1}]$ becomes metabolic upon coarse-graining. In other words, $P$ admits a Lagrangian subgroup which is invariant under a finite index subgroup of $\ZZ$.
\end{enumerate}
\end{prop}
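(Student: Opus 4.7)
Plan: For Part 1 the strategy is to reduce to a PID base case via a filtration by powers of $p$. Over $R_0 := \mathbb{F}_p[x^{\pm 1}]$, a symplectic module $(P, \Omega)$ has $P$ finitely generated and torsion-free by item~1 of Proposition~\ref{prop: quasi-sym}, hence free over this PID; the classification of unimodular antihermitian forms on free modules over $R_0$ in~\cite{haah2025topological} gives a hyperbolic decomposition and hence a Lagrangian direct summand. For $R_r := \ZZ_{p^r}[x^{\pm 1}]$ with $r \geq 2$ I induct on $r$. Let $k := \lceil r/2 \rceil$ and set $N := p^k P$; then $N$ is isotropic since $\Omega(p^k a, p^k b) = p^{2k}\Omega(a,b) = 0$ once $2k \geq r$. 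Using the symplectic isomorphism $\Omega : P \to P^*$, one identifies $N^\perp$ with $\{\phi \in P^* : \phi(P) \subset p^{r-k} R_r\}$, and this set equals $p^{r-k} P^*$: this follows from the long exact sequence of $0 \to p^k R_r \to R_r \xrightarrow{\cdot p^{r-k}} p^{r-k} R_r \to 0$ together with the vanishing $\Ext^1(\overline{P}, p^k R_r) = 0$, which is a consequence of $\Ext^1(\overline{P}, R_r) = 0$ for torsion-free $P$ in dimension one (item~3 of Proposition~\ref{prop:Ext_bounds}) and the dimension bound on higher $\Ext$ (item~2). Thus $N^\perp = p^{r-k}P$. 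If $r$ is even, $N^\perp = N$ and $N$ is directly a Lagrangian. If $r$ is odd, set $M := N^\perp = p^{(r-1)/2}P$; then $M^{\perp\perp} = M$ and $M/M^\perp$ is annihilated by $p$, hence lives over $\mathbb{F}_p[x^{\pm 1}]$. Lemma~\ref{lem:surjective_res} supplies the hypothesis of Lemma~\ref{lem:reduction_Witt_equivalent} that every homomorphism $M \to R_r$ extends to $P$, so $(P, \Omega)$ and $(M/M^\perp, \widetilde\Omega)$ represent the same class in $W^s$; the latter is a symplectic $\mathbb{F}_p[x^{\pm 1}]$-module, hence metabolic by the base case.

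For Part 2, I apply Theorem~\ref{thm:reduction_to_prime} to replace $(P, \Omega)$ by a Witt-equivalent quasi-symplectic $(P'', \Omega'')$ with $pP'' = 0$, so that $P''$ sits over $\mathbb{F}_p[x^{\pm 1}]$. A useful preliminary is that $R_r$-Witt-equivalence is preserved under coarse-graining to any $R_\Lambda$: for an $R_r$-submodule $L$ with $L = L^\perp$ and $l \in L$, every $x^\mu l \in L$, so $\Omega_\Lambda(x, x^\mu l)$, which is the $R_\Lambda$-component of $x^\mu \Omega(x, l)$, must vanish; as $\mu \in \mathbb{Z}$ varies this forces every coefficient of $\Omega(x, l)$ to vanish, giving $L^{\perp_\Lambda} = L^\perp = L$. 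Hence it suffices to show that every quasi-symplectic module over the PID $\mathbb{F}_p[x^{\pm 1}]$ becomes metabolic after coarse-graining. Here $P''$ is free and the antihermitian form $\Omega''$ admits a normal form by~\cite{haah2025topological} whose degenerate blocks are encoded by the torsion module $E_{P''}$; choosing a sublattice $\Lambda \subset \mathbb{Z}$ adapted to the polynomials annihilating $E_{P''}$ exhibits a Lagrangian in the coarse-grained module.

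The principal obstacle will be the final step of Part 2, namely selecting the correct $\Lambda$ so that each degenerate block of Haah's normal form admits a Lagrangian after restriction of scalars. In Part 1, the delicate step is the identification $N^\perp = p^{r-k}P$, which as sketched above relies on the $\Ext$-vanishing in dimension one and thereby avoids the question of whether symplectic modules over $\ZZ_{p^r}[x^{\pm 1}]$ must be free (a question which need not have an affirmative answer in general).
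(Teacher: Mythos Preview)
Your overall strategy for Part~1 matches the paper's---reduce to $\mathbb{F}_p[x^{\pm 1}]$ via the construction $P \mapsto N^\perp/N^{\perp\perp}$ with $N = p^k P$, invoking Lemmas~\ref{lem:reduction_Witt_equivalent} and~\ref{lem:surjective_res} to preserve symplecticity---but your explicit identification $N^\perp = p^{r-k}P$ is false, and the claimed vanishing $\Ext^1(\overline P, p^k R_r)=0$ does not follow from Proposition~\ref{prop:Ext_bounds}. That proposition only controls $\Ext^i(-,R_r)$; the long exact sequences you invoke yield a $2$-periodicity $\Ext^i(P,R_k)\cong\Ext^{i+2}(P,R_k)$ rather than vanishing. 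For a concrete counterexample over $R_2=\ZZ_{p^2}[x^{\pm 1}]$, let $M=(p,\,x-1)\subset R_2$. From $0\to M\to R_2\to\mathbb{F}_p\to 0$ and Proposition~\ref{prop:Ext_bounds} one checks that $M$ is maximal Cohen--Macaulay, hence reflexive, so the hyperbolic module $P=M\oplus M^*$ is symplectic. But $M[p]=M\cap pR_2=pR_2$, whereas $pM=p(x-1)R_2\subsetneq pR_2$; thus $(pP)^\perp=P[p]\supsetneq pP$ and $N=pP$ is \emph{not} Lagrangian. Your even-$r$ case therefore fails outright, and your odd-$r$ conclusion that $M/M^\perp$ is annihilated by $p$ after a single step is likewise unjustified.

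The repair is exactly the paper's argument: do not compute $N^\perp$; simply note that $P_1:=N^\perp/N^{\perp\perp}$ is annihilated by $p^t$ with $t=\lceil s/2\rceil$ (since $p^tN^\perp\subset p^tP=N\subset N^{\perp\perp}$), and that $P_1$ is again symplectic by the second part of Lemma~\ref{lem:reduction_Witt_equivalent}, whose extension hypothesis is supplied by Lemma~\ref{lem:surjective_res}. Iterating strictly lowers the exponent and eventually reaches the $\mathbb{F}_p$ case. Your Part~2 is essentially the paper's proof (Corollary~\ref{cor:W_reduction_iso} plus the cited result of Haah over $\mathbb{F}_p[x^{\pm 1}]$) and is fine.
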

\begin{proof}
1. $W^s(\mathbb F_p[x^{\pm 1}])=0$ is proved in \cite[Lemma IV.18]{haah2023nontrivial}. To reduce to the case of prime coefficients, as in Corollary \ref{cor:W_reduction_iso} for quasi-symplectic modules, we verify the following assertion is valid for $d=1$: if $P$ is a~symplectic module and $P'$ is constructed as in the proof of Theorem \ref{thm:reduction_to_prime}, then $P'$ is also symplectic. This follows from Lemmas \ref{lem:reduction_Witt_equivalent} and \ref{lem:surjective_res}.

2. Follows from Corollary \ref{cor:W_reduction_iso} and \cite[Corollary IV.22]{haah2023nontrivial}.
\end{proof}

    \subsection{Dimension one: associated quadratic form}

    In this subsection, we specialize to the one-dimensional case, i.e. quasi-symplectic
    modules over the Laurent polynomial ring $R = \mathbb{Z}_{n}[x^{\pm 1}]$. For
    each such module $(P,\Omega)$ we will construct a nondegenerate quadratic
    form on the group $E_P=P^* / P$, which in dimension one is always finite. This quadratic form interacts with Lagrangian submodules very well: $P$ admits a~Lagrangian submodule if and only if $E_P$ admits a~submodule which is Lagrangian for the quadratic form; see Appendix \ref{app:metric_group} for the relevant background on quadratic forms. Proposition 
    \ref{prop:dim1_Witt_Haah} shows that $P$ always admits a Lagrangian submodule after sufficient coarse-graining; therefore, $E_P$ always admits a Lagrangian subgroup, in general invariant only under a finite index subgroup of the full group of translations $\mathbb Z$. Before presenting the technical details of the construction, we~offer a brief overview.

    Starting from a quasi-symplectic module $(P, \Omega)$, we will construct
    modules $P_{\rightarrow}, P_{\leftarrow}$, and $\widehat{P}$, which
    represent nonlocal operators with support possibly extending infinitely to
    the right, to the left, or in both directions, respectively. Each contains
    $P$ as a~submodule (corresponding to operators of finite support); moreover,
    $P_{\rightarrow}, P_{\leftarrow}\subset \widehat{P}$, and their intersection
    satisfies $P_{\rightarrow}\cap P_{\leftarrow}= P$.

    As we will see, there is a perfect pairing between $P_{\rightarrow}$ and
    $P_{\leftarrow}$ obtained by extending $\Omega$. Additionally, there is a
    natural pairing between $P$ and $\widehat{P}$, which is generally degenerate:
    some elements in $\widehat{P}$ are orthogonal to all elements of $P$. These elements form a~submodule of $\widehat P$ isomorphic to $E_P$. We now outline
    the construction of this isomorphism. Every homomorphism $\alpha \in P^{*}$
    can be represented, via the duality induced by $\Omega$, by elements
    $\alpha_{\rightarrow}\in P_{\rightarrow}$ and
    $\alpha_{\leftarrow}\in P_{\leftarrow}$, see~\eqref{eq:alpha_rl} below. If~$\alpha$ represents the zero class
    in $E_P$, then $\alpha_{\rightarrow}= \alpha_{\leftarrow}\in P$. Otherwise, $\alpha_{\rightarrow}- \alpha_{\leftarrow}$ is a~nonzero
    element of $\widehat{P}$ that is orthogonal to $P$.
    
    A $\mathbb{Z}_{n}$-valued bilinear form $b$ on $E_P$ is defined by
    \begin{equation}
        b(\alpha, \beta) = \alpha_0 (\beta_{\rightarrow}- \beta_{\leftarrow}) =   \omega( \alpha_\rightarrow,\beta_\leftarrow) + \omega(\beta_\rightarrow, \alpha_\leftarrow),
        \label{eq:b_def}
    \end{equation}
    where $\alpha_0$ is the constant term of $\alpha$. Here representatives $\alpha, \beta \in P^{*}$ of their respective classes
    in $E_P$ are chosen, and $\alpha$ is identified with its natural extension
    to~$\widehat P$. The bilinear form $b$ defined in this way is symmetric and
    nondegenerate. 
    
    We also introduce the function $q : E_P \to \mathbb{Z}_{n}$,
    defined by
    \begin{equation}
        q(\alpha) = \omega(\alpha_{\rightarrow}, \alpha_{\leftarrow}).
        \label{eq:q_def}
    \end{equation}
    The function $q$ is a quadratic refinement of $b$, meaning that $q(-\alpha)=q(\alpha)$ and
    \begin{equation}
        q(\alpha + \beta) - q(\alpha) - q(\beta) = b(\alpha,\beta).
        \label{eq:qb_refinement}
    \end{equation}

    \begin{remark}
        The bilinear form $b$ can be lifted to a hermitian form $E_P \times E_P \to \widehat R$, whose image lies in the torsion submodule of $\widehat R$. This reflects the fact that $E_P$ carries a residual action of $\mathbb Z$: since $E_P$ is finite, some finite index subgroup $k \mathbb Z \subset \mathbb Z$ acts trivially on $E_P$, and $b$ is invariant under the induced action of $\mathbb Z / k \mathbb Z$. Similarly, the~quadratic form $q$ can be lifted to a map $P^* \to \widehat R$, but only the constant term of this lift is a well-defined function on $E_P$. 
        
        We prefer to work with the $\mathbb Z_n$-valued forms $b$ and $q$. Our motivation is that $E_P$, $b$ and $q$ are invariant under coarse-graining, but the residual action of translations is not: upon sufficient coarse-graining one obtains $E_P$ with a~trivial $\mathbb Z$-action. 
    \end{remark}

    The study of $E_P$ and its quadratic form $q$ culminates in Theorem \ref{thm:EP_metabolic_criterion}, which allows to decide whether $P$ is metabolic in terms of $E_P$, which is easier to analyze. We~mention two main ideas behind this. 
    \begin{itemize}
        \item If $L \subset P$ is a Lagrangian submodule, the restriction homomorphism $P^* \to L^*$ induces a homomorphism $E_P \to Q_L = \frac{L^*}{P/L}$, whose kernel is a Lagrangian submodule of $E_P$. Therefore, the existence of a Lagrangian submodule in $E_P$ is necessary for the existence of a Lagrangian submodule in $P$.
        \item Consider the problem of extending $\Omega$ to a quasi-symplectic form on some module $P^T$ that lies between $P$ and $P^*$, i.e.\ such that $P \subset P^T \subset P^*$. Such intermediate modules correspond bijectively to submodules $T$ of $E_P$, and the desired extension exists if and only if the quadratic form $q$ vanishes on $T$. If $T$ is Lagrangian in $E_P$, then $P^T$ is symplectic, rather than merely quasi-symplectic. In this case $P^T$, and hence also $P$, admits a~Lagrangian submodule by Proposition \ref{prop:dim1_Witt_Haah}. Therefore, the existence of a Lagrangian submodule in $E_P$ is sufficient for the existence of a Lagrangian submodule in $P$. 
    \end{itemize}

    In the remainder of this subsection, we spell out in detail the constructions sketched above.

    \begin{prop}
        $E_P$ has finitely many elements.
    \end{prop}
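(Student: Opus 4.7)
The plan is to verify that $E_P$ is a finitely generated torsion $R$-module and then exploit the fact that $R = \mathbb{Z}_n[x^{\pm 1}]$ has Krull dimension one to deduce that $E_P$ is actually finite.

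First I would check finite generation. Since $R$ is Noetherian and $P$ is finitely generated, a surjection $R^t \twoheadrightarrow P$ induces an injection $P^{*} \hookrightarrow \Hom_R(\overline{R^t}, R) \cong R^t$, so $P^{*}$, and hence $E_P = P^{*}/P$, is finitely generated. By part 2 of Proposition \ref{prop: quasi-sym}, $E_P$ is a torsion module. Thus each of its finitely many generators is killed by a regular element, and since products of regular elements remain regular, there exists a single regular $f \in R$ with $f E_P = 0$. This exhibits $E_P$ as a finitely generated module over $R/(f)$, so it is enough to prove that the ring $R/(f)$ is finite.

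After reducing to the case $n = p^r$ via the Chinese remainder decomposition reviewed in Section \ref{sec:23}, I would argue using the $p$-adic filtration on $R/(f)$. The reduction $\bar f \in \mathbb{F}_p[x^{\pm 1}]$ is nonzero because $f$, being regular, is not nilpotent; since $\mathbb{F}_p[x^{\pm 1}]$ is a principal ideal domain, $R/(p, f) \cong \mathbb{F}_p[x^{\pm 1}]/(\bar f)$ is finite. The filtration $R/(f) \supset p R/(f) \supset p^2 R/(f) \supset \dots$ terminates after at most $r$ steps, and each graded piece is a quotient of the finite ring $R/(p,f)$. It follows that $R/(f)$ is finite, forcing $E_P$ to be finite as well.

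The main conceptual point is that quasi-symplecticity already delivers torsionness of $E_P$ through Proposition \ref{prop: quasi-sym}; the only mildly technical step is verifying that $R/(f)$ is finite for regular $f$, and this reduces via the prime-power decomposition to the fact that finitely generated torsion modules over the PID $\mathbb{F}_p[x^{\pm 1}]$ are finite.
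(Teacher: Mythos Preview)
Your argument is correct and follows the same approach as the paper: both use that $E_P$ is a finitely generated torsion module over $\mathbb{Z}_n[x^{\pm 1}]$ and that all such modules are finite. The paper's proof is a single sentence asserting this latter fact, whereas you have supplied the details (finite generation of $P^*$, the existence of a single regular annihilator, the reduction to $\mathbb{F}_p[x^{\pm 1}]$ via a $p$-adic filtration); nothing is missing or incorrect.
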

    \begin{proof}
        All finitely generated torsion modules over $\mathbb Z_n[x^{\pm 1}]$ are finite sets.
    \end{proof}

    \begin{defn}
        We introduce
        \begin{itemize}

            \item the ring $R_{\rightarrow} = \mathbb{Z}_{n}((x))$ of formal power
                series of the form $\sum_{\lambda=N}^{\infty} c_{\lambda} x^{\lambda}$,

            \item the ring $R_{\leftarrow} = \mathbb{Z}_{n}((x^{-1}))$ of formal
                power series of the form $\sum_{\lambda=-\infty}^{N} c_{\lambda}
                x^{\lambda}$,

            \item the $R$-module $\widehat R$ of bilateral formal power series $\sum
                _{\lambda=-\infty}^{\infty} c_{\lambda} x^{\lambda}$.
        \end{itemize}
The inclusions among \( R \), \( R_\rightarrow \), \( R_\leftarrow \), and \( \widehat{R} \) are represented by the following commutative diagram:
\begin{equation}
\begin{tikzcd}[scale=1.5, row sep=2em, column sep=6em]
& R_\rightarrow \arrow[dr, hookrightarrow] & \\
R \arrow[ur, hookrightarrow] \arrow[dr, hookrightarrow] & & \widehat R \\
& R_\leftarrow \arrow[ur, hookrightarrow] &
\end{tikzcd}
\end{equation}

        For every $R$-module $M$ we define
        \begin{equation}
            M_{\rightarrow} = M \otimes_{R} R_{\rightarrow}, \qquad M_{\leftarrow}
            = M \otimes_{R} R_{\leftarrow}, \qquad \widehat M = M \otimes_{R} \widehat
            R.
        \end{equation}
        Tensoring induces from $\Omega$ the pairings
        \begin{equation}
            P_{\leftarrow} \times P_{\rightarrow} \to R_{\rightarrow}, \qquad P_{\rightarrow}
            \times P_{\leftarrow} \to R_{\leftarrow}, \qquad P \times \widehat P
            \to \widehat R, \qquad \widehat P \times P \to \widehat R. \label{eq:left_right_hat}
        \end{equation}
        We denote all of them by $\Omega$, and their constant terms by
        $\omega$.

        We highlight that the involution $f \mapsto \overline f$ takes $R_\rightarrow$ to $R_\leftarrow$. Hence if $M$ is an $R_\rightarrow$-module, the involution-twisted $\overline M$ is an $R_\leftarrow$-module. 
    \end{defn}

    \begin{lem} \label{lem:1d_rings}
            Let $S$ be $R_{\rightarrow}$ or $R_{\leftarrow}$. The only
                ideals in $S$ are the principal ideals~$(m)$, where $m$ are the integers dividing $n$. $S$ is an injective module over itself and a~flat module over $R$.
    \end{lem}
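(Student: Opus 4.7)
The plan is to treat the three claims separately after a preliminary reduction. The Chinese Remainder Theorem decomposes $\mathbb Z_n = \prod_i \mathbb Z_{p_i^{r_i}}$ and hence $S = \prod_i \mathbb Z_{p_i^{r_i}}((x))$, and all three claims respect this product decomposition, so it suffices to handle $S = \mathbb Z_{p^r}((x))$. The case $S = R_\leftarrow$ follows from the case $S = R_\rightarrow$ by the symmetry $x \leftrightarrow x^{-1}$.

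The central step I would establish is a characterization of units: $f \in S$ is a unit if and only if $f \notin pS$. The reduction $S \to S/pS \cong \mathbb F_p((x))$ is a ring homomorphism onto a field, so any unit in $S$ lies outside $pS$. Conversely, if $f \notin pS$, its image $\bar f \in \mathbb F_p((x))$ is invertible; lifting an inverse to $g_0 \in S$ gives $f g_0 = 1 + p h$ for some $h \in S$, and since $p^r = 0$ in $\mathbb Z_n$ the finite sum $\sum_{k=0}^{r-1}(-p h)^k$ is a well-defined element of $S$ inverting $1 + p h$. Once units are understood, any nonzero $f \in S$ has the form $f = p^k u$ with $u$ a unit and $k$ the minimum $p$-adic valuation among the coefficients of $f$, so $(f) = (p^k)$. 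For an arbitrary ideal $I$, picking $f \in I$ of minimum valuation $k$ shows $(p^k) \subset I \subset (p^k)$, yielding the full chain of ideals $(1) \supset (p) \supset \cdots \supset (p^r) = (0)$.

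Self-injectivity over $S$ then follows from Baer's criterion applied to the ideals above. A homomorphism $\varphi : (m) \to S$ is determined by $\varphi(m)$, which must satisfy $(n/m)\varphi(m) = 0$; a direct coefficient-wise check shows $\mathrm{Ann}_S(n/m) = m S$, so $\varphi(m) = m y$ for some $y \in S$, and multiplication by $y$ extends $\varphi$ to all of $S$.

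For flatness of $S$ over $R$, I would invoke the standard fact that the $(x)$-adic completion $\mathbb Z_n[[x]]$ of the Noetherian ring $\mathbb Z_n[x]$ is flat over $\mathbb Z_n[x]$. The Laurent series ring $\mathbb Z_n((x)) = \mathbb Z_n[[x]][x^{-1}]$ is a localization of $\mathbb Z_n[[x]]$, hence flat over $\mathbb Z_n[[x]]$, and so flat over $\mathbb Z_n[x]$ by transitivity. Since $R = \mathbb Z_n[x][x^{-1}]$ is itself a localization of $\mathbb Z_n[x]$, tensor products over $R$ and over $\mathbb Z_n[x]$ agree on $R$-modules, yielding flatness over $R$. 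I expect the Hensel-type lifting used to characterize units to be the only nontrivial step; once the ideal structure is identified, the injectivity and flatness claims reduce to standard commutative algebra.
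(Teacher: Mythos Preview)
Your argument is correct. For the ideal classification and self-injectivity you do essentially what the paper does (the paper just says ``standard method of inverting formal series'' and ``Baer's test''), only with more detail. The one genuine divergence is in proving flatness of $S$ over $R$. The paper instead passes through $K$, the total ring of fractions of $R$: the same Baer-criterion argument shows $K$ is self-injective with the same finite chain of ideals, so over $K$ injective and projective coincide; then $S$ is injective over $K$ by Baer, hence projective, hence flat over $K$, and $K$ is flat over $R$ as a localization. Your route via the $(x)$-adic completion $\mathbb Z_n[[x]]$ of the Noetherian ring $\mathbb Z_n[x]$, followed by inverting $x$, is equally valid and arguably more direct for readers comfortable with completion; the paper's route has the virtue of reusing the ideal/Baer argument a second time rather than importing a separate theorem about completions. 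One small point worth making explicit in your write-up: when you factor $f = p^k u$, the quotient $u$ is not uniquely determined in $\mathbb Z_{p^r}$, but you can always \emph{choose} lifts $d_j$ of $c_j/p^k$ so that the coefficient at the position of minimal valuation is a unit in $\mathbb Z_{p^r}$, ensuring $u \notin pS$.
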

    \begin{proof}
        The claimed classification of ideals in $S$ is obtained using the standard method of inverting formal series. Then self-injectivity of $S$ follows from Baer's test. The same argument applies to $K$, the total ring of fractions of $R$. Hence a~$K$-module is projective if and only if it is injective. $S$ is injective over $K$ by Baer's test; hence it it is projective, and in particular flat over $K$. $K$ is a~localization of $R$, so it is flat over $R$. Thus $S$ is flat over $R$.
    \end{proof}

    \begin{prop}
    Let $M \subset P$ be a submodule.
        \begin{enumerate}
            \item The first two pairings in \eqref{eq:left_right_hat} are perfect; that is, they induce isomorphisms
                \begin{equation}
                    P_{\rightarrow} \cong \Hom_{R_\rightarrow}(\overline{P_\leftarrow}
                    ,R_{\rightarrow}), \qquad P_{\leftarrow} \cong \Hom_{R_\leftarrow}
                    (\overline{P_\rightarrow},R_{\leftarrow}). \label{eq:perfect_pairings_1d}
                \end{equation}
                \item The canonical maps $M_\rightarrow \to P_\rightarrow$ and $M_\leftarrow \to P_\leftarrow$ are injective. Regarding them as submodule inclusions, their orthogonal complements in $P_\leftarrow$ and $P_\rightarrow$ are $(M^\perp)_\leftarrow$ and $(M^\perp)_\rightarrow$. This leads to the isomorphisms:
                \begin{subequations}
                \begin{align}
&                  (P/M^\perp)_\rightarrow \cong P_\rightarrow / (M^\perp)_\rightarrow \cong \Hom_{R_\rightarrow}(\overline{M_\leftarrow},R_\rightarrow), \\  &(P/M^\perp)_\leftarrow \cong P_\leftarrow / (M^\perp)_\leftarrow \cong \Hom_{R_\leftarrow}(\overline{M_\rightarrow},R_\leftarrow).
                \end{align}
                \end{subequations}
                \item The canonical maps in the commutative diagram
                \begin{equation}
\begin{tikzcd}[scale=1.5, row sep=2em, column sep=6em]
& M_\rightarrow \arrow[dr] & \\
M \arrow[ur] \arrow[dr] & & \widehat M \\
& M_\leftarrow \arrow[ur] &
\end{tikzcd}
\end{equation}
are injective.
    \item We have $M_\rightarrow = (M^{\perp \perp})_\rightarrow$ and $M_\leftarrow = (M^{\perp \perp})_\leftarrow$ as submodules of $P_\rightarrow$ and $P_\leftarrow$.
    \item If $M = M^{\perp \perp}$, the canonical map $\widehat M \to \widehat P$ is injective. 
        \end{enumerate}
    \end{prop}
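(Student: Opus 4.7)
The plan rests on two cornerstone facts, both drawn from Lemma~\ref{lem:1d_rings}: the rings $R_\rightarrow$ and $R_\leftarrow$ are flat over $R$, so tensoring preserves injectivity; and every regular element $f \in R$ is a unit in $R_\rightarrow$ and $R_\leftarrow$. The latter follows because the principal ideal $(f) \subset R_\rightarrow$ equals $(m)$ for some $m \mid n$ by the classification of ideals, and if $m \neq 1$ one writes $f = mu \in R_\rightarrow$, whence $(n/m) f = 0$ also in $R$, contradicting regularity. Consequently any finitely generated torsion $R$-module vanishes after tensoring with $R_\rightarrow$ or $R_\leftarrow$, since the annihilator of such a module contains a regular element (the product of regular annihilators of its generators, regular because regularity is closed under products).

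For part (1), combine $P \hookrightarrow P^*$ (non-degeneracy) with the fact that the cokernel $E_P = P^*/P$ is torsion (Proposition~\ref{prop: quasi-sym}). Tensoring $0 \to P \to P^* \to E_P \to 0$ with flat $R_\rightarrow$ kills $E_P$ and yields $P_\rightarrow \cong P^* \otimes_R R_\rightarrow$. Since $R$ is Noetherian and $P$ is finitely presented, the standard isomorphism $\Hom_R(\overline P, R) \otimes_R R_\rightarrow \cong \Hom_{R_\rightarrow}(\overline P \otimes_R R_\rightarrow, R_\rightarrow)$ applies; identifying $\overline P \otimes_R R_\rightarrow \cong \overline{P_\leftarrow}$ via $p \otimes s \mapsto p \otimes \bar s$ completes the argument, and the other pairing is symmetric. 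For part (2), injectivity of $M_\rightarrow \to P_\rightarrow$ and $M_\leftarrow \to P_\leftarrow$ is immediate from flatness. For the orthogonal complement claim, consider $\phi \colon P \to M^*$, $p \mapsto \Omega(\cdot, p)|_M$, whose kernel is $M^\perp$ and whose cokernel is torsion (Proposition~\ref{prop: quasi-sym}). Tensoring with $R_\rightarrow$ collapses the torsion cokernel, giving $P_\rightarrow/(M^\perp)_\rightarrow \cong M^* \otimes_R R_\rightarrow \cong \Hom_{R_\rightarrow}(\overline{M_\leftarrow}, R_\rightarrow)$; since this composite is induced by the extended pairing, $(M^\perp)_\rightarrow$ is the orthogonal complement of $M_\leftarrow$ in $P_\rightarrow$. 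Part (4) follows immediately: $M^{\perp\perp}/M$ is torsion by part 3 of Proposition~\ref{prop: quasi-sym}, so $M_\rightarrow = (M^{\perp\perp})_\rightarrow$ in $P_\rightarrow$, and similarly for $\leftarrow$.

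For part (3), the key is to embed $P$ into a free module: a surjection $R^m \twoheadrightarrow \overline P$ dualizes to $P^* \hookrightarrow R^m$, and composing with $P \hookrightarrow P^*$ gives $P \hookrightarrow R^m$, whence $M \hookrightarrow R^m$. Flatness of $R_\rightarrow$ then yields $M_\rightarrow \hookrightarrow R_\rightarrow^m$; since the composite $M \to M_\rightarrow \hookrightarrow R_\rightarrow^m$ equals the chain of inclusions $M \hookrightarrow R^m \hookrightarrow R_\rightarrow^m$, the map $M \to M_\rightarrow$ is injective. Analogously $M \hookrightarrow M_\leftarrow$. For $M \hookrightarrow \widehat M$ chain through $M \hookrightarrow M_\rightarrow \hookrightarrow \widehat M$: the second injection follows by tensoring $0 \to R_\rightarrow \to \widehat R \to R_\leftarrow/R \to 0$ (where $\widehat R / R_\rightarrow \cong R_\leftarrow/R$ since $\widehat R = R_\rightarrow + R_\leftarrow$ and $R = R_\rightarrow \cap R_\leftarrow$) with $M$, using $\Tor_1^R(M, \widehat R) = 0$ (from Proposition~\ref{prop:Ext_bounds} and Corollary~\ref{cor:tor_ext}, given torsion-freeness of $M$ in dimension one) to conclude $\ker(M_\rightarrow \to \widehat M) = \Tor_1^R(M, R_\leftarrow/R) = \ker(M \to M_\leftarrow) = 0$. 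For part (5), the hypothesis $M = M^{\perp\perp}$ forces $P/M$ to be torsion-free (the saturation characterization in Proposition~\ref{prop: quasi-sym}), so $\Ext^1_R(P/M, R) = 0$ (Proposition~\ref{prop:Ext_bounds}) and hence $\Tor_1^R(P/M, \widehat R) = 0$ (Corollary~\ref{cor:tor_ext}); tensoring $0 \to M \to P \to P/M \to 0$ with $\widehat R$ yields the desired injectivity $\widehat M \hookrightarrow \widehat P$.

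The main obstacle is the bookkeeping in part (1): one must carefully track the involution across the three auxiliary rings $R_\rightarrow$, $R_\leftarrow$, $\widehat R$ to justify the identification $\overline P \otimes_R R_\rightarrow \cong \overline{P_\leftarrow}$ as $R_\rightarrow$-modules, and to verify that the resulting abstract isomorphism $P_\rightarrow \cong \Hom_{R_\rightarrow}(\overline{P_\leftarrow}, R_\rightarrow)$ is concretely realized by the extended quasi-symplectic form. Once (1) is in hand, the remaining parts follow by systematic diagram chases supported by the flatness and torsion-killing properties established at the outset.
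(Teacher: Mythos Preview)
Your proof is correct, and parts (1), (2), (4), (5) follow the paper's argument essentially verbatim. The only genuine divergence is in part (3). The paper first treats the special case $M=P$: if $p\in P$ dies in $P_\rightarrow$, the extended pairing shows $\Omega(q,p)=0$ for all $q\in P$, hence $p=0$; the duality \eqref{eq:perfect_pairings_1d} from part (1) then handles $P_\rightarrow \hookrightarrow \widehat P$ by the same orthogonality trick. The general $M$ is reduced to this case via the commutative square with $P_\rightarrow \hookrightarrow \widehat P$ on the bottom, using the injectivity from part (2). Your route is instead purely homological: embed $M$ into a free module to get $M \hookrightarrow M_\rightarrow$, and for $M_\rightarrow \hookrightarrow \widehat M$ pass through the short exact sequence $0 \to R_\rightarrow \to \widehat R \to R_\leftarrow/R \to 0$ and identify the kernel with $\Tor_1^R(M,R_\leftarrow/R) = \ker(M \to M_\leftarrow) = 0$, invoking $\Tor_1^R(M,\widehat R)=0$ via Proposition~\ref{prop:Ext_bounds} and Corollary~\ref{cor:tor_ext}. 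The paper's argument is more self-contained, needing only part (1) rather than the Gorenstein bounds and the Ext--Tor duality; your argument is cleaner bookkeeping-wise since it avoids tracking the pairings through $\widehat P$, at the cost of importing heavier machinery for what is ultimately a one-dimensional statement.
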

    \begin{proof}
        1. Since $E_P$ is a torsion $R$-module and every regular element in $R$ is invertible in $R_\rightarrow$, $E_P$ is killed by $\otimes_R R_\rightarrow$: $(E_P)_\rightarrow =0$.
        In combination with the flatness of $R_{\rightarrow}$, this shows that the embedding $P
        \hookrightarrow P^{*}$ induced by $\Omega$ becomes an isomorphism after tensoring
        with $R_{\rightarrow}$:
        \begin{equation}
            P_{\rightarrow}\cong P^* \otimes_{R} R_{\rightarrow}.
        \end{equation}
        Using the flatness of $R_{\rightarrow}$ over $R$ again
        \cite[Section 2.9]{Bourbaki},
        \begin{equation}
            P^* \otimes_{R} R_{\rightarrow}\cong \Hom_{R}(\overline
            P, R_{\rightarrow}) \cong \Hom_{R_{\rightarrow}}(\overline{P_{\leftarrow}}
            ,R_{\rightarrow}).
        \end{equation}

        2. The canonical map $M_\rightarrow \to P_\rightarrow$ is injective by flatness of $R_\rightarrow$. By 2. in Proposition \ref{prop: quasi-sym}, the homology of the complex
        \begin{equation}
            0 \to M^\perp \to P \to M^* \to 0
        \end{equation}
        is torsion, and hence killed by tensoring with $R_\rightarrow$. We continue as in the proof of 1. 

        3. First consider the case $M =P$. If $p \in P$ is in the kernel of the natural map $P \to P_\rightarrow$, then it is orthogonal to $P$ and hence it equals zero. By the duality \eqref{eq:perfect_pairings_1d}, the same argument applies to $p \in P_\rightarrow$ in the kernel of the natural map $P_\rightarrow \to \widehat P$. 
        
        The case of general $M$ reduces to the special case above using 2.; for example, an~inspection of the commutative diagram
       \begin{equation}
\begin{tikzcd}[scale=1.5, row sep=4em, column sep=4em]
M_\rightarrow \arrow[r] \arrow[d, hookrightarrow] & \widehat M \arrow[d] \\
P_\rightarrow \arrow[r, hookrightarrow] & \widehat P
\end{tikzcd}
\end{equation}
immediately shows that the horizontal arrow in the top row is injective.

     4. $M^{\perp \perp}/ M$ is a torsion module by 3. in Proposition \ref{prop: quasi-sym}, so $\otimes_R R_\rightarrow$ kills it. We~invoke, once again, the flatness of $R_\rightarrow$.

        The same arguments in 1.-4. can be repeated with the roles of $\rightarrow$
        and $\leftarrow$ switched throughout .

        5. If $M = M^{\perp \perp}$, then $P/M$ is torsion-free by 3. in Proposition \ref{prop: quasi-sym}. Hence $\Tor_1^R(P/M,\widehat R)=0$. 
    \end{proof}

    Let \( \alpha \in P^* \). By tensoring with \( \mathrm{id}_{\widehat{R}} \), we obtain an $R$-antilinear extension \( \widehat{P} \to \widehat{R} \), which, by a slight abuse of notation, we continue to denote by~\( \alpha \). Functoriality of the tensor product implies that $\alpha$ takes $P_\leftarrow$ to $R_\rightarrow$ and $P_\rightarrow$ to $R_\leftarrow$, so by the duality \eqref{eq:perfect_pairings_1d} there exist unique $\alpha_\rightarrow \in P_\rightarrow$ and $\alpha_\leftarrow \in P_\leftarrow$ such that 
    \begin{subequations}
    \label{eq:alpha_rl}
    \begin{align}
        \alpha(p) &= \Omega(p, \alpha_\rightarrow) \qquad \text{for } p\in P_\leftarrow, \\
        \alpha(p) &= \Omega(p, \alpha_\leftarrow) \qquad \text{for } p\in P_\rightarrow.
    \end{align}
    \end{subequations}
    It follows immediately that
    \begin{equation}
     \Omega(p, \alpha_\rightarrow - \alpha_\leftarrow) = 0 \qquad \text{for } p \in P.   
    \end{equation}

    \begin{prop} \label{prop:finite_line_criterion} 
    \begin{enumerate}
        \item Let $p \in P_{\rightarrow}$ and let $M \subset P$ be a submodule. Then:
        \begin{enumerate}
            \item $p$ can be represented as $\alpha_{\rightarrow}$ for some $\alpha \in P^*$ if and only
                if $\Omega(q,p) \in R$ for all $q \in P$.
            \item $p \in P+ M_\rightarrow$ if and only if $\Omega
                (\alpha_{\leftarrow},p ) \in R$ for all $\alpha \in P^*$ vanishing on $M$.
        \end{enumerate}
        The same equivalences holds with $\rightarrow$ and
        $\leftarrow$ interchanged.
        \item The intersection $P_\rightarrow \cap P_\leftarrow $ in $\widehat P$ equals $P$.
    \end{enumerate}
    \end{prop}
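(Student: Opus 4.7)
My plan is to tackle Part 1(a) and Part 2 by direct arguments, and then Part 1(b), whose converse direction is the main obstacle. I present them in this logical order.

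For \textbf{Part 1(a)}, the forward direction is immediate from the defining relation $\alpha(q) = \Omega(q, \alpha_\rightarrow)$. For the converse, given $p \in P_\rightarrow$ satisfying $\Omega(q, p) \in R$ for all $q \in P$, I would define $\alpha \in P^*$ by $\alpha(q) := \Omega(q, p)$, which is $R$-antilinear by sesquilinearity of $\Omega$. The $R_\rightarrow$-tensor extension of $\alpha$ to $P_\leftarrow$ agrees with $\Omega(\cdot, p) : P_\leftarrow \to R_\rightarrow$ since they coincide on the generating set $P$, and the perfect pairing $P_\rightarrow \cong \Hom_{R_\rightarrow}(\overline{P_\leftarrow}, R_\rightarrow)$ from the preceding proposition then forces $\alpha_\rightarrow = p$.

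For \textbf{Part 2}, I would employ a Mayer--Vietoris argument based on the short exact sequence of $R$-modules
\begin{equation*}
0 \to R \xrightarrow{r \mapsto (r,r)} R_\rightarrow \oplus R_\leftarrow \xrightarrow{(a,b) \mapsto a-b} \widehat R \to 0,
\end{equation*}
whose exactness at the middle uses $R_\rightarrow \cap R_\leftarrow = R$, and on the right the decomposition of any bilateral series into its positive and negative parts. Tensoring with $P$, flatness of $R_\rightarrow, R_\leftarrow$ over $R$ (Lemma \ref{lem:1d_rings}) kills the $\Tor_1$ contributions from the direct summands, and torsion-freeness of $P$ (Proposition \ref{prop: quasi-sym}) ensures injectivity of $P \to P_\rightarrow$ and hence forces $\Tor_1^R(\widehat R, P) = 0$. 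The result is the exact sequence $0 \to P \to P_\rightarrow \oplus P_\leftarrow \to \widehat P \to 0$ with diagonal left map. The kernel of the right map, i.e., pairs $(a,b)$ with $a = b \in \widehat P$, is therefore the diagonal image of $P$, yielding $P_\rightarrow \cap P_\leftarrow = P$.

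For \textbf{Part 1(b)}, the forward direction decomposes $p = p' + m'$ with $p' \in P, m' \in M_\rightarrow$, giving $\Omega(\alpha_\leftarrow, p') = -\overline{\alpha(p')} \in R$ and $\Omega(\alpha_\leftarrow, m') = -\overline{\widehat\alpha(m')} = 0$, since $\alpha|_M = 0$ forces the extension $\widehat\alpha$ to vanish on all of $M_\rightarrow$. The converse is the principal obstacle: setting $N = P/M$, the hypothesis rewrites via $\widehat\alpha(p) = -\overline{\Omega(\alpha_\leftarrow, p)}$ as $\widehat{\bar\alpha}(\bar p) \in R$ for every $\bar\alpha \in N^*$, where $\bar p \in N_\rightarrow = P_\rightarrow / M_\rightarrow$ is the image of $p$, and the conclusion must be that $\bar p$ lies in the image of $N \to N_\rightarrow$. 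My plan here is first to reduce to torsion-free $N$: the torsion submodule $N_{\mathrm{tors}}$ is annihilated by a regular element (invertible in $R_\rightarrow$ by Lemma \ref{lem:1d_rings}), so $N_{\mathrm{tors}} \otimes R_\rightarrow = 0$ and neither $N_\rightarrow$ nor $N^*$ changes. The remaining step is the key technical lemma characterizing the image of $N \to N_\rightarrow$ by the functional test. My strategy for this lemma is to reduce via the Chinese Remainder Theorem to $n = p^r$ and then, via a layered argument patterned on Theorem \ref{thm:reduction_to_prime} that strips off powers of $p$ one at a time, to the case of prime $n = p$; over the principal ideal domain $\mathbb F_p[x^{\pm 1}]$ the torsion-free $N$ is free, so the characterization reduces coordinate-wise to the trivial statement $R_\rightarrow \cap R = R$. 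The swapped version of Part 1 (with $\rightarrow$ and $\leftarrow$ interchanged) follows by symmetry.
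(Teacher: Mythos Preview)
Your arguments for Part 1(a) and Part 2 are correct. For Part 2 you take a genuinely different route from the paper: you use a Mayer--Vietoris sequence (tensoring $0 \to R \to R_\rightarrow \oplus R_\leftarrow \to \widehat R \to 0$ with $P$) together with flatness and torsion-freeness, whereas the paper deduces Part 2 \emph{from} Part 1(b) applied with $M=0$. Your approach is cleaner in that it decouples Part 2 from Part 1(b), at the cost of invoking the Tor long exact sequence.

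The gap is in Part 1(b), specifically in the ``layered argument patterned on Theorem~\ref{thm:reduction_to_prime}'' that you invoke to reduce from $n=p^r$ to $n=p$. That theorem concerns Witt classes of quasi-symplectic modules, and its mechanism (passing to $N^\perp/N^{\perp\perp}$ for $N=p^tP$) has no evident analogue for your lemma about a bare torsion-free module $N$. The natural candidate filtration $p^iN/p^{i+1}N$ fails: these graded pieces need not be torsion-free over $\mathbb F_p[x^{\pm 1}]$. For instance, with $R=\mathbb Z_4[x^{\pm 1}]$ the ideal $N=(2,x-1)\subset R$ is torsion-free over $R$, but in $N/2N$ the class of $2$ is nonzero and annihilated by $x-1$. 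So your reduction to the free case does not go through as stated.

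The paper avoids this by never leaving the ring $\mathbb Z_n[x^{\pm 1}]$: it observes that the hypothesis defines an element $\phi\in(P/M)^{**}$ via $\phi(\alpha)=\Omega(\alpha_\leftarrow,p)$, and then invokes surjectivity of the biduality map $P/M\to(P/M)^{**}$, which holds because $\Ext^2_R(-,R)=0$ in dimension one (Proposition~\ref{prop:Ext_bounds}). This yields $q\in P$ with $\Omega(\alpha_\leftarrow,p+q)=0$ for all such $\alpha$; in particular $p+q$ is orthogonal to $M^\perp$, hence lies in $(M^{\perp\perp})_\rightarrow=M_\rightarrow$. The biduality argument handles non-prime $n$ uniformly and does not require any structural decomposition of $P/M$.
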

    \begin{proof}
        1. (a) If $p = \alpha_{\rightarrow}$, then $\Omega(q,p) = \alpha(q) \in R$ holds for $q \in P$. Conversely, define $\alpha$ by $\alpha(q) = \Omega(q,p)$.
        By the uniqueness of $\alpha_{\rightarrow}$, $p = \alpha_{\rightarrow}$.

        1. (b) 
        $\implies$ is immediate, and we will prove $\impliedby$. The module of $\alpha \in P^*$ vanishing on $M$ can be identified with $(P/M)^*$. Define $\phi \in (P/M)^{**}$ by
        \begin{equation}
            \phi(\alpha)= \Omega(\alpha_{\leftarrow},p).
            \label{eq:phi_double_dual_argument}
        \end{equation}
        Since $\mathrm{Ext}^{2}_{R}(\cdot,R)=0$, the canonical map 
        $P/M \to (P/M)^{**}$ is surjective
        \cite[Proposition 5]{Masiek}. That is, there exists $q \in P$ such that
        \begin{equation}
            \phi(\alpha)= \overline{\alpha(q)}= \overline{\Omega(q,\alpha_\leftarrow)}
            = - \Omega(\alpha_{\leftarrow},q).
        \end{equation}
        Comparing this with \eqref{eq:phi_double_dual_argument} we see that $p+q$ is orthogonal to $\alpha_\leftarrow$ for all $\alpha \in (P/M)^*$. In particular, $p+q$ is orthogonal to $M^\perp$, so it belongs to $(M^{\perp \perp})_\rightarrow = M_\rightarrow$.
    
2. \( P \subset P_\rightarrow \cap P_\leftarrow \) is evident. We prove $\supset$. Let $q \in P_\rightarrow \cap P_\leftarrow$. We~have $q \in P_\rightarrow $, so~we can apply the criterion 1. (b) with $M=0$ to $q$. If~$\alpha \in P^*$, then
\begin{equation}
   R_\rightarrow \ni  \Omega(\alpha_\leftarrow, q ) = - \overline{\alpha (q) } = \Omega(\alpha_{\rightarrow},q) \in R_{\leftarrow},
\end{equation}
where in the second equality we used $q \in P_{\leftarrow}$. Since $R_\rightarrow \cap R_\leftarrow = R$, we have $\Omega(\alpha_\leftarrow, q ) \in R$ and hence $q \in P$.
\end{proof}

\begin{thm}
\begin{enumerate}
    \item There is an isomorphism 
    \begin{align}
    E_P & \xrightarrow[]{\sim} \{ m\in \widehat P \, | \, \Omega(m, p)=0 \text{ for all } p \in P \}, \nonumber \\
  \text{given by:} \qquad  \alpha &\mapsto \alpha_\rightarrow - \alpha_\leftarrow.
    \label{eq:functional_to_line}
\end{align}
\item If $M \subset P$ is a submodule and $\alpha \in P^*$, then $\alpha_\rightarrow - \alpha_\leftarrow \in \widehat{M^{\perp}}$ if and only if the class of $\alpha \in E_P$ admits a representative vanishing on $M$. 
\end{enumerate}
\end{thm}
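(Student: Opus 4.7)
The plan is to establish part 1 by verifying well-definedness as a map $P^* \to \widehat P$, descent to $E_P$, injectivity, and surjectivity; part 2 will then follow by replaying the same orthogonality arguments with $M^\perp$ in place of $P$. For well-definedness I extend $\alpha \in P^*$ antilinearly to $\widehat P \to \widehat R$; the defining property \eqref{eq:alpha_rl} then gives $\Omega(p, \alpha_\rightarrow) = \alpha(p) = \Omega(p, \alpha_\leftarrow)$ for $p \in P$, and anti-hermiticity yields $\Omega(\alpha_\rightarrow - \alpha_\leftarrow, p) = 0$. Descent to $E_P$ is immediate: if $\alpha = \Omega(\cdot, p)$ for $p \in P$, uniqueness in \eqref{eq:alpha_rl} forces $\alpha_\rightarrow = \alpha_\leftarrow = p$. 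For injectivity, if $\alpha_\rightarrow - \alpha_\leftarrow = 0$ in $\widehat P$, the common value lies in $P_\rightarrow \cap P_\leftarrow = P$ by part 2 of Proposition~\ref{prop:finite_line_criterion}, whence $\alpha = \Omega(\cdot, \alpha_\rightarrow)$ represents the zero class in $E_P$.

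For surjectivity, the central step will be establishing the decomposition $\widehat P = P_\rightarrow + P_\leftarrow$ by tensoring the short exact sequence $0 \to R \to R_\rightarrow \oplus R_\leftarrow \to \widehat R \to 0$ (diagonal in, subtraction out) with $P$ and using right-exactness, together with part 3 of the Proposition preceding Proposition~\ref{prop:finite_line_criterion} to ensure $P_\rightarrow, P_\leftarrow \hookrightarrow \widehat P$. Given $m \in \widehat P$ orthogonal to $P$, I would write $m = m_\rightarrow - m_\leftarrow$. For $q \in P$, the identity $0 = \Omega(q, m) = \Omega(q, m_\rightarrow) - \Omega(q, m_\leftarrow)$ holds in $\widehat R$, while the summands lie in $R_\rightarrow$ and $R_\leftarrow$ respectively; hence $\Omega(q, m_\rightarrow) = \Omega(q, m_\leftarrow) \in R_\rightarrow \cap R_\leftarrow = R$. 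Setting $\alpha(q) := \Omega(q, m_\rightarrow)$ yields $\alpha \in P^*$, and Proposition~\ref{prop:finite_line_criterion}(1)(a) together with the uniqueness in \eqref{eq:alpha_rl} identify $\alpha_\rightarrow = m_\rightarrow$ and $\alpha_\leftarrow = m_\leftarrow$.

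For part 2, the forward direction: if $\alpha|_M = 0$, antilinear extension yields $\alpha|_{M_\leftarrow} = 0$, so $\Omega(q, \alpha_\rightarrow) = 0$ for all $q \in M_\leftarrow$, which by part 2 of the Proposition preceding Proposition~\ref{prop:finite_line_criterion} means $\alpha_\rightarrow \in (M^\perp)_\rightarrow$; symmetrically $\alpha_\leftarrow \in (M^\perp)_\leftarrow$. Their difference lies in $(M^\perp)_\rightarrow + (M^\perp)_\leftarrow = \widehat{M^\perp}$ by the same SES decomposition applied to $M^\perp$; here one uses $M^\perp = M^{\perp \perp \perp}$ from part 4 of Proposition~\ref{prop: quasi-sym} to invoke part 5 of the preceding Proposition and realize $\widehat{M^\perp}$ as a submodule of $\widehat P$. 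Conversely, given $\alpha_\rightarrow - \alpha_\leftarrow \in \widehat{M^\perp}$, decompose it as $m_1 - m_2$ with $m_1 \in (M^\perp)_\rightarrow, m_2 \in (M^\perp)_\leftarrow$; then $\alpha_\rightarrow - m_1 = \alpha_\leftarrow - m_2 \in P_\rightarrow \cap P_\leftarrow = P$ equals some $p \in P$, and $\alpha' := \alpha - \Omega(\cdot, p)$ has $\alpha'_\rightarrow = m_1 \in (M^\perp)_\rightarrow$, so $\alpha'|_M = 0$ by the orthogonality of $(M^\perp)_\rightarrow$ with $M_\leftarrow$. The main obstacle will be the bookkeeping around the decomposition $\widehat N = N_\rightarrow + N_\leftarrow$ applied to both $N = P$ and $N = M^\perp$, and the identification of $\widehat{M^\perp}$ as a submodule of $\widehat P$; once those are in hand, the theorem reduces to chasing \eqref{eq:alpha_rl} and the orthogonality characterization.
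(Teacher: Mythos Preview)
Your proposal is correct and follows essentially the same route as the paper's proof: both reduce part~1 to the identity $P_\rightarrow \cap P_\leftarrow = P$ for injectivity and to the decomposition $\widehat P = P_\rightarrow + P_\leftarrow$ for surjectivity, then reconstruct $\alpha$ from a splitting of an orthogonal element, and both handle part~2 by the same modification-by-an-element-of-$P$ trick. You supply slightly more scaffolding than the paper does---the short exact sequence $0 \to R \to R_\rightarrow \oplus R_\leftarrow \to \widehat R \to 0$ to justify $\widehat P = P_\rightarrow + P_\leftarrow$, and the observation $M^\perp = M^{\perp\perp\perp}$ to embed $\widehat{M^\perp}$ in $\widehat P$---but these are exactly the justifications the paper is tacitly relying on.
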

\begin{proof}
1. If $\alpha$ corresponds to an element of $P$, i.e.\ $\alpha(\cdot) = \Omega(\cdot,p)$ for some $p \in P$, then $\alpha_\rightarrow = \alpha_\leftarrow = p$. Conversely, if $\alpha_\rightarrow = \alpha_\leftarrow$, then $\alpha_{\rightarrow} \in P_\rightarrow \cap P_\leftarrow = P$ (by 2. in Proposition \ref{prop:finite_line_criterion}). Hence $\alpha(\cdot) = \Omega(\cdot,p)$ with $p = \alpha_\rightarrow$. 

This equivalence proven above shows that the map \eqref{eq:functional_to_line} naturally factors through an injective map $E_P \to \widehat P$. Its image is contained in the orthogonal complement of~$P$, as seen from \eqref{eq:alpha_rl}. We have to prove that all elements in this orthogonal complement lie in the image of \eqref{eq:functional_to_line}. 

Let $q \in \widehat P$ be such that $\Omega(\cdot, q ) =0$. Since $\widehat P = P_\rightarrow + P_\leftarrow$, there exist $q_\rightarrow \in P_\rightarrow$ and $q_\leftarrow \in P_\leftarrow$ such that $q = q_\rightarrow-q_\leftarrow$. The assumption on $q$ implies that for every $p \in P$ we have
\begin{equation}
    R_\leftarrow \ni \Omega(p,q_\leftarrow) = \Omega(p,q_\rightarrow) \in R_\rightarrow.
\end{equation}
Therefore $\alpha (\cdot) = \Omega(\cdot, q_\rightarrow) \in P^*$. Clearly $\alpha_\rightarrow = q_\rightarrow$ and $\alpha_\leftarrow = q_\leftarrow$, so we have $q = \alpha_\rightarrow - \alpha_\leftarrow$ as desired. 

2. First suppose that $\alpha$ vanishes on $M$. Then $\alpha_\rightarrow \in (M^{\perp})_\rightarrow$ and $\alpha_\leftarrow \in (M^{\perp})_\leftarrow$, so $\alpha_\rightarrow - \alpha_\leftarrow \in \widehat{M^{\perp}}$. 

For the converse, assume that $\alpha_\rightarrow - \alpha_\leftarrow \in \widehat{M^{\perp}} =  (M^{\perp})_\rightarrow + (M^\perp)_\leftarrow$. There exist $p_\rightarrow \in (M^{\perp})_\rightarrow$ and $p_\leftarrow \in (M^\perp)_\leftarrow$ such that 
\begin{equation}
    \alpha_\rightarrow - p_\rightarrow = \alpha_\leftarrow - p_\leftarrow.
\end{equation}
Since $P_\rightarrow \cap P_\leftarrow = P$, this implies that the above equality takes place in $P$. Thus we can replace $\alpha \in P^*$ with another one, representing the same class in $E_P$, so that
\begin{equation}
    \alpha_\rightarrow = p_\rightarrow, \qquad \alpha_\leftarrow = p_\leftarrow.
\end{equation}
Then $\alpha$ vanishes on $M$.
\end{proof}

\begin{thm}
    The bilinear form $b$ in \eqref{eq:b_def} is well-defined on $E_P$, symmetric, and~nondegenerate in the sense that
    \begin{equation}
        E_P \ni \alpha \mapsto b(\alpha, \cdot) \in \mathrm{Hom}_{\mathbb Z_n}(E_P, \mathbb Z_n)
        \label{eq:b_perfect_duality}
    \end{equation}
    is a bijection. The function $q$ in \eqref{eq:q_def} is well-defined on $E_P$ and is a quadratic refinement of $b$; that is, \eqref{eq:qb_refinement} holds.
\end{thm}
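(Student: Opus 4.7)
The plan is to verify well-definedness, symmetry, the quadratic refinement identity, and nondegeneracy, the first three being direct calculations while nondegeneracy is the main technical step.

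For well-definedness of $b$, I will show $b(\alpha + \alpha_p, \beta) = b(\alpha, \beta)$ for $p \in P$, where $\alpha_p := \Omega(\cdot, p)$. Since $(\alpha_p)_\rightarrow = (\alpha_p)_\leftarrow = p$, the extra contribution via the symmetric formula \eqref{eq:b_def} is $\omega(p, \beta_\leftarrow) + \omega(\beta_\rightarrow, p)$. Because $p \in P = P_\rightarrow \cap P_\leftarrow$ (by 2.\ of Proposition~\ref{prop:finite_line_criterion}), both $\Omega(p, \beta_\rightarrow)$ and $\Omega(p, \beta_\leftarrow)$ agree with $\beta(p) \in R$, so $\omega(p, \beta_\rightarrow) = \omega(p, \beta_\leftarrow) = \beta_0(p)$; combined with $\omega(\beta_\rightarrow, p) = -\omega(p, \beta_\rightarrow)$ from anti-hermiticity, the two terms cancel. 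Symmetry of $b$ is immediate from \eqref{eq:b_def}, and well-definedness in the second argument follows. For $q$, the analogous expansion gives $q(\alpha + \alpha_p) - q(\alpha) = \omega(\alpha_\rightarrow, p) + \omega(p, \alpha_\leftarrow) + \omega(p, p)$; the first two cancel by the same mechanism, and $\omega(p, p) = 0$ from the defining property of $\Omega$. The quadratic refinement identity \eqref{eq:qb_refinement} follows from $\mathbb Z_n$-bilinear expansion of $q(\alpha + \beta)$, whose cross terms reproduce $b(\alpha,\beta)$, and $q(-\alpha) = q(\alpha)$ is automatic.

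For nondegeneracy, the key point is that $E_P$ is finite, so $|\Hom_{\mathbb Z_n}(E_P, \mathbb Z_n)| = |E_P|$ and it suffices to prove injectivity of $[\alpha] \mapsto b([\alpha], \cdot)$. Suppose $b(\alpha, \beta) = 0$ for every $\beta \in P^*$. Replacing $\beta$ by its $R$-translates and reading off coefficients, the antilinear extension $\widehat{\alpha} : \widehat P \to \widehat R$ must vanish on $\beta_\rightarrow - \beta_\leftarrow$ for every $\beta$, hence on the image of $E_P$ in $\widehat P$ from \eqref{eq:functional_to_line}. Unpacking via \eqref{eq:alpha_rl}, this yields
\[
\Omega(\beta_\rightarrow, \alpha_\leftarrow) = \Omega(\beta_\leftarrow, \alpha_\rightarrow) \qquad \text{for all } \beta \in P^*,
\]
with the two sides lying a priori in $R_\leftarrow$ and $R_\rightarrow$ respectively, and hence in $R_\rightarrow \cap R_\leftarrow = R$.

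To finish, I will exhibit $p_0 \in P$ with $\alpha_\leftarrow = -p_0$. The map $\psi : P^* \to R$, $\psi(\beta) := \Omega(\beta_\rightarrow, \alpha_\leftarrow)$, is $R$-antilinear (under $\beta \mapsto r\beta$ one has $(r\beta)_\rightarrow = r\beta_\rightarrow$, so $\psi \mapsto \bar r \psi$), giving $\psi \in P^{**}$. Because $P$ is torsion-free by 1.\ of Proposition~\ref{prop: quasi-sym} and $R = \mathbb Z_n[x^{\pm 1}]$ is Gorenstein of Krull dimension one, the $\Ext$-vanishing in Proposition~\ref{prop:Ext_bounds} yields reflexivity $P \cong P^{**}$ via the natural $R$-linear map sending $p_0$ to $\alpha \mapsto \overline{\alpha(p_0)}$. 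Identifying $\psi$ in this form and using $\beta(p_0) = \Omega(p_0, \beta_\rightarrow)$ along with anti-hermiticity gives $\Omega(\beta_\rightarrow, \alpha_\leftarrow + p_0) = 0$ for all $\beta$. The collection $\{\beta_\rightarrow : \beta \in P^*\}$ contains $P$ (via $p \mapsto \Omega(\cdot, p)$), hence $R_\rightarrow$-generates $P_\rightarrow$; combined with sesquilinearity of $\Omega$ and the perfect pairing \eqref{eq:perfect_pairings_1d}, we conclude $\alpha_\leftarrow + p_0 = 0$, so $\alpha_\leftarrow = -p_0 \in P$. Uniqueness of the antilinear extension from $P$ to $P_\leftarrow$ then forces $\alpha_\rightarrow = \alpha_\leftarrow$, so $\alpha = \Omega(\cdot, -p_0)$ and $[\alpha] = 0$ in $E_P$. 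The main obstacle will be the careful handling of reflexivity and the involution in identifying $\psi$ with an element of $P$.
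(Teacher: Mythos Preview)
Your proof is correct and follows essentially the same path as the paper's. The only difference is in the nondegeneracy step: after deriving $\Omega(\beta_\rightarrow,\alpha_\leftarrow)=\Omega(\beta_\leftarrow,\alpha_\rightarrow)\in R$ for all $\beta\in P^*$, the paper simply invokes Proposition~\ref{prop:finite_line_criterion}~1(b) with $M=0$ to conclude $\alpha_\rightarrow\in P$, whereas you unpack that proposition's proof inline by building $\psi\in P^{**}$ and using reflexivity of $P$ to produce $p_0\in P$ with $\alpha_\leftarrow=-p_0$. Since the proof of Proposition~\ref{prop:finite_line_criterion}~1(b) is precisely this double-dual argument (surjectivity of $P\to P^{**}$ from $\Ext^2_R(\cdot,R)=0$, via the cited result of Masiek), your route and the paper's are the same idea at two levels of abstraction. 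One small remark: your appeal to Proposition~\ref{prop:Ext_bounds} for reflexivity is a bit compressed---what you actually need is $\Ext^2_R(\cdot,R)=0$ for surjectivity of $P\to P^{**}$ (the first clause of that proposition, $i>d$), together with torsion-freeness of $P$ for injectivity; both hold, but it is worth stating them separately.
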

\begin{proof}
Consider $b$ first as a bilinear form on $P^*$. If $\beta$ represents the trivial class in~$E_P$, then $\beta_\rightarrow = \beta_\leftarrow$, and the vanishing of $b(\alpha,\beta)$ is seen from the first expression in \eqref{eq:b_def}. The second formula for $b(\alpha,\beta)$ in \eqref{eq:b_def} shows that $b$ is symmetric, so it vanishes also if $\alpha$ represents the trivial class. This ensures that $b$ descends to a~bilinear form on~$E_P$.  

Now suppose that $b(\alpha,\cdot)=0$. Then for every $\beta \in P^*$,
\begin{equation}
R_\leftarrow \ni    \Omega(\beta_\rightarrow,\alpha_\leftarrow) = \Omega(\beta_\leftarrow, \alpha_\rightarrow) \in R_\rightarrow.
\label{eq:b_symmetric_formula}
\end{equation}
Hence $\Omega(\beta_\leftarrow, \alpha_\rightarrow) \in R$. Applying Proposition \ref{prop:finite_line_criterion}, 1. (b) to $p = \alpha_\rightarrow$, we conclude that $\alpha_\rightarrow \in P$. Therefore, $\alpha$ represents the trivial class in $E_P$.

We showed that the map in \eqref{eq:b_perfect_duality} is injective. Since $E_P$ is a finite $\mathbb Z_n$-module, it~has as many elements as $\Hom_{\mathbb Z_n}(E_P,\mathbb Z_n)$. Therefore, \eqref{eq:b_perfect_duality} is a bijection. 

To verify that $q$ is well defined on $E_P$, suppose that $\alpha, \alpha' \in P^*$ represent the same class. There exists $p \in P$ such that $\alpha'_\rightarrow = \alpha_\rightarrow + p$ and $\alpha'_\leftarrow = \alpha_\leftarrow +p$. Hence
\begin{equation}
q(\alpha') = \omega(\alpha_\rightarrow + p, \alpha_\leftarrow +p) = q(\alpha) + \alpha_0(p) - \alpha_0(p) + \omega(p,p) = q(\alpha).
\end{equation}
The quadratic refinement property \eqref{eq:qb_refinement} is immediate from \eqref{eq:b_symmetric_formula}.
\end{proof}

\begin{prop} \label{prop:Lag_induces_Lag}
    Let $M \subset P$ be a submodule, and let $Q_M = \frac{M^*}{P/M^{\perp}}$ be the cokernel of the map in \eqref{eq:quasi_symp_cokernel}. Let 
    \begin{equation}
\rho_M : E_P \to Q_M
    \end{equation}
    be the homomorphism induced by the restriction map $P^* \to M^*$.
    \begin{enumerate}
        \item If $M=M^{\perp \perp}$, then $\rho_M$ is surjective.
        \item $\ker(\rho_M)$ is the submodule of $E_P$ consisting of these classes which admit a~representative vanishing on $M$. 
        \item If $M^\perp$ is isotropic, then $\ker(\rho_M)$ is isotropic in $E_P$: $q$ vanishes on $\ker(\rho_M)$, and in particular $\ker(\rho_M) \subset \ker(\rho_M)^\perp$.
        \item We have $\ker(\rho_M)^\perp = \ker(\rho_{M^{\perp}})$.
        \item If $M$ is Lagrangian, then $\ker(\rho_M)$ is Lagrangian in $E_P$: $q$ vanishes on $\ker(\rho_M)$ and $\ker(\rho_M) = \ker(\rho_M)^\perp$. 
        \item If $M=M^{\perp \perp}$ and $M^\perp$ is isotropic, the quasi-symplectic module $M/M^{\perp \perp}$ satisfies $E_{M/M^{\perp}} \cong \frac{\ker(\rho_M)^\perp}{\ker(\rho_M)}$.
    \end{enumerate}
\end{prop}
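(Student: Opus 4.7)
The plan is to treat the six parts in order, with parts 1--3 following quickly from the existing framework and parts 4--6 requiring more care.

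For parts 1--3: Part 1 is immediate from Lemma \ref{lem:surjective_res} applied to $M$, since $M = M^{\perp\perp}$ ensures $P/M$ is torsion-free, making restriction $P^* \to M^*$ surjective. Part 2 is a direct unwinding: $\rho_M([\alpha]) = 0$ precisely when $\alpha|_M = \Omega(\cdot,p)|_M$ for some $p \in P$, and then $\alpha - \Omega(\cdot,p)$ is a representative vanishing on $M$. For part 3, using part 2 together with the preceding theorem, I would choose representatives $\alpha,\beta$ of classes in $\ker\rho_M$ so that $\alpha_\rightarrow,\beta_\rightarrow \in (M^\perp)_\rightarrow$ and $\alpha_\leftarrow,\beta_\leftarrow \in (M^\perp)_\leftarrow$. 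Because $M^\perp$ is isotropic, $\Omega|_{M^\perp \times M^\perp} = 0$, and by flatness of $R_\rightarrow, R_\leftarrow$ over $R$ the tensored extensions also vanish. Hence both summands in $b(\alpha,\beta) = \omega(\alpha_\rightarrow,\beta_\leftarrow) + \omega(\beta_\rightarrow,\alpha_\leftarrow)$ vanish, and setting $\alpha = \beta$ yields $q(\alpha) = 0$.

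Part 4 is the main obstacle. The inclusion $\ker\rho_{M^\perp} \subset \ker\rho_M^\perp$ works analogously to part 3, using $\Omega(M^{\perp\perp},M^\perp) = 0$. For the reverse inclusion I would first record the key observation that $\ker\rho_M = \ker\rho_{M^{\perp\perp}}$: if $\alpha \in P^*$ vanishes on $M$, the induced map $P/M \to R$ restricted to $M^{\perp\perp}/M$ goes from a torsion module (by 3. of Proposition \ref{prop: quasi-sym}) to the torsion-free $R$, hence vanishes. Now take $[\alpha] \in \ker\rho_M^\perp$. For every $\beta \in (P/M)^*$ and every $\lambda \in \mathbb{Z}$, the translate $x^\lambda \beta$ still lies in $(P/M)^*$, so $b(\alpha, x^\lambda \beta) = 0$. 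Unpacking the definition of $b$ and comparing coefficients, this system of equations forces $\Omega(\alpha_\rightarrow, \beta_\leftarrow) \in R_\leftarrow$ to agree coefficient-wise with $\Omega(\alpha_\leftarrow, \beta_\rightarrow) \in R_\rightarrow$ (the two being related through anti-Hermiticity of $\Omega$), hence to lie in $R_\leftarrow \cap R_\rightarrow = R$. By Proposition \ref{prop:finite_line_criterion}, 1.(b) applied to $M^{\perp\perp}$ in place of $M$ -- permissible because $(P/M)^* = (P/M^{\perp\perp})^*$ by the first observation -- this gives $\alpha_\rightarrow \in P + (M^{\perp\perp})_\rightarrow$. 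Shifting $\alpha$ by an element of $P$ produces a representative with $\alpha_\rightarrow \in (M^{\perp\perp})_\rightarrow$, so $[\alpha] \in \ker\rho_{M^\perp}$.

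Part 5 follows immediately by specializing parts 3 and 4 to $M = M^\perp$. For part 6, I would construct the isomorphism by sending $[\alpha] \in \ker\rho_M^\perp = \ker\rho_{M^\perp}$ (using part 4) to $[\alpha|_M] \in E_{M/M^\perp}$, where $\alpha$ is a representative vanishing on $M^\perp$. Since $M^\perp \subset M = M^{\perp\perp}$, the restriction $\alpha|_M$ belongs to $(M/M^\perp)^* \subset M^*$. The remaining ambiguity in $\alpha$ consists of adding $\Omega(\cdot,p)$ with $\Omega(M^\perp,p)=0$, i.e.\ with $p \in M$, which shifts $\alpha|_M$ by the $\Omega$-image of $[p] \in M/M^\perp$, so $[\alpha|_M]$ is well-defined in $E_{M/M^\perp}$. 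Surjectivity follows from Lemma \ref{lem:surjective_res}: any $\psi \in (M/M^\perp)^* \subset M^*$ extends to some $\alpha \in P^*$, and the condition $\alpha|_{M^\perp}=0$ is automatic because $\psi|_{M^\perp}=0$ and $M^\perp \subset M$. The kernel consists of classes $[\alpha]$ with $\alpha|_M = \Omega(\cdot,m)|_M$ for some $m \in M$, which is precisely $\ker\rho_M$ by part 2.
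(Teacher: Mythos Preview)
Your proof is correct and follows essentially the same route as the paper's. The only differences are presentational: for parts 2 and 6 the paper packages the same arguments into commutative diagrams (snake lemma for part 2, a two-row diagram with surjective vertical arrows for part 6), while you unwind them directly; and in part 4 your auxiliary observation $\ker\rho_M=\ker\rho_{M^{\perp\perp}}$ is not actually needed, since the paper applies Proposition~\ref{prop:finite_line_criterion}~1.(b) to $M$ itself (recall $M_\rightarrow=(M^{\perp\perp})_\rightarrow$), but your version works equally well.
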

\begin{proof}
    1. See Lemma \ref{lem:surjective_res}.

    2. We have the commutative diagram with exact rows
    \begin{equation}
\begin{tikzcd}[row sep=3em, column sep=3.5em]
0 \arrow[r] & M^{\perp} \arrow[r] \arrow[d, hookrightarrow] & P \arrow[r] \arrow[d, hookrightarrow] & P/M^{\perp} \arrow[r] \arrow[d, hookrightarrow] & 0 \\
0 \arrow[r] & (P/M)^* \arrow[r] & P^* \arrow[r] & M^* &
\end{tikzcd}
\end{equation}
in which the bottom row is obtained by applying the functor $*$ to the short exact sequence $0 \to M \to P \to P/M \to 0$, $(P/M)^*$ is identified with the elements of $P^*$ vanishing on $M$ and the vertical arrows are given by $\Omega$. The snake lemma provides the exact sequence
\begin{equation}
    0 \to \frac{(P/M)^*}{M^\perp} \to E_P \xrightarrow{\rho_M} Q_M.
\end{equation}

3. If $\alpha \in (P/M)^*$, then:
\begin{equation}
 \alpha_\rightarrow \in (M^{\perp})_\rightarrow, \qquad \alpha_\leftarrow \in (M^\perp)_\leftarrow   .
\end{equation}
If $M^{\perp}$ is isotropic, it follows immediately that $q(\alpha)=0$. 

4. If $\alpha \in (P/M)^*$ and $\beta \in (P/M^{\perp})^*$, then the argument from the proof of 3. shows that $b(\alpha, \beta)=0$. The description of $\ker(\rho_M)$ in 2. shows that 
\begin{equation}
    \ker(\rho_{M^\perp}) \subset \ker(\rho_M)^{\perp}.
    \label{eq:kerrho_orthogonality_inclusion}
\end{equation}
For the opposite inclusion, suppose that $\alpha \in \ker(\rho_M)^\perp$ and $\beta \in (P/M)^*$. Thus $\beta$ represents a class in $\ker(\rho_M)$. The equality $b(\alpha,\beta)=0$ holds also if $\beta$ is replaced by $x^n \beta$ for any $n \in \mathbb Z$, so we have
\begin{equation}
    R_\leftarrow \ni \Omega(\alpha_\rightarrow, \beta_\leftarrow) = \Omega(\alpha_\leftarrow, \beta_\rightarrow) \in R_\rightarrow.
\end{equation}
It follows that $\Omega(\alpha_\rightarrow,\beta_\leftarrow) \in R_\rightarrow \cap R_\leftarrow = R$. By Proposition \ref{prop:finite_line_criterion}, 2. (b), we have $\alpha_\rightarrow \in P + M_\rightarrow$. We can replace $\alpha$ by another one in the same class to ensure that $\alpha_\rightarrow \in M_\rightarrow$. Then $\alpha$ vanishes on $M^\perp$, so the class of $\alpha$ is in $\ker(\rho_{M^\perp})$. 

5. is immediate from 3. and 4.

6. Surjective restriction map $P^* \to M^*$ induces a surjection $(P/M^{\perp})^* \to (M/M^{\perp})^*$. By the previous points, we have the following commutative diagram with exact rows:
\begin{equation}
\begin{tikzcd}
0 \arrow[r] & M \arrow[r] \arrow[d, two heads] & (P/M^{\perp})^* \arrow[r] \arrow[d, two heads] & \ker(\rho_M)^\perp \arrow[r] \arrow[d, "f"] & 0 \\
0 \arrow[r] & M/M^{\perp} \arrow[r] & (M/M^{\perp})^* \arrow[r] & E_{M/M^{\perp}} \arrow[r] & 0
\end{tikzcd}
\end{equation}
An inspection of the diagram shows that the third arrow, denoted by $f$, is also surjective. We claim that $\ker(f)=\ker(\rho_M)$. In one direction, a class which admits a representative vanishing on $M$ clearly belongs to the kernel of $f$. Conversely, suppose that $\alpha \in P^*$ vanishes on $M^\perp$ and represents a class in $\ker(f)$. Chasing the diagram we see that there exists $m \in M$ such that $\left. \alpha(\cdot) \right|_M = \left. \Omega(\cdot, m ) \right|_M$. Hence $\alpha(\cdot) - \Omega(\cdot, m)$ represents the same class and vanishes on $M$; that is, the class of $\alpha$ is in $\ker(\rho_M)$.
\end{proof}

\begin{prop} \label{prop:OmegaT}
Let $T \subset E_P$ be an isotropic submodule, i.e., $\left. q \right|_T =0$. Let
\begin{equation}
    P^T = \{ \alpha \in P^* \, | \alpha \text{ represents a class in } T \},
\end{equation}
and consider $P$ as a submodule of $P^T$ via the embedding $p \mapsto \Omega(\cdot,p)$. Then $\Omega$ extends uniquely to a sesquilinear form $P^T \times P^T \to R$. This extension is given explicitly by
\begin{equation}
    \Omega^T (\alpha, \beta) = \Omega(\alpha_\leftarrow,\beta_\rightarrow), 
    \label{eq:Omega_extension}
\end{equation}
and makes $P^T$ a quasi-symplectic module. We have an isomorphism
\begin{equation}
    E_{P^T} \cong \frac{T^\perp}{T}. 
    \label{eq:EPT}
\end{equation}
In particular, if $T$ is Lagrangian in $E_P$, then $P^T$ is symplectic. 
\end{prop}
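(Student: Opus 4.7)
The plan is to take \eqref{eq:Omega_extension} as the definition of $\Omega^T$, verify its properties directly, and then compute $E_{P^T}$ by analyzing the restriction $(P^T)^* \to P^*$.

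\textbf{Existence, uniqueness, and quasi-symplectic property.} The only delicate point in defining $\Omega^T$ via \eqref{eq:Omega_extension} is that $\Omega(\alpha_\leftarrow, \beta_\rightarrow)$ lies a priori only in $R_\rightarrow$. Combining the two formulas for $b$ in \eqref{eq:b_def} with antihermiticity of $\Omega$, one sees that $b(\alpha, x^\lambda \beta)$ is the coefficient of $x^{-\lambda}$ in $\Omega(\alpha_\rightarrow, \beta_\leftarrow) - \Omega(\alpha_\leftarrow, \beta_\rightarrow) \in \widehat R$. Since $T$ is isotropic under $q$ and closed under the $R$-action, all these coefficients vanish, forcing the equality $\Omega(\alpha_\rightarrow, \beta_\leftarrow) = \Omega(\alpha_\leftarrow, \beta_\rightarrow)$ in $\widehat R$; the first expression lies in $R_\leftarrow$ and the second in $R_\rightarrow$, so both lie in $R_\rightarrow \cap R_\leftarrow = R$. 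Sesquilinearity then follows from $R$-linearity of the maps $\alpha \mapsto \alpha_\leftarrow$ and $\beta \mapsto \beta_\rightarrow$, and $\Omega^T$ extends $\Omega$ because $\alpha_\leftarrow = \alpha_\rightarrow = p$ whenever $\alpha = \Omega(\cdot, p)$ for $p \in P$. For uniqueness, given $\alpha, \beta \in P^T$, torsion-ness of $T$ supplies regular $r, s \in R$ with $r\alpha, s\beta \in P$; any sesquilinear extension $\widetilde\Omega$ then satisfies $\overline{r} s \, \widetilde\Omega(\alpha,\beta) = \Omega(r\alpha, s\beta)$, and is determined by regularity of $\overline{r} s$ in $R$. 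The constant term of $\Omega^T(\alpha, \alpha) = \Omega(\alpha_\leftarrow, \alpha_\rightarrow)$ equals $-q(\alpha) = 0$ by isotropy; nondegeneracy is immediate from $\Omega^T(p, \beta) = -\overline{\beta(p)}$ for $p \in P$, which forces $\beta|_P = 0$ and hence $\beta = 0$.

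\textbf{The isomorphism $E_{P^T} \cong T^\perp / T$.} Applying $\Hom_R(\overline{\,\cdot\,}, R)$ to $0 \to P \to P^T \to T \to 0$ and noting $T^* = 0$ (since $T$ is finitely generated torsion over the torsion-free ring $R$), the restriction $(P^T)^* \to P^*$ is injective. I would identify its image with $I := \{\alpha \in P^* : [\alpha] \in T^\perp\}$. For the forward inclusion, given $\gamma \in (P^T)^*$ with $\alpha := \gamma|_P$, functoriality of $\otimes_R R_\rightarrow$ together with $P^T \otimes_R R_\rightarrow \cong P_\rightarrow$ (flatness of $R_\rightarrow$ plus torsion of $P^T/P$) identifies the tensor extensions of $\gamma$ and $\alpha$ on $P_\rightarrow$; evaluating at $\beta \in P^T$ yields $\widetilde\alpha(\beta_\rightarrow) = \gamma(\beta) = \widetilde\alpha(\beta_\leftarrow) \in R$, where $\widetilde\alpha : \widehat P \to \widehat R$ is the extension of $\alpha$. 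Hence $\widetilde\alpha(\beta_\rightarrow - \beta_\leftarrow) = 0$ in $\widehat R$, and comparing coefficients gives $[\alpha] \in T^\perp$. The converse reverses this: $[\alpha] \in T^\perp$ forces $\widetilde\alpha(\beta_\rightarrow) = \widetilde\alpha(\beta_\leftarrow) \in R_\rightarrow \cap R_\leftarrow = R$, and then $\widetilde\gamma(\beta) := \widetilde\alpha(\beta_\rightarrow)$ is an $R$-antilinear extension of $\alpha$ to $P^T$. Therefore $E_{P^T} = (P^T)^*/P^T \cong I/P^T \cong T^\perp / T$, using $P^T/P = T \subset E_P$. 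If $T$ is Lagrangian, then $T = T^\perp$ gives $E_{P^T} = 0$, so $P^T$ is symplectic.

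\textbf{Main obstacle.} The critical step is identifying the image of $(P^T)^* \hookrightarrow P^*$ with $\{\alpha : [\alpha] \in T^\perp\}$. It requires careful bookkeeping across the rings $R, R_\rightarrow, R_\leftarrow, \widehat R$ and systematic use of tensor-product functoriality, so that the extensions of a functional to $R_\rightarrow$ and $R_\leftarrow$ can be meaningfully compared inside $\widehat R$.
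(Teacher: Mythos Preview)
Your proof is correct and follows essentially the same route as the paper's. Both arguments define $\Omega^T$ by \eqref{eq:Omega_extension}, use $b$-orthogonality of $T$ with its translates to force $\Omega(\alpha_\leftarrow,\beta_\rightarrow)=\Omega(\alpha_\rightarrow,\beta_\leftarrow)\in R_\rightarrow\cap R_\leftarrow=R$, invoke torsion of $P^T/P$ for uniqueness, and identify the image of $(P^T)^*\hookrightarrow P^*$ with $P^{T^\perp}$. The only stylistic difference is that the paper writes the extension of $\beta\in P^*$ to $P^T$ directly as $\alpha\mapsto\Omega(\alpha_\leftarrow,\beta_\rightarrow)$, whereas you package the same computation via functoriality of $\otimes_R R_\rightarrow$ and the identification $P^T\otimes_R R_\rightarrow\cong P_\rightarrow$. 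One small slip: $\Omega^T(p,\beta)=\beta(p)$, not $-\overline{\beta(p)}$ (you may be thinking of $\Omega^T(\beta,p)$); this does not affect the nondegeneracy argument. You are in fact slightly more explicit than the paper on the nondegeneracy check.
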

\begin{proof}
Since $P^T/P$ is torsion, there exists at most one $R$-valued sesquilinear extension of $\Omega$ to $P^T$. The formula \eqref{eq:Omega_extension} defines a sesquilinear form, a~priori valued in $R_\rightarrow$. We will show that it turns $P^T$ into a~quasi-symplectic module. 

Let $\alpha,\beta \in P^*$ represent classes in $T$. Equality $q(\alpha)=0$ implies that the constant term of $\Omega^T(\alpha,\alpha)$ equals zero. All translates of $\alpha, \beta$ are $b$-orthogonal because $T$ is a~submodule, so
\begin{equation}
R_\rightarrow \ni \Omega(\alpha_\leftarrow,\beta_\rightarrow) = \Omega(\alpha_\rightarrow,\beta_\leftarrow) \in R_\leftarrow.  
\label{eq:OmegaT_rightleft}
\end{equation}
It follows that $\Omega^T(\alpha,\beta) \in R$. Anti-hermicity of $\Omega^T$ is also readily verified with \eqref{eq:OmegaT_rightleft}, though this is actually redundant. 

To prove \eqref{eq:EPT}, first note that the restriction map $(P^T)^* \to P^*$ is injective. We~claim that its image is $P^{S}$, where $S=T^{\perp}$. Indeed, if $\beta \in P^*$, then $\alpha \mapsto \Omega(\alpha_\leftarrow,\beta_\rightarrow)$ is the unique $R_\rightarrow$-valued extension of $\beta$ to $P^T$. This extension is $R$-valued if and only if $\beta \in P^S$, by an argument analogous to that in \eqref{eq:OmegaT_rightleft}. Hence,
\begin{equation}
    (P^T)^*/P^T \cong P^S / P^T \cong S/T. 
\end{equation}
\end{proof}

\begin{thm} \label{thm:EP_metabolic_criterion}
The following conditions are equivalent:
\begin{enumerate}
    \item $E_P$ admits a Lagrangian submodule,
    \item $P$ can be embedded in a symplectic module $P'$, with $P'/P$ a torsion module,
    \item $P$ admits a Lagrangian submodule.
\end{enumerate}
Moreover, these conditions are always satisfied upon sufficient coarse-graining. In~particular, $E_P$ admits a Lagrangian subgroup.  
\end{thm}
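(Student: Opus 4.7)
The plan is to establish the cyclic implications $(3) \Rightarrow (1) \Rightarrow (2) \Rightarrow (3)$, and then deduce the moreover clause from Proposition~\ref{prop:dim1_Witt_Haah}(2).

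For $(3) \Rightarrow (1)$, I would simply apply Proposition~\ref{prop:Lag_induces_Lag}(5) to a Lagrangian submodule $L \subset P$: it produces a Lagrangian submodule $\ker(\rho_L) \subset E_P$ with respect to $(b,q)$. For $(1) \Rightarrow (2)$, I would apply Proposition~\ref{prop:OmegaT} to a Lagrangian $T \subset E_P$: the resulting quasi-symplectic module $P^T$ has $E_{P^T} \cong T^\perp/T = 0$, hence is symplectic; it contains $P$, and $P^T/P \subset P^*/P = E_P$ is finite, in particular torsion. Thus $P' := P^T$ witnesses (2).

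The main work lies in $(2) \Rightarrow (3)$. Given a symplectic $P' \supset P$ with $P'/P$ torsion, Proposition~\ref{prop:dim1_Witt_Haah}(1) supplies a Lagrangian submodule $L' \subset P'$. I claim that $L := L' \cap P$ is Lagrangian in $P$. Isotropy of $L$ is inherited from $L'$ since $\Omega$ agrees with $\Omega'$ on $P$. For the reverse containment $L^\perp \subset L$, let $p \in P$ satisfy $\Omega(p, l) = 0$ for all $l \in L$. For an arbitrary $l' \in L'$, the torsion hypothesis yields a regular $r \in R$ with $rl' \in P$; then $rl' \in L' \cap P = L$, so by sesquilinearity
\begin{equation}
    r \, \Omega'(p, l') \;=\; \Omega'(p, rl') \;=\; \Omega(p, rl') \;=\; 0.
\end{equation}
Since $r$ is a non-zero-divisor in $R$, this forces $\Omega'(p, l') = 0$. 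Nondegeneracy of the symplectic form $\Omega'$ on $P'$ then gives $p \in L'^\perp = L'$, and combined with $p \in P$ we obtain $p \in L$.

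For the moreover clause, Proposition~\ref{prop:dim1_Witt_Haah}(2) asserts that $P$ becomes metabolic upon sufficient coarse-graining, i.e.\ condition (3) holds when $P$ is regarded as an $R_\Lambda$-module for a suitable finite-index $\Lambda \subset \mathbb Z$. Applying the equivalence over $R_\Lambda$, condition (1) also holds: the coarse-grained $E_P$ admits a Lagrangian $R_\Lambda$-submodule. Because the underlying abelian group $E_P$ and the forms $b, q$ are invariant under coarse-graining, this submodule is in particular a Lagrangian subgroup of $(E_P, b, q)$. I expect $(2) \Rightarrow (3)$ to be the only delicate step, since the other implications follow directly from results already in hand; there one must carefully combine the torsion of $P'/P$ with regularity of annihilating elements to transfer the Lagrangian condition from $P'$ back to $P$.
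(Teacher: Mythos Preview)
Your proof is correct and follows essentially the same route as the paper: $(3)\Rightarrow(1)$ via Proposition~\ref{prop:Lag_induces_Lag}, $(1)\Rightarrow(2)$ via Proposition~\ref{prop:OmegaT}, and $(2)\Rightarrow(3)$ by intersecting a Lagrangian $L'\subset P'$ (supplied by Proposition~\ref{prop:dim1_Witt_Haah}) with $P$, with the moreover clause from Proposition~\ref{prop:dim1_Witt_Haah}(2). The paper asserts $L'\cap P$ is Lagrangian without further comment; you spell out the verification using the torsion of $P'/P$ and regularity, which is exactly right. One minor remark: the line ``Nondegeneracy of the symplectic form $\Omega'$ on $P'$ then gives $p\in L'^\perp=L'$'' slightly misattributes the reason---the equality $L'^\perp=L'$ is the Lagrangian property of $L'$, not nondegeneracy per se---but the logic is sound.
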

\begin{proof}
1. $\implies$ 2. by Proposition \ref{prop:OmegaT}, and 3. $\implies$ 1. by Proposition \ref{prop:Lag_induces_Lag}. We prove that 2. $\implies$ 3. $P'$ admits a Lagrangian submodule $L'$ by Proposition \ref{prop:dim1_Witt_Haah}. Then $L = L' \cap P$ is a Lagrangian submodule of $P$.

The final claim is included in Proposition \ref{prop:dim1_Witt_Haah}.
\end{proof}

    \section{Stabilizer codes in half-spaces}
    
    \subsection{Modules of boundary operators}

    In this section we consider stabilizer codes in the geometry of a half-space,
    \begin{equation}
    \Lambda_{\geq 0} := \{a \in \mathbb{Z}^d \mid v \cdot a \geq 0\} \;\cong\; \mathbb{Z}^{d-1} \times \mathbb{N},
    \label{eq:half_space}
    \end{equation}
    where $v \in \mathbb{Z}^d$ is a fixed nonzero vector. This geometry does not possess the full $\mathbb Z^d$ translation symmetry, but one may still ask for stabilizer codes invariant under translations parallel to the boundary of $\Lambda_{\geq 0}$, i.e.\ perpendicular to $v$. Such translations form a group isomorphic to~$\mathbb Z^{d-1}$, and the corresponding group ring $R_\partial$ is a Laurent polynomial ring in $d-1$ variables.

    Let $R = \mathbb Z_n[x_1^{\pm 1},\dots,x_d^{\pm 1}]$ and let $P \cong R^{2q} \cong \bigoplus_{\lambda \in \mathbb Z^d} \mathbb Z_n^{2q}$ be a standard symplectic module describing Pauli operators in $\mathbb Z^d$. Then
    \begin{equation}
        P_{\geq 0} = \bigoplus_{\lambda \in \Lambda_{\geq 0}} \mathbb Z_n^{2q} \subset P
        \label{eq:P_halfspace}
    \end{equation}  
    describes Pauli operators in $\Lambda_{\geq 0}$. Note that $P_{\geq 0}$ is an $R_\partial$-submodule of $P$, but not an $R$-submodule. It is not finitely generated over $R_\partial$.

    Let $L \subset P$ be a Lagrangian $R$-submodule, corresponding to a stabilizer code in the full space $\mathbb Z^d$. We are interested in constructing a corresponding stabilizer code in the half-space geometry; that is, a Lagrangian subgroup of $P_{\geq 0}$ with respect to the restricted form $\left. \omega \right|_{P_{\geq 0}}$. We would like to preserve translation symmetry parallel to the boundary, so we look for Lagrangian $R_{\partial}$-submodules.

    It is a natural requirement that the half-space Lagrangian contains
    \begin{equation}
        L_{\geq 0} := L \cap P_{\geq 0},
    \end{equation}
    sometimes called the group of bulk stabilizers. It is apparent that $L_{\geq 0}$ is isotropic, but typically not Lagrangian in $P_{\geq 0}$. Completing it to a Lagrangian submodule requires adjoining additional stabilizers acting near the boundary of $\Lambda_{\geq 0}$; these can be interpreted as specifying boundary conditions. In general, there are many possible choices of boundary conditions, but it is not always possible to choose translation-invariant boundary conditions. 

    Every Lagrangian subgroup containing $L_{\geq 0}$ is contained in $L_{\geq 0}^\perp \cap P_{\geq 0}$, the orthogonal complement of $L_{\geq 0}$ within $P_{\geq 0}$. Since groups $M$ satisfying $L_{\geq 0} \subset M \subset L_{\geq 0}^\perp$ are canonically in a one-to-one correspondence with subgroups of $L_{\geq 0}^\perp / L_{\geq 0}$, we are naturally led to the following definition.
    
    \begin{defn}
The \textbf{boundary operator module} is defined as
\begin{equation}
    P_{\partial} := \left( L_{\geq 0}^{\perp} \cap P_{\geq 0} \right) / L_{\geq 0}.
\end{equation}

 The form $\omega: P\times P\rightarrow \mathbb{Z}_{n}$ descends to an alternating
    form $\omega_{\partial}: P_{\partial}\times P_{\partial} \rightarrow \mathbb{Z}
    _{n}$: given two classes
    $p_{1}+L_{\geq 0}, p_{2}+L_{\geq 0}\in P_{\partial},$
    \begin{equation}
        \omega_{\partial}(p_{1}+L_{\geq 0}, p_{2}+L_{\geq 0}):=\omega(p_{1}, p_{2}
        ).
    \end{equation}
    
        This form $\omega_{\partial}$ encodes the commutation relations of Pauli operators supported in $\Lambda_{\geq 0}$ and commuting with bulk stabilizers. It can be extended uniquely to an $R_\partial$-sesquilinear form $\Omega_\partial$ whose constant term is $\omega_\partial$.
\end{defn}

 It will be shown below that $P_{\partial}$ is a quasi-symplectic $R_\partial$-module. It describes degrees of freedom localized near the boundary of $\Lambda_{\geq 0}$. By construction, Lagrangian subgroups of $P_{\geq 0}$ containing $L_{\geq 0}$ correspond to Lagrangian subgroups of~$P_{\partial}$. This raises the question whether $P_\partial$ admits a Lagrangian submodule. A~negative answer is already possible in dimension $d=2$, as illustrated by Wen's plaquette model \cite{WenPlaquette}. However, the theory of Witt groups developed in Section~\ref{sec:weak_witt} shows that for $d=2$, corresponding to one-dimensional boundary, a~Lagrangian submodule of $P_\partial$ always exists after sufficient coarse-graining. In~other words, there exist boundary conditions invariant under a finite index subgroup of the group of translations parallel to the boundary.

 We will now provide details of the constructions outlined above. First, let us review some facts about projections in
    Pauli modules. 
    Given $S \subset \mathbb{Z}^{d}$,
    $\pi_{S}: P \rightarrow P$ denotes the projection onto $P_{S} \cong \bigoplus_{\lambda \in S} \mathbb Z_n^{2q}$, the subgroup
    of Pauli operators (modulo phase) supported on $S$. This projection acts by
    setting to zero all components outside $S$. 
    \begin{lem} \label{lem: orthogonal}
        For any $S\subset \ZZ^{d}$, $\pi=\pi_{S}: P\rightarrow P$ is a group
        homomorphism satisfying:
        \begin{enumerate}
            \item (Projection) $\pi\circ \pi=\pi$, and,

            \item (Orthogonality) $\omega(\ker \pi, \pi P)=0$.
        \end{enumerate}
    \end{lem}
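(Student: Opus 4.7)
The plan is to unwind the coordinate description of $\pi_S$ and then invoke the explicit formula for $\omega$ in terms of the tuples representing Pauli operators. Under the identification $P \cong \bigoplus_{\lambda \in \mathbb{Z}^d} \mathbb{Z}_n^{2m}$, the map $\pi_S$ sends a tuple $(p_\lambda)_{\lambda \in \mathbb{Z}^d}$ to the tuple whose entry at $\lambda$ equals $p_\lambda$ when $\lambda \in S$ and vanishes otherwise.

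First I would observe that $\pi_S$ is a group homomorphism because the operation acts componentwise, and that $\pi_S \circ \pi_S = \pi_S$ is immediate: applying $\pi_S$ to a tuple already supported in $S$ leaves it unchanged. This establishes claim 1.

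For claim 2, I would note that $\ker \pi_S$ consists of tuples supported in $\mathbb{Z}^d \setminus S$, while $\pi_S P$ consists of tuples supported in $S$, so the supports of representatives of the two subgroups are disjoint. Now invoke the explicit formula $\omega(p, p') = \sum_{\lambda \in \mathbb{Z}^d} \sum_{j=1}^m (b_{\lambda, j} a'_{\lambda, j} - a_{\lambda, j} b'_{\lambda, j})$. For $p \in \ker \pi_S$ with entries $(a_{\lambda,j}, b_{\lambda,j})$ and $p' \in \pi_S P$ with entries $(a'_{\lambda,j}, b'_{\lambda,j})$, every term of the double sum vanishes: when $\lambda \in S$ the factors $a_{\lambda, j}, b_{\lambda, j}$ are zero, and when $\lambda \notin S$ the factors $a'_{\lambda, j}, b'_{\lambda, j}$ are zero. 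Hence $\omega(p, p') = 0$, proving claim 2.

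There is no real obstacle; the lemma is essentially a direct translation of the fact that $\omega$ decomposes as an on-site sum so that operators with disjoint support commute. The only care required is keeping track of which factors in each summand vanish for a given $\lambda$.
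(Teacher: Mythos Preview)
Your proof is correct. The paper states this lemma without proof, treating it as an immediate consequence of the definitions; your argument simply spells out what the paper leaves implicit, namely that $\omega$ is an on-site sum and $\pi_S$ is the coordinate projection onto the summands indexed by $S$.
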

 
         \begin{lem} \label{lem: double_perp}
                 Given a Lagrangian subgroup $L\subset P$ and a map $\pi : P \to P$ with the properties 1. and 2. stated in Lemma~\ref{lem: orthogonal}, let $L_{\pi}:= L\cap \pi P$. Then:
        \begin{enumerate}
            \item $L_{\pi}=(\pi L)^{\perp}\cap \pi P ,$
            \item $(\pi L)^{\perp\perp}= L_{\pi}^{\perp}\cap \pi P,$
        \item $L_{\pi}^{\perp\perp}
        =L_{\pi}.$
        \end{enumerate}
        
         \end{lem}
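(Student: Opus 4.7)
The plan is to first extract a structural consequence of the hypotheses on $\pi$, and then prove the three statements in the order (1), (3), (2), since (3) admits a short direct argument that bypasses (2). Property 1 gives the decomposition $P = \pi P \oplus \ker \pi$, and property 2 gives $\ker \pi \subset (\pi P)^\perp$. Because $\omega$ is non-degenerate on the standard symplectic module $P$, the orthogonal decomposition forces the radicals $\pi P \cap (\pi P)^\perp$ and $\ker \pi \cap (\ker \pi)^\perp$ to vanish, and the orthogonality inclusions sharpen to equalities $(\pi P)^\perp = \ker \pi$ and $(\ker \pi)^\perp = \pi P$. These identifications will be used throughout.

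For (1), I would decompose elements along $P = \pi P \oplus \ker \pi$. The inclusion $L_\pi \subset (\pi L)^\perp \cap \pi P$ follows by writing $\omega(x, \pi l) = \omega(x, l) - \omega(x, l - \pi l)$ for $x \in L_\pi$ and $l \in L$: the first term vanishes because $L$ is isotropic and the second by property 2. For the reverse, given $x \in (\pi L)^\perp \cap \pi P$ and $l \in L$, the same decomposition yields $\omega(x, l) = 0$, hence $x \in L^\perp = L$, and combined with $x \in \pi P$ this gives $x \in L_\pi$. For (3), $L_\pi \subset L_\pi^{\perp \perp}$ is formal; for the reverse, applying $\perp \perp$ to the two inclusions $L_\pi \subset L$ and $L_\pi \subset \pi P$ yields $L_\pi^{\perp \perp} \subset L^{\perp \perp} = L$ (since $L = L^\perp$) and $L_\pi^{\perp \perp} \subset (\pi P)^{\perp \perp} = \pi P$ (by the structural identification applied twice), so $L_\pi^{\perp \perp} \subset L \cap \pi P = L_\pi$.

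For (2) both inclusions proceed by the same decomposition strategy. For $\supset$, if $x \in L_\pi^\perp \cap \pi P$ and $y \in (\pi L)^\perp$, then $\omega(x, y - \pi y) = 0$ by property 2, so $\omega(x, y) = \omega(x, \pi y)$; a brief check via (1) shows $\pi y \in L_\pi$ (its image under $\pi$ is in $\pi P$, and testing against $\pi l$ reduces to $\omega(y, \pi l) = 0$ by the orthogonality property), hence $\omega(x, \pi y) = 0$ and $x \in (\pi L)^{\perp \perp}$. For $\subset$, $\pi L \subset \pi P$ gives $\ker \pi \subset (\pi L)^\perp$, so $(\pi L)^{\perp \perp} \subset (\ker \pi)^\perp = \pi P$; and (1) gives $L_\pi \subset (\pi L)^\perp$, so $L_\pi^\perp \supset (\pi L)^{\perp \perp}$, which together yield $(\pi L)^{\perp \perp} \subset L_\pi^\perp \cap \pi P$. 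The main subtlety is the structural identification $(\ker \pi)^\perp = \pi P$: it is not purely formal and rests on non-degeneracy of $\omega$ on $P$; once it is established, the remaining arguments are straightforward chases with $L = L^\perp$ and the decomposition induced by $\pi$.
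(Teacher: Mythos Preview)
Your proof is correct. Both your argument and the paper's rest on the orthogonal decomposition $P = \pi P \oplus \ker\pi$ with $(\pi P)^\perp = \ker\pi$, but the organization differs. The paper records as ``standard'' that for $M \subset \pi P$ the double complement within $\pi P$ coincides with $M^{\perp\perp}$ in $P$, and then obtains (2) and (3) as one-line formal manipulations from (1). You instead prove (3) directly via the inclusions $L_\pi \subset L$ and $L_\pi \subset \pi P$ together with $L^{\perp\perp}=L$ and $(\pi P)^{\perp\perp}=\pi P$, and prove (2) by an explicit element chase showing $\pi y \in L_\pi$ for $y \in (\pi L)^\perp$. The paper's route is more compact once the general complement fact is granted; your route is more self-contained and makes the role of non-degeneracy of $\omega$ on $P$ explicit, which the paper leaves implicit in its ``standard properties''.
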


    \begin{proof}
    For a subgroup $M \subset \pi P$, $M^\perp \cap \pi P$ is its orthogonal complement within $\pi P$. We do not write simply $M^\perp$ to avoid confusion with the orthogonal complement in~$P$. We note two standard properties. Firstly, the three-fold orthogonal complement in $\pi P$ coincides with the orthogonal complement. Secondly, double complement of $M \subset \pi P$ within $\pi P$ equals the double complement in $P$.  
        \begin{enumerate}
            \item Given $x\in L_{\pi}$ and $y\in \pi L$, choose $l\in L$ with
        $\pi l= y$. Then 
        \begin{equation}
         \omega(x,y)=\omega(x,l)+\omega(x,y-l)=0,   
        \end{equation}
        where the first
        term vanishes because $x, l \in L$ and the second term vanishes
        by orthogonality of $\pi$. This proves $\subset$. Conversely, if $x \in (\pi L)^\perp \cap \pi P$, then for every $l \in L$
        \begin{equation}
            \omega(x,l) = \omega(\pi x,l) = \omega(x, \pi l ) = 0.
        \end{equation}
        Hence $x \in L^\perp = L$. 
        \item From 1., $ L_\pi^\perp \cap \pi P= \left((\pi L)^{\perp}\cap \pi P\right)^\perp \cap \pi P=(\pi L)^{\perp\perp}.$
        \item From 1., $L_\pi^{\perp\perp}=\left((\pi L)^{\perp}\cap \pi P\right)^{\perp\perp}=(\pi L)^{\perp \perp \perp}\cap \pi P=(\pi L)^\perp \cap \pi P=L_\pi$.
        \end{enumerate}
    \end{proof}

The group $P_{\geq 0}$ defined in \eqref{eq:P_halfspace} is the image of the orthogonal projection $\pi_{\geq 0} $ associated with the half-space $\Lambda_{\geq 0}$. Similarly, we let $P_{<0}$ to be the image of the projection $\pi_{<0} =1 - \pi_{\geq 0}$, corresponding to
\begin{equation}
\Lambda_{<0} := \{a \in \mathbb{Z}^d \mid v \cdot a < 0\}.
\end{equation}
Both $P_{\geq 0}$ and $P_{<0}$ are (infinitely generated) modules over the subring $R_{\partial} \subset R$, which is generated by monomials $x^\lambda$ with exponents $\lambda$ perpendicular to~$v$. 

The remainder of this section assumes the following convention:
\begin{equation}
    \Lambda_{\geq 0}:=\{(a_{1}, \dots, a_{d})\in \ZZ^d:  a_d\geq 0\},
    \label{eq:standard_halfspace}
\end{equation} and 
\begin{equation}
    R_{\partial}:=\mathbb{Z}_{n}[x_{1}
    ^{\pm}, \dots, x_{d-1}^{\pm}], \qquad R= R_\partial [x_d^{\pm}].
\end{equation}
This amounts to choosing a basis of $\mathbb Z^d$ in which the first $d-1$ vectors are orthogonal to $v$. Such choice is always possible. We hope that the use of generators which depend on the half-space will not lead to confusion in discussion where different half-spaces are compared.



In view of the discussion at the end of Section \ref{sec:quasi-symp}, Lemma \ref{lem: double_perp} exhibits $L_\pi^\perp \cap \pi P$ as the topological closure of $\pi L$. In general the closure is really needed, i.e.\ $L_\pi^\perp \cap \pi P$ may strictly contain $\pi L$. By contrast, $L_\pi$ is always closed. Indeed, it is defined as the intersection of two closed subgroups, $L$ and $\pi P$. This discussion motivates the following definition, which appears also in \cite{liang2024operator}.

\begin{defn}
The \textbf{primary boundary operator module} is the submodule
\begin{equation}
    P^\mathrm{pr}_{\partial} := \pi_{\geq 0}L / L_{\geq 0} \subseteq P_{\partial},
\end{equation}
with an analogous definition for the opposite half-space. Elements of $P_{\partial} \setminus P^\mathrm{pr}_{\partial}$ are called \textbf{secondary boundary operators}.
\end{defn}

\begin{remark}
 Let us record some remarks about secondary operators.
\begin{itemize}
    \item The existence of secondary boundary operators depends on the choice of the half-space $\Lambda_{\geq 0}$, and not solely on the stabilizer code.
    
    \item The saturation of $P_{\partial}^{\mathrm{pr}}$ in $P_{\partial}$ is all of $P_\partial$: for every $p \in P_\partial$, there exists a~regular element $r \in R_\partial$ and an element $l \in L$ such that $\pi_{\geq 0}(l)$ represents $rp$; see~Proposition~\ref{prop: quasi-sym}.
    
    \item Every $\alpha \in P_{\partial}^* = \mathrm{Hom}_{R_{\partial}}(\overline{P_\partial}, R_{\partial})$ is uniquely determined by its restriction to $P_{\partial}^{\mathrm{pr}}$. However, not every $\alpha \in (P_{\partial}^{\mathrm{pr}})^*$ extends to  $P_\partial^*$. In other words, we have an injection $P_\partial^* \to (P_\partial^{\mathrm{pr}})^*$, in general not surjective.
    
    \item The modules $P_\partial$ and $P_\partial^{\mathrm{pr}}$ are Witt equivalent by Lemma~\ref{lem:reduction_Witt_equivalent}.
    
    \item The extended modules $E_{P_\partial}$ and $E_{P_{\partial}^{\mathrm{pr}}}$ need not be equal. In the example below, $P_\partial$ is symplectic while $E_{P_{\partial}^{\mathrm{pr}}} \neq 0$.
\end{itemize}
   
\end{remark}

\begin{exmp}
Let $P = \mathbb Z_2[x_1^{\pm 1}, x_2^{\pm 1}]$ and let $L$ be the image of
\begin{equation}
    \sigma = \begin{pmatrix}
        (1 + \overline x_1) \overline x_2 + (1+x_1) x_2 \\ 1
    \end{pmatrix}.
\end{equation}
We consider the standard half-space geometry \eqref{eq:standard_halfspace}. Then $\pi_{\geq 0} L$ is generated over $R_{\partial} = \mathbb Z_2[x_1^{\pm 1}]$ by $L_{\geq 0}$ and the elements
\begin{equation}
    e_1 = \begin{pmatrix}
        1+x_1 \\ 0 
    \end{pmatrix}, \qquad e_2 = \begin{pmatrix}
        (1+x_1) x_2 \\ 1
    \end{pmatrix}.
\end{equation}
The module $P_\partial^{\mathrm{pr}}$ is freely generated by $e_1,e_2$, which satisfy
\begin{equation}
    \Omega_\partial(e_1,e_2) = 1 + \overline x_1.
\end{equation}
From this one finds that $E_{P_{\partial}^{\mathrm{pr}}} \cong (R_{\partial} / (x+1))^{\oplus 2} \neq 0$. 

Now let us note that
\begin{equation}
    \sum_{j=0}^{N-1} x_1^j e_1 = \begin{pmatrix}
        1 + x_1^N \\ 0 
    \end{pmatrix} \xrightarrow{N \to \infty} \begin{pmatrix}
        1 \\ 0
    \end{pmatrix} =:  e_1'.
\end{equation}
The limit vector $ e_1'$ is a secondary satisfying $(1+x_1)e_1'=e_1$, and together with $e_2$ it generates $P_\partial$. We~have
\begin{equation}
    \Omega_{\partial}( e_1',e_2) = 1,
\end{equation}
so $P_\partial$ is a standard symplectic module.
\end{exmp}
See \cite{liang2024operator, schuster2023holographic} for more examples of boundary calculations.

        We now state a fundamental result about the structure of $(P_{\partial}, \Omega_\partial)$.

    \begin{thm} \label{thm: fundamental}
        The $R_{\partial}$-module
        $P_{\partial}$ equipped with $\Omega_{\partial}$ is quasi-symplectic.
    \end{thm}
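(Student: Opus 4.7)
The plan is to verify the four defining conditions of a quasi-symplectic $R_\partial$-module in turn: (i) $P_\partial$ is a well-defined $R_\partial$-module with $R_\partial$-sesquilinear $R_\partial$-valued form $\Omega_\partial$, (ii) the constant term of $\Omega_\partial(p,p)$ vanishes, (iii) $\Omega_\partial$ is non-degenerate, and (iv) $P_\partial$ is finitely generated over $R_\partial$. I would dispatch (i) and (ii) routinely: $L_{\geq 0}^\perp \cap P_{\geq 0}$ and $L_{\geq 0}$ are both stable under the $R_\partial$-action of parallel translation, so the quotient is an $R_\partial$-module. The coefficient of $x^\lambda \in R_\partial$ in $\Omega_\partial(p_1,p_2)$ is given by $\omega$ applied after shifting by $\lambda \in \mathbb Z^{d-1}$, and since representatives have finite support in $\mathbb Z^d$ only finitely many $\lambda$'s contribute, so $\Omega_\partial \in R_\partial$. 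Well-definedness modulo $L_{\geq 0}$ reduces to the fact that for $l \in L_{\geq 0}$ and $q \in L_{\geq 0}^\perp \cap P_{\geq 0}$, every parallel shift $x^\lambda q$ still lies in $L_{\geq 0}^\perp$, so $\omega(l, x^\lambda q) = 0$. The constant term of $\Omega_\partial(p,p)$ is $\omega(p,p)=0$ by alternation.

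For non-degeneracy, suppose $p \in L_{\geq 0}^\perp \cap P_{\geq 0}$ satisfies $\Omega_\partial([q],[p])=0$ for all $[q] \in P_\partial$. Unpacking and using parallel-translation stability of $L_{\geq 0}^\perp \cap P_{\geq 0}$, this simplifies to $\omega(q,p)=0$ for every $q \in L_{\geq 0}^\perp \cap P_{\geq 0}$. Now, $\pi_{\geq 0} L \subseteq L_{\geq 0}^\perp \cap P_{\geq 0}$: for $l \in L$ and $l' \in L_{\geq 0}$ one has $\omega(\pi_{\geq 0}l, l') = \omega(l, l') - \omega(\pi_{<0}l, l') = 0$ by Lagrangianness of $L$ and the disjoint support of $\pi_{<0}l$ and $l'$. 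Hence $p \in (\pi_{\geq 0} L)^\perp \cap P_{\geq 0}$, and Lemma~\ref{lem: double_perp}.1 identifies this set with $L_{\geq 0}$, giving $[p]=0$.

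Finite generation is the main obstacle, and I would handle it via the $R$-linear evaluation map $\alpha : P \to L^* = \Hom_R(\overline L, R)$ defined by $\alpha_p(l) = \Omega(l,p)$. Its kernel is $L^\perp = L$, so restricted to $M := L_{\geq 0}^\perp \cap P_{\geq 0}$ the kernel becomes $L \cap M = L_{\geq 0}$, descending to an injection $P_\partial \hookrightarrow L^*$. To control the image, fix $R$-generators $l_1,\ldots,l_k$ of $L$ with $l_i$ supported in $x_d$-range $[\mu_i^-, \mu_i^+]$. The coefficient of $x^\lambda$ in $\alpha_p(l_i)$ equals $\omega(x^\lambda l_i, p)$, and this vanishes in two regimes that do not depend on $p$: when $\lambda_d < -\mu_i^+$ the supports of $x^\lambda l_i$ and $p$ are disjoint in the $x_d$-direction (since $p \in P_{\geq 0}$), and when $\lambda_d \geq -\mu_i^-$ the translate $x^\lambda l_i$ lies in $L_{\geq 0}$, so the coefficient is killed by $p \in L_{\geq 0}^\perp$. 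Hence $\alpha_p(l_i)$ lies in the free $R_\partial$-module $\bigoplus_{n \in [-\mu_i^+, -\mu_i^--1]} R_\partial x_d^n$. Composing with the injection $L^* \hookrightarrow R^k$ given by evaluation at $(l_1,\ldots,l_k)$, we exhibit $P_\partial$ as an $R_\partial$-submodule of the finitely generated $R_\partial$-module $\bigoplus_{i,n} R_\partial x_d^n$. Noetherianity of $R_\partial$ then gives finite generation of $P_\partial$.

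The delicate ingredient is this two-sided bound on the $x_d$-support of $\alpha_p(l_i)$: the upper bound is geometric (finite support of $l_i$ plus positivity of $p$), while the lower bound crucially trades far-inside shifts of $l_i$, which belong to $L_{\geq 0}$, for zero using $p \in L_{\geq 0}^\perp$. Everything else in the proof is essentially formal manipulation of the definitions combined with Lemma~\ref{lem: double_perp}.
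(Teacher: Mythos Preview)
Your proof is correct and rests on the same two-sided support bound as the paper's: far shifts of the generators $l_i$ land either in $P_{<0}$ (disjoint from $p\in P_{\geq 0}$) or in $L_{\geq 0}$ (orthogonal to $p\in L_{\geq 0}^\perp$). The only cosmetic difference is the target of the resulting embedding of $P_\partial$ into a finitely generated $R_\partial$-module: the paper projects $P_\partial$ injectively onto a finite-width slab $P_{\Lambda_M}$, while you factor through the evaluation map into $L^*$ and land in a finite-rank free $R_\partial$-submodule of $R^k$.
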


    \begin{proof}
        The only property of $(P_\partial,\Omega_\partial)$ which requires an argument is that $P_{\partial}$ is finitely generated over $R_\partial$.
        Choose a finite set $\{l_{1}, l_{2}, \dots, l_{k}\}$ of $R$-generators for
        $L$. Notice that $\pi_{\geq 0}L$ is generated over $R_{\partial}$ by the infinite
        set
        $\{\pi_{\geq 0}x_{d}^{i_1}l_{1}, \dots, \pi_{\geq 0}x_{d}^{i_k}l_{k}: i_{1}
        , \dots, i_{k} \in \ZZ\}$. Since each generator $l_i$ is supported in a~bounded region, there exist\footnote{The quantity $M_i - m_i$ is the geometric width of the support of $l_i$.}
        $(m_{1}, M_{1}), (m_{2}, M_{2}), \dots, (m_{k}, M_{k})\in \ZZ^{2}$ such
        that for any $1 \leq j \leq k$,
        \begin{align}
            \pi_{\geq 0}x_{d}^{i_j}l_{j} = \begin{cases}
                0, & i_j< m_j,\\ 
                \pi_{\geq 0} x_d^{i_j}l_j, & m_j \leq i_j\leq M_j, \\
                x_d^{i_j}l_j, & i_j> M_j.
            \end{cases}
        \end{align}
                All but finitely of them (the second case) are contained in $L_{\geq 0}$.                 In~particular, there exists $M \geq 0$ such that every $l \in \pi_{\geq 0}L$ can be decomposed as 
                \begin{equation}
                 l = \tilde{l} + l', 
                \end{equation}
                with $l' \in L_{\geq 0}$ and $\tilde{l} \in P_{\Lambda_M}$, where
                \begin{equation}
                    \Lambda_{M} := \{ (\mathbf{m}, h) : \mathbf{m} \in \mathbb{Z}^{d-1},\ h \in \{0, 1, \dots, M\} \}.
                \end{equation}
        Equivalently, $\pi_{\geq 0}L$ is contained in $P_{\Lambda_M} + L_{\geq 0}$. For any $x \in L_{\geq 0}^{\perp} \cap P_{\geq 0}$ and $l \in \pi_{\geq 0}L$, we have
        \begin{equation}
            \omega(x, l) = \omega(x, \tilde{l}) = \omega(\pi_{\Lambda_M} x, \tilde{l}),
        \end{equation}
        where the final equality holds because $\tilde{l} \in P_{\Lambda_M}$. As a consequence, the kernel of the map $\pi_{\Lambda_M}: L_{\geq 0}^{\perp} \cap P_{\geq 0} \to P_{\Lambda_M}$ is contained in $(\pi_{\geq 0}L)^{\perp} \cap P_{\geq 0} = L_{\geq 0}$. Thus, $P_{\partial}$ is a~quotient of $\pi_{\Lambda_M}(L_{\geq 0}^{\perp} \cap P_{\geq 0})$. Since $\pi_{\Lambda_M}(L_{\geq 0}^{\perp} \cap P_{\geq 0})$ is an $R_\partial$-submodule of $P_{\Lambda_M}$ and $P_{\Lambda_M}$ is finitely generated over $R_\partial$, $P_{\partial}$ is finitely generated over $R_{\partial}$. 
    \end{proof}


In the definition of the boundary operator module $P_\partial$, one could choose a~different half-space. In general this leads to genuinely different modules (in fact over a different group ring $R_\partial$); for example, $P_\partial$ could be zero for one choice of the vector $v$ in \eqref{eq:half_space}, and nontrivial for another. The simplest example of this is provided by decoupled layers of lower-dimensional systems, for which $P_\partial$ is easy to work out. 

Let us consider the modules of boundary operators for a pair of complementary half-spaces $\Lambda_{\geq 0}, \Lambda_{<0}$:
\begin{subequations}
\begin{align}
{P}_\partial^+ &:= (L_{\geq 0}^\perp\cap P_{\geq 0}) / L_{\geq 0}, \quad \omega_\partial^+=\omega_\partial \text{ descends from } \omega, \\
{P}_\partial^- &:= (L_{<0}^\perp\cap P_{<0}) / L_{<0}, \quad \omega_\partial^- \text{ descends from } \omega.
\end{align}
\end{subequations}
\begin{prop}
The quasi-symplectic modules $P_\partial^+$ and $P_\partial^-$ are opposites in the Witt group of $R_\partial$, see~Section~3.2. In other words, the orthogonal sum
\begin{equation}
(P_\partial^+ \oplus P_\partial^-, \omega_\partial^+ \oplus \omega_\partial^-)
\end{equation}
admits a Lagrangian submodule. In particular, $P_\partial^+$ admits a Lagrangian submodule if and only if $P_\partial^-$ does. 
\end{prop}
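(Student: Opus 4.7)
The plan is to exhibit an explicit Lagrangian in $(P_\partial^+ \oplus P_\partial^-, \omega_\partial^+ \oplus \omega_\partial^-)$ using the original Lagrangian $L \subset P$. Define the map
\begin{equation}
\phi : L \longrightarrow P_\partial^+ \oplus P_\partial^-, \qquad \phi(l) = \bigl(\pi_{\geq 0}(l) + L_{\geq 0},\; \pi_{<0}(l) + L_{<0}\bigr).
\end{equation}
Since $\pi_{\geq 0}$ and $\pi_{<0}$ are $R_\partial$-linear, the image of $\phi$ is an $R_\partial$-submodule of $P_\partial^+ \oplus P_\partial^-$. The first thing I would check is that $\phi$ lands in the stated quotient, i.e.\ that $\pi_{\geq 0}(l) \in L_{\geq 0}^{\perp}\cap P_{\geq 0}$ (and the analogous statement for the opposite half-space). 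This is the standard computation: for any $l' \in L_{\geq 0} \subset L \cap P_{\geq 0}$ we have $\omega(\pi_{<0}(l),l') = 0$ by disjointness of supports, so $\omega(\pi_{\geq 0}(l),l') = \omega(l,l') = 0$.

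Next I would show $\mathrm{im}(\phi)$ is isotropic. The key is that operators with disjoint supports commute, so $\omega(\pi_{\geq 0}(l), \pi_{<0}(l')) = 0$ for any $l, l' \in P$. Expanding $l = \pi_{\geq 0}(l) + \pi_{<0}(l)$ and likewise for $l'$, one gets
\begin{equation}
\omega(l,l') = \omega(\pi_{\geq 0}(l),\pi_{\geq 0}(l')) + \omega(\pi_{<0}(l),\pi_{<0}(l')).
\end{equation}
The right-hand side is exactly $(\omega_\partial^+ \oplus \omega_\partial^-)(\phi(l),\phi(l'))$, and vanishes because $L$ is isotropic in $P$. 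Note this argument establishes isotropy at the level of the constant terms $\omega_\partial^{\pm}$; to promote it to the $R_\partial$-sesquilinear forms $\Omega_\partial^{\pm}$ one applies the same identity to the translates $x^\mu l$ for $\mu \in \mathbb Z^{d-1}$, which lie in $L$ by translation invariance and give the coefficients of $\Omega_\partial^+(\phi(l),\phi(l')) + \Omega_\partial^-(\phi(l),\phi(l'))$.

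Finally I would verify that $\mathrm{im}(\phi)$ equals its own orthogonal complement. Suppose $(p^+ + L_{\geq 0}, p^- + L_{<0}) \in P_\partial^+ \oplus P_\partial^-$ annihilates $\mathrm{im}(\phi)$. Using the same disjoint-support trick in reverse, for every $l \in L$,
\begin{equation}
0 = \omega(p^+,\pi_{\geq 0}(l)) + \omega(p^-,\pi_{<0}(l)) = \omega(p^+,l) + \omega(p^-,l) = \omega(p^+ + p^-, l).
\end{equation}
Again, passing to translates $x^\mu l$ shows that $p:=p^+ + p^-$ lies in $L^{\perp} = L$. Since $\pi_{\geq 0}(p) = p^+$ and $\pi_{<0}(p) = p^-$, this gives $(p^+ + L_{\geq 0}, p^- + L_{<0}) = \phi(p) \in \mathrm{im}(\phi)$, completing the proof that $\mathrm{im}(\phi)$ is Lagrangian. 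The final assertion, that one module is metabolic iff the other is, then follows from Lemma~\ref{lem: cancellative}.

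I do not anticipate a serious obstacle; the whole proof hinges on the disjoint-support identity $\omega(\pi_{\geq 0}(a),\pi_{<0}(b)) = 0$, which is immediate from the standard symplectic structure on $P$. The only subtlety worth flagging is the translation from the $\mathbb Z_n$-valued forms $\omega_\partial^{\pm}$ to the $R_\partial$-valued forms $\Omega_\partial^{\pm}$, which is handled by running the isotropy/self-orthogonality arguments on all $\mathbb Z^{d-1}$-translates simultaneously.
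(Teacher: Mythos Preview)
Your proof is correct and follows essentially the same approach as the paper: the paper constructs the same map (factored through $L/(L_{\geq 0}+L_{<0})$ rather than defined on $L$ directly), proves isotropy via the identity $\omega(p,q)=\omega(\pi_{\geq 0}p,\pi_{\geq 0}q)+\omega(\pi_{<0}p,\pi_{<0}q)$, and shows the image is self-orthogonal by the same reverse computation $\omega(x+y,p)=0$ for all $p\in L$. Your extra remarks about well-definedness and passing to translates are correct but not strictly needed, since Lagrangian is defined via the $\mathbb{Z}_n$-valued form $\omega$ and $L$ is already an $R$-submodule.
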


\begin{proof}
The map
\begin{equation}
L/(L_{\geq 0} + L_{<0}) \xrightarrow{i} P_\partial^+ \oplus P_\partial^-, \quad
[p] \mapsto \left( [\pi_{\geq 0} p],\ [\pi_{<0} p] \right)
\end{equation}
is injective, and its image $M$ is isotropic: for any $p, q \in L$,
\begin{align}
& (\omega_\partial^+ \oplus \omega_\partial^-)\left( ([\pi_{\geq 0} p], [\pi_{<0} p]), ([\pi_{\geq 0} q], [\pi_{<0} q]) \right) \\
= &\omega(\pi_{\geq 0} p, \pi_{\geq 0} q) + \omega(\pi_{<0} p, \pi_{<0} q)
= \omega(p, q), \nonumber
\end{align}
which vanishes since $L$ is isotropic.

To show $M$ is Lagrangian, suppose $([x], [y]) \in P_\partial^+ \oplus P_\partial^-$, with lifts $x\in L_{\geq 0}^\perp\cap P_{\geq 0}$ and $y\in L_{<0}^\perp\cap P_{<0}$, is in $M^\perp$. Then, for all $p \in L$,
\begin{equation}
\omega(x+y, p)=\omega(x, \pi_{\geq 0} p) + \omega(y, \pi_{<0} p) = 0,
\end{equation}
so $x + y \in L^\perp =L$. Therefore, $([x], [y]) = i(x+y + L_{\geq 0}+L_{<0} )$ belongs to $M$.
\end{proof}

\subsection{Boundary-to-bulk map of charges} \label{sec:bb_map}

        Recall that $E_{P_\partial} := P_\partial^*/P_\partial$ measures the failure of the quasi-symplectic module $(P_\partial, \Omega_\partial)$ to be symplectic. Since $P_\partial$ arises as the boundary operator module of a~stabilizer code $(L, P)$, we denote this $E_{P_\partial}$ simply by $E_L$, and refer to it as the \textbf{boundary charge module}. In general, $E_L$ may be different for different half-spaces used to define $P_\partial$. 
        
        In this subsection we compare the boundary charge module $E_L$ with the bulk charge module $Q_L$. We construct a homomorphism $V : E_L \to Q_L$ and discuss some conditions for $V$ to be injective or surjective. In the case of dimension $d=2$, $V$~is always an isomorphism. In particular, $E_L \cong Q_L$ does not depend on the choice of a half-space for $d=2$ (at least when regarded just as an abelian group rather than a module). We highlight also that this isomorphism allows to draw conclusions about $Q_L$ from the theory of one-dimensional quasi-symplectic modules developed in Section \ref{sec:qs}. The situation is more complicated in higher dimensions.

        

\begin{defn}
    First, define a map $q: L \to P_\partial$ as the composition
    \begin{equation}    
    q: L\xrightarrow{\pi_{\geq0}} \pi_{\geq0}L \hookrightarrow (L_{\geq 0}^\perp\cap P_{\geq0}) \twoheadrightarrow P_\partial.
    \label{eq:qmap_defined}
    \end{equation}
    Explicitly, for $\ell \in L$, $q(\ell)$ is the equivalence class of $\pi_{\geq 0}(\ell)$ in $P_\partial = (L_{\geq 0}^\perp\cap P_{\geq0}) / L_{\geq 0}$. We note that
    \begin{equation}
        \ker(q) = L_{\geq 0}+L_{<0}.
        \label{eq:q_kernel}
    \end{equation}
    
    Let $[\alpha] \in E_L$ be a class represented by 
    \begin{equation}
     \alpha \in P_\partial^* = \Hom_{R_\partial}(\overline{P_\partial}, R_\partial).   
    \end{equation}
    Let $\alpha_0 $ be the constant term of $\alpha$.
    The composition $\alpha_0  q: L \to \mathbb{Z}_n$ is a localized $\mathbb{Z}_n$-valued functional on $L$ (recall the definition around \eqref{eq: star-functional corr}), so it corresponds to an element $\widehat{\alpha q} \in L^*=\mathrm{Hom}_R(\overline L,R)$:
\begin{equation}
    \widehat{\alpha q} (\ell')=\sum_{\lambda \in \mathbb Z^d} (\alpha_0q)(x^\lambda \ell ') x^\lambda = \sum_{n=-\infty}^\infty (\alpha  q)(x_d^n \ell') x_d^n.
    \label{eq:alphaq}
\end{equation}
Indeed, the sum over $n$ in \eqref{eq:alphaq} has only finitely many nonzero terms because 
\begin{equation}
 x^n_d \ell' \in L_{\geq 0} + L_{<0} \subset \ker(q) \qquad \text{for } |n| \gg 0.    
\end{equation}


The map $V : E_L \to Q_L$ is defined by 
\begin{equation}
 V[\alpha] = [\widehat{\alpha q}] \in Q_L = L^*/(P/L).    
\end{equation}
\end{defn}

    \begin{prop}
    The map 
    \begin{equation}
    V: E_L \to Q_L, \qquad V([\alpha]) = [\widehat{\alpha q}]
    \end{equation}
    is a well-defined homomorphism of $R_\partial$-modules.
    \end{prop}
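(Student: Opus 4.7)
The plan is to verify three properties: (i) for each representative $\alpha \in P_\partial^*$, the prescription \eqref{eq:alphaq} defines an element $\widehat{\alpha q} \in L^* = \Hom_R(\overline L, R)$; (ii) the resulting class $[\widehat{\alpha q}] \in Q_L$ depends only on $[\alpha] \in E_L = P_\partial^*/P_\partial$; and (iii) the induced map $V$ is $R_\partial$-linear.

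For (i), I would first justify that the sum in \eqref{eq:alphaq} has only finitely many nonzero terms. The map $q$ is $R_\partial$-linear because $\pi_{\geq 0}$ commutes with translations parallel to the boundary, and its kernel (as noted in \eqref{eq:q_kernel}) contains both $L_{\geq 0}$ and $L_{<0}$. Since any $\ell \in L$ is supported in a bounded range of $x_d$-coordinates, one has $x_d^n \ell \in L_{\geq 0}$ for $n \gg 0$ and $x_d^n \ell \in L_{<0}$ for $n \ll 0$, so only finitely many $n$ contribute. Each term $\alpha(q(x_d^n \ell)) \in R_\partial$ is itself a polynomial, so $\widehat{\alpha q}(\ell) \in R$. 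The $R$-antilinearity of $\widehat{\alpha q}$ reduces to checking the action of the generators $x_1^{\pm 1},\dots,x_{d-1}^{\pm 1}$ and of $x_d^{\pm 1}$: the former follows from $R_\partial$-linearity of $q$ combined with $R_\partial$-antilinearity of $\alpha$, and the latter from a straightforward reindexing of the sum over $n$.

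For (ii), I need to show that if $\alpha = \Omega_\partial(\cdot, [\tilde p])$ for some $[\tilde p] \in P_\partial$ with lift $\tilde p \in L_{\geq 0}^\perp \cap P_{\geq 0}$, then $\widehat{\alpha q}$ lies in the image of $P \to L^*$. For any $\ell \in L$, I would unpack
\begin{equation*}
\alpha_0(q(\ell)) = \omega_\partial\bigl([\pi_{\geq 0}\ell],[\tilde p]\bigr) = \omega(\pi_{\geq 0}\ell, \tilde p) = \omega(\ell, \tilde p),
\end{equation*}
where the last equality uses $\tilde p \in P_{\geq 0}$ to conclude $\omega(\pi_{<0}\ell, \tilde p) = 0$. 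Summing over $n$ then identifies $\widehat{\alpha q}(\ell) = \Omega(\ell, \tilde p)$, which is precisely the restriction to $L$ of the pairing with $\tilde p \in P$. Hence $[\widehat{\alpha q}] = 0$ in $Q_L$, and $V$ descends to $E_L$.

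For (iii), $R_\partial$-linearity is immediate from $\widehat{(r\alpha)\, q}(\ell) = r\,\widehat{\alpha q}(\ell)$ for $r \in R_\partial$, since scalar multiplication by $r$ commutes past the outer sum over $n$. The principal subtlety throughout is the locality bookkeeping guaranteeing that $\widehat{\alpha q}$ lands in $R$ rather than in $\widehat R$; however, this is essentially the same finite-support argument used in the proof of Theorem~\ref{thm: fundamental}, and presents no genuine difficulty.
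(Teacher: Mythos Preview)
Your proof is correct and follows essentially the same approach as the paper's. The paper is more terse---it absorbs your step (i) into the definition preceding the proposition and handles (ii) and (iii) in a few lines---but the key computation for well-definedness, reducing $\alpha_0(q(\ell))$ to $\omega(\ell,\tilde p)$ via orthogonality of $\pi_{<0}\ell$ and $\tilde p\in P_{\geq 0}$, is identical.
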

    \begin{proof}
    Let us check that $V$ is well-defined. If $\alpha$ and $\alpha'$ represent the same class in $E_L$, then $\alpha(\cdot) - \alpha'(\cdot) =\omega_\partial(\cdot,p+L_{\geq0})$ for some $p\in L_{\geq 0}^\perp\cap P_{\geq0}$. Hence, 
    \begin{equation}
        (\alpha-\alpha')q(\cdot) = \omega(\cdot,p)
    \end{equation}
    represents the zero class in $Q_L$.

    The map $q$ and the quotient map $L^*\twoheadrightarrow Q_L$ are $R_\partial$-module homomorphisms. The construction $ \mathrm{Hom}_{R_\partial}(\overline L, R_\partial) \ni \alpha q \mapsto \widehat{\alpha q}\in L^*$ is an $R_\partial$-homomorphism as well.  Therefore, $V$ is an $R_\partial$-module homomorphism.
    \end{proof}

    \begin{quest}
        Is $V: E_L \to Q_L$ is injective for every Lagrangian stabilizer code $(L, P)$?
    \end{quest}
    Here is a partial result in this direction.
    \begin{prop} \label{prop:V_inj}
        If there are no secondary boundary operators in the lower half-space, that is, $L_{< 0}^\perp \cap P_{< 0}=\pi_{<0}L$, then $V: E_L \to Q_L$ is injective.
    \end{prop}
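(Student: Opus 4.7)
The plan is to show that if $[\alpha]\in\ker(V)$ then $\alpha$ is of the form $\Omega_\partial(\cdot,[p])$ for some $[p]\in P_\partial$, which gives $[\alpha]=0$ in $E_L = P_\partial^*/P_\partial$. Unwinding $V[\alpha]=0$ produces $p\in P$ with $\widehat{\alpha q}(\ell) = \Omega(\ell,p)$ for every $\ell\in L$. Comparing coefficients of $x^\lambda$ on both sides using translation invariance and the definition of $\widehat{\alpha q}$, this is equivalent to the pointwise identity
\begin{equation*}
\alpha_0(q(x^\lambda \ell)) \;=\; \omega(x^\lambda\ell,\,p), \qquad \lambda\in\mathbb Z^d,\ \ell\in L,
\end{equation*}
which I then exploit to pin down the shape of $p$.

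The crucial step is to use the hypothesis to modify $p$ so that it represents a class in $P_\partial$. Specializing the identity to $\ell\in L_{<0}$ kills the left side (since $q(\ell)=0$), while on the right $\omega(\ell,\pi_{\geq 0}p)=0$ by disjoint supports, leaving $\omega(\ell,\pi_{<0}p)=0$ for all $\ell\in L_{<0}$. Thus $\pi_{<0}p\in L_{<0}^\perp\cap P_{<0}$, which by hypothesis equals $\pi_{<0}L$. Choosing $\ell_0\in L$ with $\pi_{<0}\ell_0=\pi_{<0}p$ and replacing $p$ by $p-\ell_0$, I can arrange $p\in P_{\geq 0}$; this preserves the identity $\widehat{\alpha q}(\ell) = \Omega(\ell,p)$ because $\Omega(\ell,\ell_0)=0$ by isotropy of $L$. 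Applying the identity now with $\ell\in L_{\geq 0}$ (again $q(\ell)=0$) gives $\omega(\ell,p)=0$, so $p\in L_{\geq 0}^\perp\cap P_{\geq 0}$ and descends to a class $[p]\in P_\partial$. This is where the assumption is essential, and I expect it to be the main obstacle: without it, $\pi_{<0}p$ would in general lie only in the double-perp of $\pi_{<0}L$, so one could not clear the lower-half-space component of $p$ by subtracting an element of $L$, and the construction of $[p]\in P_\partial$ would break down.

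Finally I would verify that $\alpha = \Omega_\partial(\cdot,[p])$ in $P_\partial^*$. On the primary submodule $P_\partial^{\mathrm{pr}}=q(L)$ this is a coefficient-matching computation: for $\ell\in L$ and $\lambda'\in\mathbb Z^{d-1}$, the coefficient of $x^{\lambda'}$ in $\alpha(q(\ell))$ equals $\alpha_0(q(x^{\lambda'}\ell)) = \omega(x^{\lambda'}\ell,p)$, and since $p\in P_{\geq 0}$ the cross-term with $\pi_{<0}$ vanishes, so this equals $\omega(x^{\lambda'}\pi_{\geq 0}\ell,p) = \omega_\partial(x^{\lambda'}q(\ell),[p])$, matching the corresponding coefficient of $\Omega_\partial(q(\ell),[p])$. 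To pass from $P_\partial^{\mathrm{pr}}$ to all of $P_\partial$, I would invoke that $P_\partial^{\mathrm{pr}}$ has saturation $P_\partial$ (as noted just after the definition of secondary operators): for any $m\in P_\partial$ there is a regular $r\in R_\partial$ with $rm\in P_\partial^{\mathrm{pr}}$, so the difference $\beta = \alpha-\Omega_\partial(\cdot,[p])\in P_\partial^*$ satisfies $\overline r\,\beta(m) = \beta(rm) = 0$ in $R_\partial$; regularity of $\overline r$ then forces $\beta(m)=0$, completing the proof of injectivity.
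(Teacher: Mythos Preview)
Your proof is correct and follows essentially the same approach as the paper: both start from $\alpha_0\circ q=\omega(\cdot,p)|_L$, test on $L_{\geq 0}$ and $L_{<0}$ to locate the two halves of $p$, invoke the hypothesis to find $\ell_0\in L$ with $\pi_{<0}\ell_0=\pi_{<0}p$, and then verify that $\alpha$ agrees with $\Omega_\partial(\cdot,[p-\ell_0])$ on $P_\partial^{\mathrm{pr}}$ and hence on all of $P_\partial$. Your write-up is slightly more explicit about the final extension step via saturation, which the paper dispatches with a single ``and hence on $P_\partial$.''
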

    \begin{proof}
        Suppose that $V([\alpha]) = 0$ for some $[\alpha] \in E_L$. Then $\alpha_0 q=\omega(\cdot, p)|_L$ for some $p\in P$. Decompose $p=\pi_{\geq 0}p+\pi_{<0}p$. Notice that if $l\in L_{\geq 0}$, then $q(l)=0$, so
\begin{equation}
0=\alpha_0 q(l)=\omega(l, p)=\omega(l, \pi_{\geq 0}p)+\omega(l,\pi_{<0}p).
\end{equation}
The term $\omega(l,\pi_{<0}p)$ vanishes, and we conclude that $\omega(l, \pi_{\geq 0}p)=0$. Hence $\pi_{\geq 0}p$ is an element of $L_{\geq 0}^\perp \cap P_{\geq 0}$. Similarly, $\pi_{< 0}p\in L_{< 0}^\perp \cap P_{< 0}$. The assumption implies that there exists $x\in L$ such that $\pi_{<0}x=\pi_{< 0}p$. Thus, for any $l\in L$ ,
        \begin{align}
        \alpha_0(q(l))=\omega(l, p) & =\omega(l, \pi_{\geq 0}p)+\omega(l,\pi_{<0}x) \\
        & =\omega(l, \pi_{\geq 0}p-\pi_{\geq 0}x) = \omega(\pi_{\geq 0 } l,x-p) = \omega_{\partial}(q(l), p-x + L_{\geq 0}). \nonumber
        \end{align}
        This shows that $\alpha(\cdot)$ coincides with $\Omega_{\partial}(\cdot, p-x + L_{\geq 0})$ on $P_\partial^{\mathrm{pr}}$, and hence on $P_\partial$.       
    \end{proof}

    In dimensions $d>2$, $V$ may fail to be surjective. Indeed, $Q_L$ regarded as an $R_\partial$-module need not be finitely generated (see Example \ref{exmp:cube model} below); in such cases there can be no surjective homomorphism $E_L \to Q_L$. Below we give a partial characterization of the image of $V$.

    \begin{prop} \label{prop:V_non_surj}
        We have the inclusion
        \begin{equation} 
        \label{eq: trans_mobile}
  V(E_L)\subset \{[h]\in Q_L: \text{there is a lift $h\in L^*$ with } \ker h_0 \supset L_{\geq0}+L_{<0}\}.   
   \end{equation}
   If there are no secondary boundary operators, that is, if $L_{\geq 0}^\perp \cap P_{\geq 0} = \pi_{\geq 0} L$, then the inclusion in \eqref{eq: trans_mobile} is an equality.
    \end{prop}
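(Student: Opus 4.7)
The plan is to verify the inclusion directly from the construction of $V$, and then to build an inverse map under the no-secondary hypothesis.

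For the inclusion $\subset$, I would take the explicit representative $h = \widehat{\alpha q}$ used in the definition of $V([\alpha])$. Its constant term is $h_0 = \alpha_0 \circ q$, and since $\ker q = L_{\geq 0} + L_{<0}$ by \eqref{eq:q_kernel}, this representative automatically satisfies $\ker h_0 \supset L_{\geq 0} + L_{<0}$. That is precisely the membership condition in \eqref{eq: trans_mobile}, so this direction is immediate.

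For the equality under the hypothesis $L_{\geq 0}^\perp \cap P_{\geq 0} = \pi_{\geq 0} L$, the strategy is to invert this construction. Given $[h] \in Q_L$ with a lift $h \in L^*$ whose constant term $h_0$ vanishes on $L_{\geq 0} + L_{<0}$, I would factor $h_0$ through the quotient $L/(L_{\geq 0}+L_{<0})$ to obtain $\bar h_0$. Under the hypothesis, $P_\partial = P_\partial^{\mathrm{pr}}$, so $q \colon L \to P_\partial$ is surjective and descends to an $R_\partial$-module isomorphism $\bar q \colon L/(L_{\geq 0} + L_{<0}) \to P_\partial$. Setting $\beta_0 := \bar h_0 \circ \bar q^{-1} \colon P_\partial \to \mathbb Z_n$ then gives a $\mathbb Z_n$-homomorphism that is the natural candidate for the constant term of the desired $\alpha \in P_\partial^*$.

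The only step requiring care is to check that $\beta_0$ is localized, so that it genuinely corresponds to an $R_\partial$-antilinear map $\alpha \colon \overline{P_\partial} \to R_\partial$ via the duality in Definition \ref{def2}. This follows by picking, for any $m \in P_\partial$, a lift $\bar \ell$ with $m = \bar q(\bar \ell)$: for $\mu \in \mathbb Z^{d-1}$ we get $\beta_0(x^\mu m) = h_0(x^{(\mu,0)} \ell)$, which vanishes for all but finitely many $\mu$ because $h \in L^*$ is already localized on $L$. Once $\alpha$ is in hand, I would unpack \eqref{eq:alphaq} to see that $\widehat{\alpha q}(\ell) = \sum_{\lambda \in \mathbb Z^d} h_0(x^\lambda \ell)\, x^\lambda = h(\ell)$, whence $V([\alpha]) = [h]$. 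I do not expect a serious obstacle; the bookkeeping steps — that $q$ is $R_\partial$-linear, and that translations parallel to the boundary commute with $\pi_{\geq 0}$ — are all immediate from the setup, and the proof is essentially a formal inversion of the definition of $V$.
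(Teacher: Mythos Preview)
Your proposal is correct and follows essentially the same route as the paper: both directions use that $\ker q = L_{\geq 0}+L_{<0}$, and under the no-secondary hypothesis both use surjectivity of $q$ to factor $h_0$ through $P_\partial$, check that the resulting $\mathbb Z_n$-functional is localized, and read off $V[\alpha]=[h]$. You spell out the localization verification (via $\beta_0(x^\mu m)=h_0(x^{(\mu,0)}\ell)$) that the paper leaves implicit, but otherwise the arguments coincide.
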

    \begin{proof}
        The inclusion in \eqref{eq: trans_mobile} is immediate from the definition of $V$ and \eqref{eq:q_kernel}. If there are no secondary boundary operators, the map $q : L \to P_\partial$ defined in \eqref{eq:qmap_defined} is surjective. Let $h \in L^*$. The condition $\ker(h_0) \supset L_{\geq 0} + L_{<0}$ is equivalent to the existence of $\alpha_0 \in \Hom_{\mathbb Z_n}(P_{\partial},\mathbb Z_n)$ such that 
        \begin{equation}
         h_0 = \alpha_0 \circ q.  
         \label{eq:factor_through_q}
        \end{equation}
        Since $h_0$ is localized, $\alpha_0$ is localized and corresponds to an element $\alpha \in P_{\partial}^*$. Clearly $V [\alpha] = [h]$.
    \end{proof}

    \begin{remark}
    In the proof above, absence of secondaries has to be used because otherwise $q$ is not onto $P_{\partial}$, so \eqref{eq:factor_through_q} defines $\alpha_0$ only on $\mathrm{im}(q) = P_{\partial}^{\mathrm{pr}}$. In general, there may exist elements of $(P_\partial^{\mathrm{pr}})^*$ which can't be extended to $P_{\partial}$. Hence \eqref{eq:factor_through_q} can not be used to define $\alpha \in P_{\partial}^*$.   
    \end{remark}


    In the example discussed below, which is often referred to as the X-cube model~\cite{vijay2016fracton}, there exist certain half-spaces for which not all $[h] \in Q_L$ satisfy the condition in~\eqref{eq: trans_mobile}; hence, the map $V$ is not surjective. However, $V$ becomes surjective for the same model with a~differently chosen half-space.

    \begin{exmp} \label{exmp:cube model}
    Let $R=\ZZ_2[x^\pm, y^\pm, z^\pm]$, $P=R^6$ and, let $L$ be the image of the matrix
    \begin{equation}
        \sigma=\begin{pmatrix}
            \sigma_1 & \sigma_2 & \sigma_3
        \end{pmatrix} := \begin{pmatrix}
1+\bar x +\bar y+\bar x \bar y & 0 & 0\\
1+ \bar y + \bar z + \bar y \bar z& 0 & 0\\
1+ \bar x + \bar z + \bar x \bar z& 0 & 0\\
0 & 1+z & 0\\
0 & 1+x & 1+x\\
0 & 0& 1+y
\end{pmatrix}
\end{equation}
Here, $Q_L\cong\Ext^1(P/L, R)$
\begin{equation}
= R/(1+x+y+xy, 1+y+z+yz, 1+z+x+xz)\oplus R^2/\im \tau,
\end{equation}
with 
\begin{equation}
\tau=\begin{pmatrix}
1+z & 1+x & 0\\
0& 1+x& 1+y
\end{pmatrix}.
\end{equation}
Elements of the first summand are represented by $\langle r \rangle\in L^*$, labeled by $r\in R$ and given by 
\begin{equation}
\langle r \rangle(\sigma_1)=r \quad \text{and} \quad \langle r  \rangle(\sigma_2)=\langle r \rangle(\sigma_3)=0.
\end{equation}
We have $\langle r \rangle+\langle s \rangle=\langle r+s \rangle$ for all $r,s\in R$, and $[\langle r \rangle]=0$ in $Q_L$ if and only if $r\in (1+x+y+xy, 1+y+z+yz, 1+z+x+xz)$. 

 Consider the half-space $\Lambda_{\geq 0}=\{(a, b, c)\in \ZZ^3: c\geq 0\}$. We have $x^my^nz\sigma_1\in L_{\geq 0}$ and $\langle x^my^nz\rangle_0(x^my^nz\sigma_1)=1\ne 0$. Moreover, adding to $x^my^nz$ an element of the ideal $(1+x+y+xy, 1+y+z+yz, 1+z+x+xz)$ cannot eliminate monomials of the form $x^my^nz$ without creating another one. Therefore, $[\langle x^my^nz\rangle]\in Q_L$ does not satisfy condition in \eqref{eq: trans_mobile}. Similar counterexamples exist for the half-spaces $\Lambda_{\geq 0}=\{(a, b, c)\in \ZZ^3: a\geq 0\}$ or $\Lambda_{\geq 0}=\{(a, b, c)\in \ZZ^3: b\geq 0\}$. 

On the other hand, for generic slanted half-spaces the condition in~\eqref{eq: trans_mobile} is satisfied for all elements of $Q_L$. An example is
\begin{equation}
\Lambda_{\geq 0}=\{(a, b, c)\in \ZZ^3: a+b+c\geq 0\}.
\end{equation}
One can verify that for this choice of half-space, $V$ is an isomorphism.  
\end{exmp}

The example above highlights a general principle: the obstruction to $V$ being surjective, described in Proposition~\ref{prop:V_non_surj}, arises only for exceptional choices of half-space direction. Below we show that for generic half-spaces $V$ is surjective.

\begin{prop} \label{prop: V_surj_generic}
We say that all local excitations are movable through the boundary if for every class $ [q] \in Q_L$ 
there exist elements $1-f $ and $1-g$ annihilating $[q]$, where $f \in x_d R_{\partial}[x_d]$, $g \in \overline x_d R_{\partial}[\overline x_d]$. In other words, we ask that $f$ involves only monomials $x^{\lambda }$ with $v \cdot \lambda >0 $ and $g$ involves only monomials $x^{\lambda}$ with $v \cdot \lambda <0$, where $v$ is the vector defining the half-space $\Lambda_{\geq 0}$ in \eqref{eq:half_space}.  
\begin{enumerate}
    \item The condition that all local excitations are movable through the boundary is satisfied for generic $v$. More precisely, there exists a finite collection of nonzero vectors $v_1,\dots,v_m \in \mathbb Z^d$ such that if $v \cdot v_i \neq 0$ for each $i$, then the movability condition is satisfied.
    \item If the movability condition is satisfied, then the map $V : E_L \to Q_L$ is surjective.
\end{enumerate}
\end{prop}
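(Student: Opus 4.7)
For Part 1, the plan is to exhibit a single element $1-f \in R$ (and its companion $1-g$) that annihilates the whole module $Q_L$, with $f \in x_d R_\partial[x_d]$ and $g \in \overline{x_d} R_\partial[\overline{x_d}]$; this then annihilates every class $[q]$. First I would use the Chinese Remainder decomposition of $R$ to reduce to $n = p^\rho$, a prime power. Setting $\bar R := \mathbb F_p[x_1^{\pm 1}, \dots, x_d^{\pm 1}]$ and $\bar Q := Q_L / p Q_L$, Proposition~\ref{prop:Ext_bounds} shows $\bar Q$ is finitely generated and torsion, so $\mathrm{Ann}_{\bar R}(\bar Q)$ contains a nonzero element $\bar r$ with some finite monomial support $S \subset \mathbb Z^d$. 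The list $v_1,\dots,v_m$ will consist of the differences $s-s'$ with $s \neq s' \in S$, taken over the finitely many primes dividing $n$. For $v$ with $v \cdot v_i \neq 0$ for each $i$, the linear functional $\mu \mapsto v\cdot\mu$ has a unique minimizer on $S$; shifting $\bar r$ to place this minimizer at the origin and normalizing by its (necessarily invertible) constant term over the \emph{field} $\mathbb F_p$ produces $1 - \bar f \in \mathrm{Ann}_{\bar R}(\bar Q)$ with $\bar f$ supported in $\{v\cdot\lambda > 0\}$. Lifting to $1-f \in R$ gives $(1-f) Q_L \subset p Q_L$, and iterating $\rho$ times with $p^\rho Q_L = 0$ yields $(1-f)^\rho Q_L = 0$; the binomial expansion rewrites this as $1 - f'$ with $f'$ an $R$-linear combination of $f^k$, $k\geq 1$, hence still in $x_d R_\partial[x_d]$. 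The analogous argument with the unique $v$-maximizer produces $1-g$.

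For Part 2, take $p_\pm \in P$ witnessing $(1-f) h = \Omega(\cdot,p_+)|_L$ and $(1-g) h = \Omega(\cdot,p_-)|_L$. Iterating $h = fh + \Omega(\cdot,p_+)|_L$ and using that the support of $f^N h(l)$ escapes to $+v$ infinity for each fixed $l$, the formal sums $\tilde p_+ := \sum_{k \geq 0} f^k p_+$ and $\tilde p_- := \sum_{k \geq 0} g^k p_-$ are well-defined and satisfy $h_0(l) = \omega(l,\tilde p_+) = \omega(l,\tilde p_-)$ for every $l \in L$. The crucial structural observation is that each $x_d$-level of $\tilde p_+$ is a \emph{finite} element of $R_\partial^{2q}$ (only finitely many $k$ contribute at a given $x_d$-power, because $f \in x_d R_\partial[x_d]$), and $\tilde p_+$ has only finitely many $x_d$-levels strictly below zero; symmetrically for $\tilde p_-$. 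Consequently $p_{\mathrm{fin}} := \pi_{<0}\tilde p_+ + \pi_{\geq 0}\tilde p_-$ lies in $P$, and $h'' := h - \Omega(\cdot,p_{\mathrm{fin}})|_L$ is another representative of $[h]$ in $L^*$. Direct computation using both expressions for $h_0$ yields $h''_0(l) = \omega(l,m)$ for $m := \pi_{\geq 0}(\tilde p_+ - \tilde p_-)$ supported in $\Lambda_{\geq 0}$; disjoint-support reasoning (using the $\tilde p_-$-formula on $L_{\geq 0}$ and the $\tilde p_+$-formula on $L_{<0}$) then shows that $h''_0$ vanishes on $L_{\geq 0} + L_{<0}$.

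Now I would define $\alpha_0 : P_\partial \to \mathbb Z_n$ directly by $\alpha_0(p' + L_{\geq 0}) := \omega(p', m)$ for any lift $p' \in L_{\geq 0}^\perp \cap P_{\geq 0}$; well-definedness on the quotient uses precisely $\omega(L_{\geq 0}, m) = h''_0|_{L_{\geq 0}} = 0$. Specifying $\alpha_0$ on all of $P_\partial$ (not merely on $P_\partial^{\mathrm{pr}}$) is what sidesteps the extension obstruction implicit in the proof of Proposition~\ref{prop:V_non_surj}. To see the corresponding $\alpha$ lies in $P_\partial^*$, i.e.\ that $\alpha_0$ is $R_\partial$-localized, I use that any fixed $p'$ has finitely many $x_d$-levels, each $x_d$-level of $m$ is finitely supported in $\mathbb Z^{d-1}$, and the pairing $\omega(x^\mu p', m)$ decomposes level-wise, so it vanishes for $\mu \in \mathbb Z^{d-1}$ outside a finite set. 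Finally $\widehat{\alpha q}(l) = \alpha_0(q(l)) = \omega(\pi_{\geq 0} l, m) = \omega(l, m) = h''_0(l)$, giving $V[\alpha] = [h''] = [h]$ and hence surjectivity. The main difficulties are, in Part 1, the normalization of the leading coefficient (resolved by reduction mod $p$ combined with the $p^\rho$-killing trick), and in Part 2, preserving the $R_\partial$-locality of $\alpha$ despite $m$ being supported on an infinite subset of $\Lambda_{\geq 0}$; this is secured by the level-wise finiteness of $\tilde p_\pm$ that follows from $f \in x_d R_\partial[x_d]$.
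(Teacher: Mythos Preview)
Your proof is correct and rests on the same two key ideas as the paper's: in Part~1, extract a regular annihilator of $Q_L$, isolate an extreme monomial in the $v$-direction, normalize it to $1$, and clear prime-power obstructions by taking a high power (you phrase this as reduction mod~$p$ followed by the $(1-f)^\rho$ trick; the paper speaks of killing nilpotent remainders via $(1-f)^N$); in Part~2, formally invert $1-f$ as $\sum_{k\geq 0} f^k$ to represent $h_0$ by pairing with an element of~$\widehat P$, and then read off~$\alpha_0$ on all of~$P_\partial$ by pairing with this element.

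The execution of Part~2 differs in a minor but genuine way. The paper uses $g$ \emph{asymmetrically}: it first replaces $[h]$ by $[g^N h]$ for large $N$ to force $h_0|_{L_{\geq 0}}=0$, and only then forms the single series $\tilde p=\sum_k f^k p$ and sets $\alpha_0(\cdot)=\omega(\cdot,\tilde p)$; the final correction is just $\pi_{<0}\tilde p\in P$. You instead build both $\tilde p_\pm$ symmetrically and set $m=\pi_{\geq 0}(\tilde p_+-\tilde p_-)$, subtracting the finite element $p_{\mathrm{fin}}=\pi_{<0}\tilde p_++\pi_{\geq 0}\tilde p_-$ from $h$. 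Your route is a bit more symmetric and makes the vanishing of $h''_0$ on $L_{\geq 0}+L_{<0}$ transparent via disjoint-support reasoning; the paper's preprocessing with $g^N$ makes the well-definedness of $\alpha_0$ on $P_\partial$ a one-line observation. Either way, the level-wise finiteness argument you give for $R_\partial$-locality of $\alpha_0$ is exactly what is needed and matches the paper's ``easy to see'' claim.
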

\begin{proof}
    1. This argument is based on a similar idea as the proof of Proposition~\ref{prop:topologies}, so~we only sketch it. We can assume $n$ is a prime power. Since $Q_L$ is a~torsion module, we can pick a regular element $r$ in the annihilator of $Q_L$. If $v$ is not orthogonal to any of the vectors
    \begin{equation}
        \{ \lambda - \mu \, | \, \lambda \neq \mu, \ r \text{ has invertible coefficients at } x^\lambda, x^\mu \},
    \end{equation}
    then suitable multiple of $r$ take the form $1-f - f'$ and $1-g-g'$, where $1-f$ and $1-g$ are already as required and $f',g'$ are additional nilpotent terms. Perhaps replacing $1-f$ and $1-g$ by $(1-f)^N$ for some sufficiently large $N$, we can assume that $f'=g'=0$.

    2. Let us take $h \in L^*$. We will show that $[h] = V[\alpha]$ for some $\alpha \in P_\partial^*$. We choose an element of the annihilator of $[h]$ of the form $1-g$ as in the statement of the Proposition. Then $[h] = g^N[h] = [g^N h]$ for every $N \geq 0$, and if $N$ is chosen large enough then $(g^N h)_0$ vanishes on $L_{\geq 0}$. Therefore, with no loss of generality we can assume that $L_{\geq 0} \subset \mathrm{Ker}(h_0)$. 

    Now take $f$ as in the statement of the Proposition. Since $(1-f)[h]=0$, there exists $p \in P$ such that 
    \begin{equation}
        (1-f)h(\cdot) = \left. \Omega(\cdot, p) \right|_L.
    \end{equation}
    We define
    \begin{equation}
        \widetilde p= \sum_{j=0}^\infty f^j p \in \widehat P.
    \end{equation}
    Then $(1-f) \widetilde p = p$, so
    \begin{equation}
        (1-f) ( h (\cdot) - \left. \Omega(\cdot , \widetilde p) \right|_{L} )=0.
        \label{eq:h_ptilde_rep}
    \end{equation}
    This homomorphism is valued in the formal Laurent series ring $R_{\partial}((x_d))$, which consists of power series of the form
    \begin{equation}
        \sum_{j=N}^\infty r_j x_d^j, \qquad \text{for some } N \in \mathbb Z \text{ and } r_j \in R_\partial. 
    \end{equation}
    Such formal power series form a ring $R_\partial((x_d))$, and $1-f$ is a unit in $R_\partial((x_d))$. Hence \eqref{eq:h_ptilde_rep} implies that 
    \begin{equation}
        h (\cdot)  = \left. \Omega(\cdot, \widetilde p ) \right|_L.
    \end{equation}
    In particular, $\omega(l , \widetilde p ) =0$ for $l \in L_{\geq 0}$. We define $\alpha_0 : P_\partial \to \mathbb Z_n$ by
    \begin{equation}
        \alpha_0 (\cdot) = \left. \omega(\cdot, \widetilde p ) \right|_{L_{\geq 0}^\perp \cap P_{\geq 0}}.
    \end{equation}
    This is well-defined on $P_\partial$ because the functional vanishes on $L_{\geq 0}$. It is easy to see that $\alpha_0$ is localized (as a functional on an $R_{\partial}$-module), so $\alpha_0$ is the constant term of certain $\alpha \in P_{\partial}$.

    We will now calculate $V[\alpha]$. To this end, we compute for $l \in L$:
    \begin{equation}
        \alpha_0 \circ q (l) = \omega(\pi_{\geq 0} l, \widetilde p ) = \omega(l, \pi_{\geq 0}\widetilde p) = h_0(l) - \omega(l, \pi_{<0} \widetilde p).
    \end{equation}
    We have $\pi_{<0} \widetilde p \in P$, so the last term corresponds to the zero class in $Q_L$. Hence $V[\alpha]=[h]$.
\end{proof}
\begin{remark}
\begin{itemize}
    \item The sufficient condition on $v$ in 1. of Proposition \ref{prop: V_surj_generic} is likely to be suboptimal. The proof uses the fact that $Q_L$ is annihilated by some regular element of $R$, i.e.\ that the Krull dimension of $Q_L$ is at most $d-1$. It is known that one has a stronger bound $\dim(Q_L) \leq d-2$. 
    \item The existence of $v \in \mathbb Z^d$ not orthogonal to any of $v_1,\dots,v_m \in \mathbb Z^d \setminus \{0 \}$ can be seen as follows. Since $\mathbb Q$ is an infinite field, a finite dimensional $\mathbb Q$-linear space is not the union of finitely many proper subspaces. Therefore, there exists $u \in \mathbb Q^d$ not orthogonal to any $v_i$. We can take $v $ to be a suitable multiple of $u$ belonging to $\mathbb Z^d$. 
\end{itemize}
\end{remark}

Although the above proof is algebraic, it is helpful to visualize $\widetilde{p}$ by plotting the monomials in the Laurent series $\sum_{j=0}^\infty f^j$. Take $1-f=1+x+y+xy$, which is relevant to the X-cube example. Monomials of $\sum_{j=0}^\infty f^j$ are contained in the first quadrant in the $xy$-plane~(region in \textcolor{red}{red} below). Alternatively, we may take $xy(1-f)=xy(1+x^{-1}+y^{-1}+x^{-1}y^{-1})$. Monomials of $\sum_{j=0}^\infty f^j$ are now contained in another quadrant in the $xy$-plane~(region in \textcolor{green}{green} below). In summary, the above proof makes use of power series~(representing extended Pauli operators) contained in cones to formally invert specific polynomials. Moreover, there are multiple formal inverses corresponding to different cones. 

\begin{center}
\begin{tikzpicture}
  \begin{axis}[
    view={120}{30},
    axis lines=center,
    xlabel={$x$},
    ylabel={$y$},
    zlabel={$z$},
    ticks=none,
    xmin=-3, xmax=3,
    ymin=-3, ymax=3,
    zmin=-5, zmax=3,
    ]

    \addplot3[
      surf,
      shader=flat,
      opacity=0.5,
      draw=none,
      fill=red,
      domain=0:3,
      y domain=0:3,
      samples=2,
    ]
    {0};

      \addplot3[
      surf,
      shader=flat,
      opacity=0.5,
      draw=none,
      fill=green,
      domain=0:-3,
      y domain=0:-3,
      samples=2,
    ]
    {0};


  \end{axis}
\end{tikzpicture}
\end{center}

Sometimes, the cone becomes degenerate. In the most degenerate case, cones become rays. This happens particularly for codes satisfying the full mobility condition: $\{1-x_i^l\}_{i=1, \dots, d}$ annihilates $Q_L$ for some $l\ne 0$. This is equivalent to $Q_L$ having Krull dimension zero (or $Q_L=0)$, and also to $Q_L$ being a finite set. 
\begin{cor} \label{cor: surj_mobile}
    If the Krull dimension of $Q_L$ is zero, $V$ is surjective for any choice of half-space~$\Lambda_{\geq 0}$.
\end{cor}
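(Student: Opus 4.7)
The plan is to verify the movability hypothesis of Proposition \ref{prop: V_surj_generic} under the assumption $\dim Q_L = 0$, after which part 2 of that proposition yields surjectivity of $V$ immediately. After changing basis on $\mathbb Z^d$ so that $v = e_d$ and the half-space takes the standard form \eqref{eq:standard_halfspace}, we have $R = R_\partial[x_d^{\pm 1}]$; the argument for general $v$ reduces to this setting.

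The first step is to note that $\dim Q_L = 0$ forces $Q_L$ to be a finite set, as already asserted in the discussion preceding the statement. Indeed, $Q_L$ is finitely generated over $R$, so its Krull dimension equals $\dim R/\mathrm{ann}(Q_L)$, and a zero-dimensional finitely generated $\mathbb Z_n$-algebra is a finite ring: one reduces to $n = p^r$ and observes via the Nullstellensatz that every residue field of $R$ at a maximal ideal is a finite extension of $\mathbb F_p$. A finitely generated module over a finite ring is then finite as a set.

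Since $x_d$ is a unit in $R$, multiplication by $x_d$ acts on the finite abelian group $Q_L$ as an $R$-module automorphism, and therefore has finite order: there exists $N \geq 1$ such that $x_d^N \cdot [q] = [q]$ for every $[q] \in Q_L$. Equivalently, $1 - x_d^N$ annihilates $Q_L$, and the identity $1 - \overline{x}_d^N = -\overline{x}_d^N(1 - x_d^N)$ shows that $1 - \overline{x}_d^N$ annihilates $Q_L$ as well. Setting $f := x_d^N \in x_d R_\partial[x_d]$ and $g := \overline{x}_d^N \in \overline{x}_d R_\partial[\overline{x}_d]$, the pair $(1-f,\, 1-g)$ simultaneously kills every class in $Q_L$, so the movability hypothesis of Proposition \ref{prop: V_surj_generic} is verified (uniformly, even) and part 2 of that proposition gives the surjectivity of $V$.

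There is no substantial obstacle here: the entire argument rests on extracting finiteness of $Q_L$ from $\dim Q_L = 0$, after which the finite-order action of $x_d$ supplies the required annihilators. As an aside, running the same reasoning for each variable $x_i$ in place of $x_d$ recovers the equivalent full mobility characterization $\{1 - x_i^l\}_{i=1,\dots,d}$ annihilates $Q_L$ mentioned just before the statement.
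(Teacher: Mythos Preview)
Your proof is correct and takes essentially the same approach as the paper: deduce finiteness of $Q_L$ from $\dim Q_L = 0$, extract an annihilator of the form $1 - x_d^N$ (equivalently $1 - x_i^l$ with $v \cdot e_i \neq 0$) from the finite-order action of a monomial, and invoke Proposition~\ref{prop: V_surj_generic}. The only cosmetic difference is that you normalize to the standard half-space first and work with $x_d$, whereas the paper uses the ``$v \cdot \lambda > 0$'' formulation of the movability condition directly and picks any $x_i$ with $a_i \neq 0$; these are the same argument in two coordinate frames.
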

\begin{proof}
    There exists $l \geq 0$ such that $x_i^l-1=-x_i^l(1-x_i^{-l})$ is in the annihilator of $Q_L$ for each $i$. For every half-space there is some $i$ such that one of these polynomials is of the form required in Proposition \ref{prop: V_surj_generic}: if $v=(a_1,\dots,a_d)$, we choose $i$ such that $a_i \neq 0$, and then we can take $f = x_i^l$, $g=x_i^{-l}$. 
%
\end{proof}

Let us reiterate that if we choose the half-space direction as in Proposition \ref{prop: V_surj_generic} and in such a~way that there are no secondaries on $\Lambda_{<0}$, i.e.\ $\pi_{<0} L = L_{<0} \cap P_{<0}$, then by Proposition \ref{prop:V_inj} we have that $V : E_L \to Q_L$ is an isomorphism. We expect that secondary boundary operators are absent for a~generic direction $v$, so that $V$ is generically an isomorphism.

In two dimensions, where \( R = \mathbb{Z}_n[x^{\pm 1}, y^{\pm 1}] \), the analysis simplifies considerably due to mobility results established in~\cite{haah2013commuting, ruba2024homological}. For \textbf{any} choice of half-space \( \Lambda_{\geq 0} \), the boundary-to-bulk map
\[
V: E_L \longrightarrow Q_L
\]
is an isomorphism for any stabilizer code \( (P, L) \), where \( P \) is a standard symplectic \( R \)-module and \( L \subset P \) is Lagrangian. Moreover, both \( E_L \) and \( Q_L \) carry natural \( \mathbb{Z}_n \)-valued quadratic forms—denoted \( q \) and \( \theta \), respectively—arising from the quasi-symplectic structure of \( P_\partial \) in 1D and from the bulk pairing in 2D. These are defined in Section~3.3 (for \( q \)) and Section~6.4 of~\cite{ruba2024homological} (for \( \theta \)). The map \( V \) intertwines these structures, i.e., \( q = \theta \circ V \). Also see Remark~\ref{rmk: additive_q_convention} in Appendix~B for our conventions for quadratic forms. 

\begin{thm}
Let \( R = \mathbb{Z}_n[x^{\pm 1}, y^{\pm 1}] \), and let \( P \cong R^{2q} \) be a standard symplectic module. For any Lagrangian submodule \( L \subset P \) and any half-space \( \Lambda_{\geq 0} \), the boundary-to-bulk map
\[
V: (E_L, q) \longrightarrow (Q_L, \theta)
\]
is an isomorphism of \( R_\partial \)-modules that respects the quadratic forms, i.e., \( q = \theta \circ V \).
\end{thm}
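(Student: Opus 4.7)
The plan is to prove surjectivity of $V$ using 2D mobility, verify the intertwining $q = \theta \circ V$ by direct construction, and deduce injectivity formally from the non-degeneracy of the bilinear form $b$ associated with $q$. For surjectivity, the key input is that in two dimensions $Q_L$ has Krull dimension zero, by the mobility results of \cite{haah2013commuting, ruba2024homological}. After coarse-graining if necessary, $Q_L$ is annihilated by both $1-x^l$ and $1-y^l$ for some $l \geq 1$. For any nonzero half-space vector $v = (v_1,v_2) \in \mathbb{Z}^2$, some coordinate $v_i$ is nonzero; setting $f = x_i^{\epsilon l}$ and $g = x_i^{-\epsilon l}$ with $\epsilon = \mathrm{sgn}(v_i)$ produces annihilators of strictly positive and strictly negative $v$-grading. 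This is precisely the hypothesis of Corollary~\ref{cor: surj_mobile} (a special case of Proposition~\ref{prop: V_surj_generic}), so $V$ is surjective for every choice of half-space.

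For the intertwining $q = \theta \circ V$, I would compute both sides on a representative $\alpha \in P_\partial^*$. The boundary form is $q(\alpha) = \omega_\partial(\alpha_\rightarrow, \alpha_\leftarrow)$, where $\alpha_\rightarrow$ and $\alpha_\leftarrow$ are the two opposite formal-series extensions of $\alpha$ constructed in Section~3.3. The natural strategy is to lift $\alpha_\rightarrow$ and $\alpha_\leftarrow$ to half-infinite formal products of Pauli operators in $L_{\geq 0}^\perp \cap P_{\geq 0}$ extending to opposite ends of the one-dimensional boundary; these glue along the boundary into a bulk string operator representing $V[\alpha] = [\widehat{\alpha q}]$. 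The commutator phase defining $\theta(V[\alpha])$ in \cite[Section~6.4]{ruba2024homological} then reduces, under this identification, to $\omega(\alpha_\rightarrow, \alpha_\leftarrow) = q(\alpha)$, with sign conventions reconciled via Remark~\ref{rmk: additive_q_convention}.

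Injectivity now follows formally. If $V(\alpha) = 0$, then for every $\beta \in E_L$ the intertwining and the quadratic refinement identity give
\[
b(\alpha,\beta) = q(\alpha+\beta) - q(\alpha) - q(\beta) = \theta(V(\beta)) - 0 - \theta(V(\beta)) = 0,
\]
so by the non-degeneracy of $b$ established in Section~3.3, $\alpha = 0$. The main obstacle is the intertwining step: aligning the one-dimensional construction of $q$ (an $\omega_\partial$-pairing of opposite formal-series extensions of a functional on $P_\partial$) with the two-dimensional self-linking definition of $\theta$ (a braiding phase of bulk string operators) requires choosing compatible explicit Pauli lifts and carefully tracking commutators through half-infinite products. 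Once this matching is in place, surjectivity and injectivity follow from clean general arguments.
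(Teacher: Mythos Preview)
Your proposal is correct and follows essentially the same three-step approach as the paper: surjectivity from Krull dimension zero of $Q_L$ via Corollary~\ref{cor: surj_mobile}, the intertwining $q = \theta \circ V$ by direct computation (which the paper dismisses as ``a matter of bookkeeping''), and injectivity from non-degeneracy of $q$. The phrase ``after coarse-graining if necessary'' is unnecessary---Krull dimension zero already guarantees annihilators $1-x_i^l$ without coarse-graining---but this does not affect the argument.
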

\begin{proof}
     By Corollary~37 from~\cite{ruba2024homological}, the Krull dimension of $Q_L$ is always zero. By Corollary~\ref{cor: surj_mobile}, $V$ is surjective. It is a matter of bookkeeping to verify that that \( q = \theta \circ V \). Since $q$ is non-degenerate, as proven in Section~3.3, this implies that $V$ is injective.
\end{proof}

\begin{cor}
    If $d=2$, the quadratic form $\theta$ on $Q_L$ is non-degenerate, and $Q_L$ admits a Lagrangian subgroup.
\end{cor}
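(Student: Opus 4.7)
The plan is to transport the one-dimensional results from Section 3.3 to the bulk charge module $Q_L$ via the boundary-to-bulk isomorphism $V$ established in the preceding theorem. Since $d=2$, the boundary ring $R_\partial = \mathbb{Z}_n[x^{\pm 1}]$ is a one-dimensional Laurent polynomial ring, and $P_\partial$ is a quasi-symplectic $R_\partial$-module by Theorem \ref{thm: fundamental}. Hence the entire Section 3.3 machinery—the bilinear form $b$, the quadratic refinement $q$, and Theorem \ref{thm:EP_metabolic_criterion}—applies to $E_L = E_{P_\partial}$. By the previous theorem, $V : (E_L,q) \to (Q_L, \theta)$ is a $\mathbb{Z}_n$-module isomorphism with $q = \theta \circ V$, and hence it also intertwines the associated symmetric bilinear forms obtained by polarization.

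For non-degeneracy of $\theta$: in Section 3.3 we proved (in the theorem following the definition of $q$) that the bilinear form $b$ on $E_{P_\partial}$ is non-degenerate, i.e.\ the induced map $E_{P_\partial} \to \Hom_{\mathbb Z_n}(E_{P_\partial},\mathbb Z_n)$ is a bijection. Transporting this perfect duality across the isomorphism $V$ immediately yields non-degeneracy of the symmetric bilinear form associated with $\theta$ on $Q_L$, hence non-degeneracy of $\theta$ itself.

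For the existence of a Lagrangian subgroup: Theorem \ref{thm:EP_metabolic_criterion} ensures that $E_L$ always admits a Lagrangian subgroup $T$ (it need only be a $\mathbb Z_n$-submodule, possibly only invariant under a finite index subgroup of $\mathbb Z$; the argument reduces via Proposition \ref{prop:dim1_Witt_Haah} to the fact that $P_\partial$ becomes metabolic after coarse-graining). Set $T' := V(T) \subset Q_L$. Then $T'$ is isotropic, because $\theta(V(t)) = q(t)=0$ for all $t\in T$, and $T'$ equals its own orthogonal complement with respect to the bilinear form associated with $\theta$, because $V$ is a bijection intertwining the bilinear forms and $T = T^\perp$ in $E_L$. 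Thus $T'$ is a Lagrangian subgroup of $(Q_L,\theta)$.

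No substantive obstacle arises; the corollary is essentially a translation of the one-dimensional results of Section 3.3 through the isomorphism $V$ provided by the preceding theorem. The only point meriting care is that ``Lagrangian subgroup'' is taken in the sense of Appendix \ref{app:metric_group} (a $\mathbb Z_n$-submodule, not necessarily an $R_\partial$-submodule), which is exactly the notion preserved by $V$ as an isomorphism of abelian groups compatible with the quadratic forms.
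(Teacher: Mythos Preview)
Your argument is correct and is exactly the intended one: the paper states the corollary without proof because it follows immediately from the preceding theorem (the isomorphism $V:(E_L,q)\to(Q_L,\theta)$) together with the Section~3.3 results that $b$ is non-degenerate on $E_{P_\partial}$ and that $E_{P_\partial}$ always admits a Lagrangian subgroup (Theorem~\ref{thm:EP_metabolic_criterion}). Your write-up simply spells out this transport explicitly.
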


In the case of prime $n$, the result stated in the Corollary above follows from the classification result of Haah \cite{haah2021classification}, which does not cover the significantly richer case of non-prime $n$.



\subsection{Split stabilizer codes} \label{sec:split}

In this subsection, we compare our boundary operator modules~$P_\partial$ with a~construction from the theory of (Clifford) quantum cellular automata (QCAs), which applies when~$L$ is a direct summand of~$P$. We present our bulk-boundary correspondence as a generalization to the non-split case. The QCA bulk-boundary correspondence is, in fact, a case of the fundamental theorem describing the $L$-groups of Laurent extensions~\cite{ranicki1973algebraicII}, so our result may also be viewed as a~conceptual extension of that theorem.

\begin{defn}
A translation-invariant \textbf{Clifford QCA} is a symplectic automorphism of $P$; that is, an $R$-module automorphism that preserves the standard symplectic form $\Omega$. Let $\mathcal{Z} \subseteq P$ denote the submodule consisting of Pauli operators that are products of $Z$ operators only. A stabilizer code $(L, P)$ is said to be \textbf{created} by a Clifford QCA $\alpha$ if $\alpha(\mathcal{Z}) = L$.
\end{defn}

Not all stabilizer codes are created by a Clifford QCA. Conversely, a given stabilizer code may be created by multiple Clifford QCAs. Stabilizer codes that can be produced by a Clifford QCA go by various names depending on the context, including \textbf{invertible stabilizer states}, \textbf{split stabilizer codes}, and \textbf{locally flippable separators}~\cite{haah2023nontrivial}. These names reflect several equivalent characterizations of the same class of codes.
   
    \begin{lem} \label{lem:split_equiv}
    Let $n$ be a prime power, $n=p^t$. The following conditions are equivalent:
    \begin{enumerate}
        \item The short exact sequence $0\rightarrow L \rightarrow P \rightarrow P/L\rightarrow 0$ splits.
        \item $P/L$ is free.
        \item $L$ is free and $Q_L=\Ext^1(P/L, R)=0$. In particular, all topological charges $\Ext^i(P/L, R)$ vanish.   
        \item There exists a Clifford QCA which creates $(L, P)$.
    \end{enumerate}

\end{lem}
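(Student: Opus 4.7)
The plan is to prove the four conditions equivalent by establishing (4) $\Rightarrow$ (2), (1) $\Leftrightarrow$ (2), (2) $\Leftrightarrow$ (3), and finally (2) $\Rightarrow$ (4). The central external input is a theorem of Quillen--Suslin--Swan type: since $n = p^t$ makes $\mathbb{Z}_n$ local, every finitely generated projective module over $R$ is free (Quillen--Suslin for polynomials over a local ring, combined with Swan's extension to Laurent polynomial rings). With this in hand, most implications reduce to routine homological manipulations; the main work is constructing a Clifford QCA from a splitting.

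For the easy implications, (4) $\Rightarrow$ (2) follows because a QCA $\alpha$ with $\alpha(\mathcal{Z}) = L$ descends to an isomorphism $\overline{\alpha} : P/\mathcal{Z} \xrightarrow{\sim} P/L$, and $P/\mathcal{Z} \cong L_0^*$ is free. The equivalence (1) $\Leftrightarrow$ (2) is immediate: if the sequence splits then $P/L$ is a direct summand of the free module $P$, hence projective, hence free by the cited fact; conversely, if $P/L$ is free then $\Ext^1_R(P/L, L) = 0$ so the sequence splits. For (2) $\Leftrightarrow$ (3): freeness of $P/L$ annihilates all $\Ext^i_R(P/L, R)$ for $i \geq 1$ and, via the splitting, identifies $L$ with a direct summand of the free module $P$, hence $L$ is free. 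Conversely, starting from (3), the symplectic form induces a map $P \to L^*$ whose kernel is $L^\perp = L$ and whose cokernel is precisely $Q_L$; the hypothesis $Q_L = 0$ gives $P/L \cong L^*$, which is free because $L$ is.

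The main obstacle is (2) $\Rightarrow$ (4): given a splitting $P = L \oplus M$ with $M$ free, one must construct a symplectic automorphism of $P$ carrying $\mathcal{Z}$ to $L$. Both $L$ and $M$ have rank $q$ (counting: $P$ has rank $2q$, and $\Omega$ pairs $L$ with $M$ perfectly, giving $L \cong M^*$). Pick a basis $\{e_1, \dots, e_q\}$ of $M$ and the dual basis $\{f_1, \dots, f_q\}$ of $L$ with $\Omega(e_i, f_j) = \delta_{ij}$. Set $a_{ij} := \Omega(e_i, e_j)$; this matrix is anti-hermitian, and the diagonal entries $a_{ii}$ have vanishing constant term because the axioms of the standard symplectic form force $\omega(e_i, e_i) = 0$. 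To produce a Lagrangian complement of $L$, define $e_i' := e_i - \sum_k c_{ik} f_k$; a direct calculation using that $L$ is Lagrangian yields
\begin{equation}
\Omega(e_i', e_j') = a_{ij} - c_{ji} + \overline{c_{ij}},
\end{equation}
reducing the Lagrangian condition to $c_{ji} - \overline{c_{ij}} = a_{ij}$. For $i \neq j$ one may freely choose $c_{ji}$ and set $c_{ij} = \overline{c_{ji} - a_{ij}}$; the nontrivial case is $i = j$, where $c_{ii} - \overline{c_{ii}} = a_{ii}$ is soluble precisely because $a_{ii}$ is anti-hermitian with zero constant term: fix any total order on $\mathbb{Z}^d$ and set the coefficient of $x^\lambda$ in $c_{ii}$ equal to that of $a_{ii}$ for $\lambda > 0$ and to zero otherwise. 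The new vectors $\{e_i', f_j\}$ then form a standard symplectic basis of $P$, and the unique $R$-linear isomorphism sending the canonical symplectic basis of $L_0 \oplus L_0^*$ to $(f_j, e_i')$ is the desired Clifford QCA creating $(L, P)$.
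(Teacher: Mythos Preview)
Your proof is correct and follows the same overall strategy as the paper: Quillen--Suslin gives projective-implies-free, and the QCA is built from a Lagrangian complement obtained by correcting an arbitrary splitting. The only differences are cosmetic: you invoke Quillen--Suslin--Swan directly over the local ring $\mathbb{Z}_{p^t}$ where the paper reduces to $\mathbb{F}_p$ via Nakayama, and for $(3)\Rightarrow(2)$ you use $P/L \cong L^*$ (from the Lagrangian condition and $Q_L=0$) where the paper instead shows $(3)\Rightarrow(1)$ via $\Ext^1(P/L,L)=0$.
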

    \begin{proof}
        $1 \implies 2:$ $P/L$ is a direct summand of free module $P$, so it is projective. By Quillen-Suslin theorem, every finitely generated projective module over $R/pR$ is free. This statement can be extended to $R$-modules as follows. If $M$ is a finitely generated projective $R$-module, then $M/pM$ is projective and hence free over $R/pR$. Lift a~basis of $M/pM$ to $\{m_1, \dots, m_k\}\subset M$. Then $m_1,\dots,m_k$ generate $M$ by Nakayama's lemma. We obtain a short exact sequence 
        \begin{equation}
         0\rightarrow K\rightarrow R^k\xrightarrow{\sigma} M\rightarrow 0,   
        \end{equation}
        where $\sigma(r_i)_{i=1}^k = \sum_i r_i m_i$ and $K=\ker \sigma$. This sequence splits because $M$ is projective; hence it remains exact after reduction mod $p$. On the other hand, $\sigma$~becomes an isomorphism after tensoring with $R/pR$, so $K/pK=0$. Hence, by~Nakayama's lemma again, $K=0$. 
        
        $2 \implies 3:$ Since $P/L$ is free, the short exact sequence splits, so $L$ is projective and hence free as well. The vanishing of $Q_L$ follows from the definition of $\Ext$. 
        
        $3 \implies 1:$ 
        As $L$ is free, $\Ext^1(P/L, R)=0$ implies $\Ext^1(P/L, L)=0$ as well. In other words, all extensions of $P/L$ by $L$ are trivial and there exists $P\cong L\oplus P/L$. 
        
        $4 \implies 1$ is obvious. To see that $1 \implies 4$, one uses the standard fact that the splitting $P/L \to P$ can be chosen so that its image is a Lagrangian submodule~$L'$. Then $L,L'$ are free of rank half the rank of $P$ and canonically dual via $\Omega$, $L' \cong L^*$. One can define $\alpha$ by sending $\mathcal Z$ to $L$ by an arbitrarily chosen isomorphism, and the module $\mathcal X$ of $X$ operators to $L'$ via the inverse dual isomorphism. 
        
        For the convenience of the reader we briefly review why the splitting can be chosen Lagrangian. The map
        \begin{equation}
            P/L \ni p + L \mapsto \left. \Omega (\cdot , p) \right|_{L} \in L^* 
        \end{equation}
        is an isomorphism, and we use it as identification. We choose a splitting $s : L^* \to P$ of the quotient map $P \to L^*$ and let 
        \begin{equation}
            D(\varphi,\psi) := \Omega(s(\varphi),s(\psi)).
        \end{equation}
        We need to adjust $s$ so that $D=0$. Let $\varphi_1,\dots,\varphi_t$ be a basis of $L^*$. We consider the matrix $D_{ij} = D(\varphi_i,\varphi_j)$. Then $D_{ij} = - \overline{D_{ji}}$, and $D_{ii}$ has zero constant term. Hence there exists a matrix $H_{ij}$ such that $D_{ij} = H_{ij} - \overline{H_{ji}}$. We let $H$ be the sesquilinear form on $L^*$ characterized by $H(\varphi_i,\varphi_j)= H_{ij}$. We have
        \begin{equation}
            D(\varphi,\psi) = H(\varphi,\psi) - \overline{H(\psi,\varphi)}.
        \end{equation}
        There exists $h : L^* \to L$ such that
        \begin{equation}
            H(\varphi,\psi) = \Omega(s(\varphi),h(\psi)).
        \end{equation}
        The splitting $s-h$ is as required.
    \end{proof}



    
\begin{remark}
    Split stabilizer codes provide examples of invertible gapped local Hamiltonians. Classification of the latter is part of a~conjecture by Kitaev. On the other hand, split stabilizer codes are simply classified by Clifford QCAs that create them, up to appropriate equivalence relation. This problem have been resolved in~\cite{haah2025topological, yang2025categorifying} using algebraic $L$-theory. The latter employs bounded $L$-theory~\cite{ranicki1992lower} to obtain a result that remains valid under coarse equivalence of the underlying geometric space. In particular, the assumption of translation invariance on Euclidean lattices is no longer required. It is conceivable that an analogous generalization should also be possible for the results of the present paper.

\end{remark}

Clifford QCAs belong to the actively studied topic of quantum cellular automata (QCAs). A bulk-boundary correspondence has been established for a~QCA and its boundary algebra~\cite{haah2021clifford, haah2023invertible}. 
We will see that the boundary algebra of $\alpha$ (as defined below) is Witt equivalent to the code’s boundary operator module $P_{\partial}$. Therefore, $P_{\partial}$ provides a natural extension of the QCA boundary algebra concept to non-split stabilizer codes, even in the absence of a generating QCA. 

We introduce $R_\partial$-submodules of operators supported in slabs:
{\begin{equation}
      P_{[m,n]} \;=\; P_{\Lambda_{m,n}}, 
  \quad
  \Lambda_{m,n} 
    := \{(\mathbf{m},h)\in\ZZ^{d-1}\times\ZZ : m\le h\le n\},
  \quad m \leq n.
\end{equation}

Also, set
\begin{equation}
  P_{\le r} \;=\; P_{\Lambda_{\le r}}, 
  \quad
  \Lambda_{\le r} 
    := \{a=(a_1,\dots,a_d)\in\ZZ^d : a_d\le r\},
\end{equation} and similarly for $P_{> r}$.
By definition, the \textbf{spread} $l$ of a Clifford QCA $\alpha$ is the smallest nonnegative integer such that
\begin{equation}
    \alpha\bigl(P_{[m,n]}\bigr)\subset P_{[m-l,\;n+l]}
  \quad\text{for all }m<n.
\end{equation}
Its inverse $\alpha^{-1}$ has the same spread. Indeed, since $\alpha$ is a symplectic automorphism, for~any $p \in P_{[m,n]}$, we have that $\alpha^{-1}(p)$ is orthogonal to any $x \in P_{[m-l, n+l]}^\perp$. Hence, $\alpha^{-1}(p) \in P_{[m-l, n+l]}$.

For any $r\ge l$,~\cite{haah2023invertible} defines the \textbf{boundary algebra} as
\begin{equation}
      B^r(\alpha)
    := \alpha(P_{\le r}) \;\cap\; P_{[0,\;r+l]}
    = \alpha(P_{\le r}) \;\cap\; P_{\ge0}.
\end{equation}

Notice $P_{[0,\;r+l]}$ is a standard symplectic $R_\partial$-module and $B^r(\alpha)$ is its $R_\partial$-submodule. It is equipped with a skew-hermitian form given by the restriction of the one on $P_{[0,\;r+l]}$. This construction originally appears in L-theory~\cite{ranicki1973algebraicII} where it was proven that $B^r(\alpha)$ is projective and symplectic\footnote{In that context, this property is more commonly referred to as unimodularity.}. Moreover, its Witt class is independent of $r\geq l$. We provide an argument below.

\begin{lem}
    Let
\begin{equation}
    D^r(\alpha) := \alpha(P_{> r}) \cap P_{[0,\;r+l]}.
\end{equation}
Then we have the orthogonal direct sum decomposition
\begin{equation}
    D^r(\alpha) \oplus B^r(\alpha) = P_{[0,\;r+l]}.
\end{equation}
\end{lem}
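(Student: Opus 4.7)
The plan is to push the ambient $R_\partial$-linear splitting $P = P_{\leq r} \oplus P_{>r}$ forward through the symplectic automorphism $\alpha$, obtaining the decomposition $P = \alpha(P_{\leq r}) \oplus \alpha(P_{>r})$, and then to check that restricting this to the slab $P_{[0, r+l]}$ yields the desired $B^r(\alpha) \oplus D^r(\alpha) = P_{[0, r+l]}$. Concretely, given $p \in P_{[0, r+l]}$, one writes $p = \alpha(a) + \alpha(b)$ with $a \in P_{\leq r}$ and $b \in P_{>r}$ uniquely; the heart of the proof is verifying that each of $\alpha(a)$ and $\alpha(b)$ already sits inside the slab.

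First, the spread bound immediately gives $\alpha(a) \in P_{\leq r+l}$ and $\alpha(b) \in P_{>r-l}$, and the hypothesis $r \geq l$ promotes the latter to $\alpha(b) \in P_{\geq 0}$. For the two remaining containments one uses the complementary pieces: applying the projection $\pi_{<0}$ to $\alpha(a) = p - \alpha(b)$ annihilates both terms on the right, forcing $\alpha(a) \in P_{\geq 0}$ and hence $\alpha(a) \in B^r(\alpha)$; symmetrically, applying $\pi_{>r+l}$ to $\alpha(b) = p - \alpha(a)$ annihilates both terms on the right, forcing $\alpha(b) \in P_{\leq r+l}$ and hence $\alpha(b) \in D^r(\alpha)$. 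The direct-sum property then follows from $B^r(\alpha) \cap D^r(\alpha) \subset \alpha(P_{\leq r} \cap P_{>r}) = 0$ by injectivity of $\alpha$.

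For orthogonality with respect to the $R_\partial$-valued symplectic pairing on $P_{[0, r+l]}$ (the one obtained by summing $\omega$ over translates parallel to the boundary), the key identity is that for every $\mu \in \mathbb{Z}^{d-1}$, using $R$-linearity and $\Omega$-preservation of $\alpha$,
\begin{equation}
    \omega(x^\mu \alpha(a), \alpha(b)) = \omega(\alpha(x^\mu a), \alpha(b)) = \omega(x^\mu a, b),
\end{equation}
and this vanishes because boundary-parallel translation preserves the $x_d$-height, so $x^\mu a \in P_{\leq r}$ and $b \in P_{>r}$ have disjoint supports, making the coordinate sum defining $\omega$ vanish termwise.

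I do not anticipate any serious obstacle here: the whole argument is a short projection chase relying only on the spread bound, the $R$-linearity of $\alpha$, and $\alpha$-invariance of $\Omega$. The one point that requires care is bookkeeping which $R_\partial$-valued form is being used on the slab, so that the orthogonality claim becomes the coefficient-wise statement that reduces to disjointness of supports.
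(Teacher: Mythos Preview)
Your proof is correct and follows essentially the same approach as the paper's: decompose $p\in P_{[0,r+l]}$ as $\alpha(a)+\alpha(b)$ via $\alpha^{-1}(p)=a+b$ with $a\in P_{\le r}$, $b\in P_{>r}$, use the spread bound together with $r\ge l$ to locate each summand in the slab, and deduce orthogonality and triviality of the intersection from $\alpha(P_{\le r})\cap\alpha(P_{>r})=0$ and $\omega$-preservation. Your explicit use of the projections $\pi_{<0}$, $\pi_{>r+l}$ and your unpacking of the $R_\partial$-valued orthogonality are just more detailed versions of what the paper states in one line.
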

\begin{proof}
The intersection $D^r(\alpha) \cap B^r(\alpha) = \{0\}$ is trivial because
\begin{equation}
    \alpha(P_{\le r}) \cap \alpha(P_{> r}) = \{0\}.
\end{equation}
For similar reasons, $D^r(\alpha)$ and $B^r(\alpha)$ are orthogonal.

To show that $D^r(\alpha) + B^r(\alpha) = P_{[0,\;r+l]}$, take \( p \in P_{[0,\;r+l]} \). Write
\begin{equation}
    \alpha^{-1}(p) = a^- + a^+, \quad \text{with } a^- \in P_{\le r},\; a^+ \in P_{> r}.
\end{equation}
Then
\begin{equation}
    p = \alpha(a^-) + \alpha(a^+),
\end{equation}
with
\begin{equation}
    \alpha(a^-) \in P_{\le r+l}, \quad \alpha(a^+) \in P_{> r-l} \subset P_{>0}.
\end{equation}
Since \( p \in P_{[0,\;r+l]} \), it follows that
\begin{equation}
    \alpha(a^-),\; \alpha(a^+) \in P_{[0,\;r+l]}.
\end{equation}
Therefore,
\begin{equation}
    \alpha(a^-) \in B^r(\alpha), \quad \alpha(a^+) \in D^r(\alpha).
\end{equation}
\end{proof}

Lemma above implies that $B^r(\alpha)$ is a projective module, hence a free module (see the proof of Lemma \ref{lem:split_equiv}), and that it is symplectic: the standard symplectic form on \( P_{[0,\;r+l]} \) restricts to non-degenerate (and Witt-opposite) forms on \( B^r(\alpha) \) and \( D^r(\alpha) \).

Now, for any \( \delta > 0 \), translation symmetry gives an isomorphism
\begin{equation}
    B^{r+\delta}(\alpha)
    = \alpha(P_{\le r+\delta}) \cap P_{[0,\;r+\delta+l]} 
    \cong \alpha(P_{\le r}) \cap P_{[-\delta,\;r+l]}.
\end{equation}
If \( r \geq l \), the latter module decomposes as
\begin{equation}
    B^r(\alpha) \oplus P_{[-\delta,\;0]}.
\end{equation}
Since \( P_{[-\delta,\;0]} \) is a standard symplectic module, the Witt class of \( B^r(\alpha) \) is independent of \( r \ge l \).

Since $P_\partial$ is finitely generated, there exists $M > 0$ such that
\begin{equation} \label{eq: strip_support}
    P_{[0, M]} + L_{\geq 0} \supset L_{\geq 0}^\perp \cap P_{\geq 0}.
\end{equation}
Since $\alpha(\mathcal{Z}) = L$, one can take\footnote{The term $P_{[0,M]}$ in \eqref{eq: strip_support} is not needed at all if $l=0$, in which case $L_{\geq 0} = L_{\geq 0}^\perp \cap P_{\geq 0}$.} $M = 2l-1$. We keep $M$ and $l$ independen to clarify the roles each quantity plays in the proofs that follow. Let us show that one can take $M =2l-1$. We have the inclusion $\alpha(\mathcal Z_{\geq l}) \subset L_{\geq 0}$, so
\begin{equation}
   L_{\geq 0}^\perp \subset \alpha( \mathcal Z_{\geq l}^\perp) = \alpha( \mathcal Z_{\geq l} + P_{<l}) = \alpha( \mathcal Z_{\geq l}) + \alpha(P_{<l}) \subset L_{\geq 0} + P_{<2l}. 
\end{equation}
Hence $L_{\geq 0}^\perp \subset L_{\geq 0} + P_{[0,2l-1]}$. 

\begin{lem} \label{lem: strip_boundary}
Let $B := B^{M+l}(\alpha)$. Then
\begin{equation}
    P_{[0, M]} \subset B.
\end{equation}
\end{lem}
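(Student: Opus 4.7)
The statement to prove is that $P_{[0,M]} \subset \alpha(P_{\leq M+l}) \cap P_{\geq 0}$. Both inclusions defining $B$ should be verified separately, and the proof looks very short.

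First, I would observe that $P_{[0,M]} \subset P_{\geq 0}$ is immediate from the definition of the slab $\Lambda_{0,M}$, since $M \geq 0$ (indeed $M = 2l-1$ with $l \geq 1$, or trivially $M = 0$ if $l=0$).

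The remaining inclusion, $P_{[0,M]} \subset \alpha(P_{\leq M+l})$, is equivalent to $\alpha^{-1}(P_{[0,M]}) \subset P_{\leq M+l}$. Here I would invoke the observation made in the paragraph just before the definition of the boundary algebra: since $\alpha$ is a symplectic automorphism of spread $l$, the inverse $\alpha^{-1}$ also has spread $l$. Applied to the slab $P_{[0,M]}$, this gives
\begin{equation}
\alpha^{-1}\bigl(P_{[0,M]}\bigr) \subset P_{[-l,\,M+l]} \subset P_{\leq M+l},
\end{equation}
which is exactly what is needed. Combining the two inclusions yields $P_{[0,M]} \subset B$.

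There is essentially no obstacle here; the content of the lemma is just that the window $[0,M+l]$ defining $B = B^{M+l}(\alpha)$ has been chosen wide enough on the right to absorb the spread of $\alpha^{-1}$ applied to $P_{[0,M]}$. The lemma is really a packaging step, preparing to exploit \eqref{eq: strip_support} (which says that modulo $L_{\geq 0}$ every element of $L_{\geq 0}^\perp \cap P_{\geq 0}$ can be represented in the slab $P_{[0,M]}$, hence in $B$) in the subsequent comparison of $P_\partial$ with the boundary algebra $B$.
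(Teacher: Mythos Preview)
Your proof is correct and matches the paper's argument exactly: the paper's one-line proof is that for $p \in P_{[0,M]}$ one has $\alpha^{-1}(p) \in P_{[-l,M+l]} \subset P_{\leq M+l}$, which is precisely your second step (the inclusion $P_{[0,M]} \subset P_{\geq 0}$ being left implicit).
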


\begin{proof}
For any $p \in P_{[0, M]}$, we have $\alpha^{-1}(p) \in P_{[-l, M+l]} \subset P_{\leq M+l}$.
\end{proof}

\begin{lem} \label{lem: boundary_well-defined}
The isotropic $R_\partial$-submodule $L_B := L \cap B$ of $B$ satisfies
\begin{equation}
    L_B^\perp \cap B \subset L_{\geq 0}^\perp \cap P_{\geq 0}.
\end{equation}
\end{lem}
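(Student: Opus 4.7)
The plan is to fix $x \in L_B^\perp \cap B$ and an arbitrary $y \in L_{\geq 0}$, and show $\omega(x,y) = 0$; the containment $x \in P_{\geq 0}$ is automatic since $B \subset P_{[0,M+2l]}$. The natural strategy is to use the symplecticity of $\alpha$ to rewrite $\omega(x,y) = \omega(\alpha^{-1}(x), \alpha^{-1}(y))$, where both factors now have an easily controlled support in the $d$-direction: the definition $B = \alpha(P_{\leq M+l}) \cap P_{[0, M+2l]}$ gives $\alpha^{-1}(x) \in P_{\leq M+l}$, and $\alpha(\mathcal Z) = L$ gives $\alpha^{-1}(y) \in \mathcal Z$.

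With this in hand, I would split $\alpha^{-1}(y) = z_1 + z_2$ with $z_1 \in \mathcal Z \cap P_{\leq M+l}$ and $z_2 \in \mathcal Z \cap P_{>M+l}$, so that
\[
\omega(x,y) = \omega(\alpha^{-1}(x), z_1) + \omega(\alpha^{-1}(x), z_2).
\]
The second summand vanishes by disjointness of supports in the $d$-direction. For the first, I would apply $\alpha$ to both arguments to rewrite it as $\omega(x, \alpha(z_1))$, which vanishes provided $\alpha(z_1) \in L_B$, since $x$ is assumed orthogonal to $L_B$.

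The main obstacle is thus the verification $\alpha(z_1) \in L_B = L \cap B$. Membership in $L = \alpha(\mathcal Z)$ is clear from $z_1 \in \mathcal Z$, and membership in $\alpha(P_{\leq M+l})$ is built into the choice of $z_1$, so what remains is $\alpha(z_1) \in P_{[0, M+2l]}$. The upper bound $\alpha(z_1) \in P_{\leq M+2l}$ is immediate from the spread estimate, and for the lower bound I would write $\alpha(z_1) = y - \alpha(z_2)$, note that $y \in P_{\geq 0}$ by hypothesis, and invoke the spread of $\alpha$ on $z_2 \in P_{> M+l}$ to get $\alpha(z_2) \in P_{> M} \subset P_{\geq 0}$; here the choice $M = 2l-1 \geq 0$ used to arrange \eqref{eq: strip_support} (any $M \geq 0$ would suffice at this point) is exactly what makes $P_{>M}$ sit inside $P_{\geq 0}$. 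Once $\alpha(z_1) \in L_B$ is established, the two-term computation collapses to $\omega(x,y) = 0$ and the lemma follows.
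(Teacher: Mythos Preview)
Your argument is correct and follows the paper's proof essentially step for step, only with different variable names (your $x,y,z_1,z_2$ are the paper's $b,x,\pi_{\leq r}z,\pi_{>r}z$ with $r=M+l$). In particular, the key verification that $\alpha(z_1)\in L_B$ via the decomposition $\alpha(z_1)=y-\alpha(z_2)$ and the spread estimate $\alpha(z_2)\in P_{>M}\subset P_{\geq 0}$ is exactly the paper's argument.
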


\begin{proof}
Throughout the proof, let $r := l + M$.

\smallskip
\noindent
Let $b \in L_B^\perp \cap B$. Then $b \in B \subset P_{\geq 0}$, and there exists $c \in P_{\leq r}$ such that
\begin{equation}
    \alpha(c) = b.
\end{equation}
Given $x \in L_{\geq 0}$, we have $z := \alpha^{-1}(x) \in \mathcal{Z}$. Since $\alpha$ preserves the symplectic form $\omega$, and the projection $\pi_{\leq r}$ onto $P_{\leq r}$ satisfies Lemma~\ref{lem: orthogonal}, we compute:
\begin{equation}
    \omega(b, x) = \omega(c, z) = \omega(c, \pi_{\leq r} z) = \omega(b, \alpha(\pi_{\leq r} z)).
\end{equation}
This vanishes if $\alpha(\pi_{\leq r} z) \in L_B$. First, note that $\alpha(\pi_{\leq r} z) \in L$, since $\alpha(\mathcal{Z}) = L$ and $\pi_{\leq r} z \in \mathcal{Z}$. Next, since $\pi_{\leq r} z \in P_{\leq r}$, it remains to show that $\alpha(\pi_{\leq r} z) \in P_{\geq 0}$. Observe:
\begin{equation}
    \alpha(\pi_{\leq r} z) = x - \alpha(\pi_{>r} z) ,
\end{equation}
where $\alpha(\pi_{> r} z) \in P_{>r-l} \subset P_{\geq 0}$.
\end{proof}

\begin{prop}\label{prop: Bounary_boundary}
     There is an isomorphism 
     \begin{align}
         \frac{L_B^\perp\cap B}{L_B}&\xrightarrow[]{\sim} \frac{L_{\geq 0}^\perp\cap P_{\geq 0}}{L_{\geq 0}}=P_\partial \quad \text{given by}\\
         b+L_B &\mapsto b + L_{\geq 0}. \nonumber
     \end{align}
\end{prop}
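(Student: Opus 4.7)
The plan is to verify well-definedness, injectivity, and surjectivity of the map directly, leveraging the two preparatory lemmas \ref{lem: strip_boundary} and \ref{lem: boundary_well-defined}, together with the slab containment \eqref{eq: strip_support}. The map is induced by the inclusion $B \hookrightarrow P_{\geq 0}$, so the $R_\partial$-linearity is automatic once well-definedness is established, and no further structural check is needed.

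First, for well-definedness, I would note that $L_B = L \cap B \subset L \cap P_{\geq 0} = L_{\geq 0}$, so any representative of the zero class on the left is already in $L_{\geq 0}$ and descends to zero on the right. Lemma \ref{lem: boundary_well-defined} gives the containment $L_B^\perp \cap B \subset L_{\geq 0}^\perp \cap P_{\geq 0}$, so the map indeed takes values in $P_\partial$. Injectivity is equally immediate: if $b \in L_B^\perp \cap B$ represents the zero class, then $b \in L_{\geq 0} \subset L$; since also $b \in B$, we obtain $b \in L \cap B = L_B$.

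For surjectivity, I would start from an arbitrary $y \in L_{\geq 0}^\perp \cap P_{\geq 0}$ and invoke the slab bound \eqref{eq: strip_support} to write $y = p + \ell$ with $p \in P_{[0,M]}$ and $\ell \in L_{\geq 0}$. By Lemma \ref{lem: strip_boundary}, $p \in P_{[0,M]} \subset B$. Since $L_{\geq 0}$ is isotropic we have $\ell \in L_{\geq 0}^\perp$, and therefore $p = y - \ell \in L_{\geq 0}^\perp$. Using $L_B \subset L_{\geq 0}$, this forces $p \in L_B^\perp$, so $p \in L_B^\perp \cap B$. By construction $p + L_{\geq 0} = y + L_{\geq 0}$ in $P_\partial$.

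I do not anticipate a genuine obstacle here: all of the real content—namely the slab bound for $L_{\geq 0}^\perp \cap P_{\geq 0}$, the fact that $P_{[0,M]}$ sits inside $B$, and the control of $L_B^\perp \cap B$ coming from the QCA structure—has already been packaged into \eqref{eq: strip_support}, Lemma \ref{lem: strip_boundary}, and Lemma \ref{lem: boundary_well-defined}. The proposition is essentially a three-line bookkeeping consequence of these, with the only mildly subtle point being the use of isotropy of $L_{\geq 0}$ in the surjectivity step to transfer the orthogonality of $y$ to the chosen representative $p \in B$.
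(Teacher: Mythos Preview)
Your proof is correct and follows exactly the same route as the paper: well-definedness via Lemma~\ref{lem: boundary_well-defined} and the inclusion $L_B \subset L_{\geq 0}$, injectivity from $L_{\geq 0} \cap B \subset L \cap B = L_B$, and surjectivity via \eqref{eq: strip_support} together with Lemma~\ref{lem: strip_boundary}. Your version is slightly more explicit in the surjectivity step, spelling out why the slab representative $p$ lies in $L_B^\perp$ (using isotropy of $L_{\geq 0}$ and $L_B \subset L_{\geq 0}$), a point the paper leaves to the reader.
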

\begin{proof}
The map is well-defined due to Lemma~\ref{lem: boundary_well-defined} and the inclusion $L_{B} \subset L_{\geq 0}$. It~is surjective because of inclusion~\eqref{eq: strip_support} and Lemma~\ref{lem: strip_boundary}. Finally, it is injective: if~$b\in L_B^\perp \cap B$ belongs to $L_{\geq 0}$ then it belongs to $L_B=L\cap B.$
\end{proof}
\begin{cor}
    The boundary algebra $B$ and the module of boundary operators $P_\partial$ are equivalent in the Witt group $W(R_\partial)$. 
\end{cor}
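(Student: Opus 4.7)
The strategy is to apply Lemma~\ref{lem:reduction_Witt_equivalent} to the (symplectic, hence quasi-symplectic) module $B$, with coisotropic submodule $N := L_B^{\perp_B}$, where $\perp_B$ denotes the orthogonal complement taken inside $B$. Once the hypothesis $N^{\perp_B} \subset N$ is verified, the lemma produces a Lagrangian embedding of $N$ into $B \ominus N/N^{\perp_B}$, and hence the equality $[B] = [N/N^{\perp_B}]$ in $W(R_\partial)$. Proposition~\ref{prop: Bounary_boundary} then identifies $N/N^{\perp_B} = L_B^{\perp_B}/L_B$ with $P_\partial$, completing the proof.

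The real work is to verify $N^{\perp_B} \subset N$, and moreover that $N^{\perp_B}$ equals $L_B$, so that the resulting quotient matches $P_\partial$ on the nose. By part 3 of Proposition~\ref{prop: quasi-sym}, $L_B^{\perp_B \perp_B}$ coincides with the saturation of $L_B$ inside $B$, so both claims reduce to a single assertion: that $L_B = L \cap B$ is already saturated in $B$. This is the one place where the Lagrangian hypothesis on $L$ is used. Since $L = L^{\perp}$, we have $L = L^{\perp \perp}$, so $L$ is saturated in $P$. Given $x \in B$ and a regular $r \in R_\partial$ with $rx \in L_B$, the element $r$ is also regular when viewed in $R$, because $R = R_\partial[x_d^{\pm 1}]$ is free as an $R_\partial$-module and multiplication by $r$ acts diagonally on that basis. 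Then $rx \in L$ together with saturation of $L$ in $P$ forces $x \in L$, so $x \in L \cap B = L_B$.

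The main obstacle is really this saturation step; once it is in hand, the rest is a direct application of Lemma~\ref{lem:reduction_Witt_equivalent}. What remains is a routine compatibility check: the isomorphism of Proposition~\ref{prop: Bounary_boundary} must intertwine the quasi-symplectic forms, not merely the $R_\partial$-module structures. This is essentially automatic, since both $\Omega_\partial$ on $P_\partial$ and the form induced on $L_B^{\perp_B}/L_B$ are obtained by restricting and then descending the single bulk form $\Omega$, and the map $b + L_B \mapsto b + L_{\geq 0}$ preserves representatives; hence it intertwines the two descents.
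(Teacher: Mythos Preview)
Your proof is correct and follows the same route as the paper: apply Lemma~\ref{lem:reduction_Witt_equivalent} with $N = L_B^{\perp_B}$, then invoke Proposition~\ref{prop: Bounary_boundary}. The saturation argument you supply for $L_B^{\perp_B\perp_B} = L_B$ (via $L = L^{\perp\perp}$ in $P$ and regularity of $r \in R_\partial$ in $R$) makes explicit a step the paper's one-line proof leaves implicit, and your check that the isomorphism of Proposition~\ref{prop: Bounary_boundary} intertwines the induced forms is a valid, if routine, verification.
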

\begin{proof}
By Lemma~\ref{lem:reduction_Witt_equivalent}, $B$ is Witt equivalent to $(L_B^\perp \cap B)/L_B$.
\end{proof}

    \appendix

    \section{Groups of Heisenberg type} \label{app:Heisenberg}

    \subsection{Central extensions}
    \label{app:central}

    Let $A, Z$ be abelian groups. In this Appendix we discuss central extensions
    of $A$ by $Z$, that is, groups $H$ containing $Z$ as a central subgroup and with
    a fixed isomorphism $H / Z \cong A$. We will use additive notation for the
    group operation in $A$, and multiplicative notation for those in $H$ and $Z$.

    It is well known that central extensions of $A$ by $Z$ are classified
    up to isomorphism of extensions (i.e.\ isomorphisms respecting the inclusions
    and quotient identifications) by the second cohomology group $H^{2}(A,Z)$. In
    this appendix, we briefly review this classification and examine the extent
    to which a~central extension can be determined by its commutator form.

    Given a central extension $H$, the associated commutator form is the bilinear
    map $\omega : A \times A \to Z$ defined by
    \begin{equation}
        T_{a} T_{a'}= T_{a'}T_{a} \omega(a,a') \label{eq:commutator_form}
    \end{equation}
    for elements $T_{a},T_{a'}\in H$ representing $a,a' \in A$. This form
    $\omega$ is well-defined: it does not depend on the choice of representatives.
    It is alternating, meaning $\omega(a,a)=1$ for all $a \in A$, and hence in
    particular skew-symmetric.

    We will see that if $Z$ is a \textit{divisible group}, commutator forms
    classify central extensions of $A$ by $Z$. In this case the cohomology group
    $H^{2}(A,Z)$ can be identified with the group of all alternating bilinear forms
    $A \times A \to Z$. The case we of primary interest to us is
    $Z = \mathbb{T}$, the circle group, which is a divisible group.

    Let $H$ be a central extension of $A$ by $Z$. For~every $a \in A$ we fix a
    representative $T_{a} \in H$. Then every element of $H$ can be uniquely written
    in the form $z T_{a}$, where $z \in Z$ and $a \in A$. The multiplication in $H$
    takes the form
    \begin{equation}
        z T_{a} \cdot z' T_{a'}= zz' \chi(a,a') T_{a+a'}, \label{eq:extension_mult}
    \end{equation}
    where $\chi$ is a function $A \times A \to Z$. Associativity of the group
    operation in $H$ imposes that $\chi$ satisfies the $2$-cocycle condition:
    \begin{equation}
        \chi(a,a')\chi(a+a',a'') = \chi(a,a'+a'') \chi(a',a'').
    \end{equation}
    Conversely, given a $2$-cocycle $\chi$ one can construct a central extension
    $H$ as the Cartesian product $Z \times A$ with multiplication defined by
    \begin{equation}
        (z,a) \cdot (z',a') = (zz' \chi(a,a'), a+a').
    \end{equation}
    With the identification $T_{a} = (1,a)$, this is simply a concrete realization
    of the multiplication rule in \eqref{eq:extension_mult}.

    Now let $\widetilde H$ be another central extension of $A$ by $Z$. For every
    $a \in A$ we choose a~representative $\widetilde T_{a} \in \widetilde{H}$.
    The multiplication in $\widetilde{H}$ is described by a $2$-cocycle
    $\widetilde \chi$, via~a~formula analogous to \eqref{eq:extension_mult}. A homomorphism
    of central extensions is a~group homomorphism $H \to \widetilde{H}$ of the form
    \begin{equation}
        z T_{a} \mapsto z \phi(a) \widetilde T_{a}, \label{eq:extension_homomorphism}
    \end{equation}
    where $\phi : A \to Z$ is a function. Every such homomorphism is clearly
    bijective. In~order for \eqref{eq:extension_homomorphism} to define group
    homomorphism, the function $\phi$ must satisfy
    \begin{equation}
        \widetilde \chi(a,a') = \chi(a,a') \phi(a+a') \phi(a)^{-1}\phi(a')^{-1}.
        \label{eq:cohomologous}
    \end{equation}

    Two $2$-cocycles $\chi,\widetilde \chi$ are called cohomologous if there
    exists a function $\phi : A \to Z$ such that \eqref{eq:cohomologous} holds.
    The set of equivalence classes of $2$-cocycles under this relation forms the
    second cohomology group $H^{2}(A,Z)$. As discussed above, the elements of
    $H^{2}(A,Z)$ classify central extensions of $A$ by $Z$ up to isomorphism.

    Given any extension $H$ of $A$ by $Z$ there is an associated commutator form
    defined as in \eqref{eq:commutator_form}. Using the $2$-cocycle $\chi$ that
    describes the extension, this commutator form $\omega$ is given by
    \begin{equation}
        \omega(a,a') = \chi(a,a') \chi(a',a)^{-1}. \label{eq:omega_from_chi}
    \end{equation}
    Cohomologous $2$-cocycles give rise to the same commutator form. Therefore, the~above
    formula defines a group homomorphism
    \begin{equation}
        H^{2}(A,Z) \ni \text{class of }\chi \mapsto \omega \in \{ \text{alternating
        bilinear forms }A \times A \to Z \}. \label{eq:cohomology_to_alternating}
    \end{equation}

    \begin{prop}
        \label{prop:cohomology_commutator_forms}
        \begin{enumerate}
            \item The map \eqref{eq:cohomology_to_alternating} is surjective. In
                other words, every alternating bilinear form $A \times A \to Z$
                is the commutator form of some extension of $A$ by~$Z$.

            \item If $Z$ is divisible, the map \eqref{eq:cohomology_to_alternating}
                is injective. Thus extensions of $A$ by $Z$ are determined up to
                isomorphism by their commutator form.
        \end{enumerate}
    \end{prop}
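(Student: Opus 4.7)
The plan is to recognize that the map \eqref{eq:cohomology_to_alternating} fits into the exact sequence
\[
0 \to \Ext^{1}_{\ZZ}(A, Z) \to H^{2}(A, Z) \to \mathrm{Alt}^{2}(A, Z) \to 0,
\]
where $\mathrm{Alt}^{2}(A, Z)$ denotes the group of alternating bilinear forms $A \times A \to Z$. Surjectivity onto $\mathrm{Alt}^{2}(A, Z)$ gives part (1), and vanishing of $\Ext^{1}_{\ZZ}(A, Z)$ when $Z$ is divisible gives part (2). The kernel of \eqref{eq:cohomology_to_alternating} consists of classes of symmetric $2$-cocycles, which correspond precisely to abelian central extensions and hence to $\Ext^{1}_{\ZZ}(A, Z)$.

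For (1), I would construct an explicit $2$-cocycle $\chi$ realizing a prescribed alternating form $\omega$. In the finitely generated case, invoke the structure theorem to write $A \cong \bigoplus_{i=1}^{s} \langle e_{i} \rangle$ with $e_{i}$ of order $n_{i} \in \{\infty, 1, 2, \ldots\}$, represent each $a \in A$ in normal form $a = \sum_{i} k_{i}(a) e_{i}$, and set
\[
\chi(a, a') := \prod_{i<j} \omega(e_{i}, e_{j})^{k_{i}(a) k_{j}(a')}.
\]
Verifying the cocycle condition amounts to observing that its discrepancy is a product of factors $\omega(e_{i}, e_{j})^{w_{i}(a,b) k_{j}(c)}$ and $\omega(e_{i}, e_{j})^{k_{i}(a) w_{j}(b,c)}$, where $w_{i}(a, b) := k_{i}(a) + k_{i}(b) - k_{i}(a+b) \in n_{i}\ZZ$ is the wrap-around defect. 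Bilinearity of $\omega$ together with $n_{i} e_{i} = 0$ gives $\omega(e_{i}, e_{j})^{n_{i}} = \omega(n_{i} e_{i}, e_{j}) = 1$ (and analogously for $n_{j}$), so both correction terms vanish. The commutator computation $\chi(a, a') \chi(a', a)^{-1} = \omega(a, a')$ then follows from the antisymmetry $\omega(e_{j}, e_{i}) = \omega(e_{i}, e_{j})^{-1}$ and $\omega(e_{i}, e_{i}) = 1$. For arbitrary $A$, extend via a Zorn's lemma argument on pairs $(B, \chi_{B})$ with $B \subseteq A$ a subgroup and $\chi_{B}$ a $2$-cocycle realizing $\omega|_{B}$, exploiting that $A$ is the directed union of its finitely generated subgroups.

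For (2), suppose a $2$-cocycle $\chi$ has trivial commutator form, so $\chi(a, a') = \chi(a', a)$ for all $a, a' \in A$. The central extension defined by $\chi$ is then abelian, hence represents a class in $\Ext^{1}_{\ZZ}(A, Z)$. Since $Z$ is divisible, Baer's criterion implies $Z$ is injective in the category of abelian groups, so $\Ext^{1}_{\ZZ}(A, Z) = 0$. Consequently every symmetric $2$-cocycle is a coboundary, i.e., is of the form in \eqref{eq:cohomologous} with $\widetilde\chi = 1$, and \eqref{eq:cohomology_to_alternating} is injective.

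The main obstacle I anticipate is the reduction from finitely generated to general abelian groups in part (1); the finitely generated construction is an explicit computation, and part (2) is immediate once abelian extensions are identified with $\Ext^{1}_{\ZZ}(A, Z)$ and injectivity of divisible groups is invoked.
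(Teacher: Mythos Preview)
Your framing via the short exact sequence $0 \to \Ext^{1}_{\ZZ}(A,Z) \to H^{2}(A,Z) \to \mathrm{Alt}^{2}(A,Z) \to 0$ is correct, and your Part~2 as well as the finitely-generated case of Part~1 match the paper: the same explicit cocycle $\prod_{i<j}\omega(e_i,e_j)^{k_i(a)k_j(a')}$, and the same identification of the kernel with abelian extensions split by divisibility of $Z$.

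The difference is in how arbitrary $A$ is handled. Your Zorn's lemma sketch has a gap: to show a maximal pair $(B,\chi_B)$ has $B=A$, you must extend $\chi_B$ to $B'=B+\langle a\rangle$ for $a\notin B$, and this step is not explained. The maximal $B$ need not be finitely generated, so your explicit formula does not apply directly; and ``$A$ is the directed union of its finitely generated subgroups'' does not immediately help, since your cocycles on the finitely generated pieces depend on non-canonical basis choices and need not be compatible along inclusions. The paper avoids all of this by noting that your cocycle formula is trivially a cocycle on a \emph{free} abelian group (all $n_i=\infty$, so every $w_i=0$). It therefore picks any generating set $\{a_i\}_{i\in I}$ of $A$, defines the extension $H^{\mathrm{pre}}$ over the free cover $\ZZ^{\oplus I}\twoheadrightarrow A$ using exactly your formula, and then quotients by the lift $\widetilde K$ of the kernel, which is central because the commutator form is pulled back from $A$ and hence vanishes on $K$. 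This handles arbitrary $A$ uniformly, with no structure theorem and no transfinite induction; it is also the natural completion of your own idea, since it is precisely your finitely-generated formula applied to a free presentation rather than to $A$ itself.
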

    \begin{proof}

        1. We choose a generating set $\{ a_{i} \}_{i \in I}$ of $A$ and a total
        ordering of $I$. We~have an epimorphism
        \begin{equation}
            \mathbb{Z}^{\oplus I}\ni (n_{i})_{i \in I}\mapsto \sum_{i \in I}n_{i}
            a_{i} \in A.
        \end{equation}
        We denote the kernel of this map by $K$. We define a group
        \begin{equation}
            H^{\mathrm{pre}}= Z \times \mathbb{Z}^{\oplus I}
        \end{equation}
        with the multiplication
        \begin{equation}
            (z, (n_{i})_{i \in I}) \cdot ( z', (n_{i}')_{i \in I}) = \left(zz' \prod
            _{i >j}\omega(a_{i},a_{j})^{n_i n_j'}, (n_{i} + n_{i}')_{i \in I}\right
            ).
        \end{equation}
        Then $H^{\mathrm{pre}}$ is a central extension of $\mathbb{Z}^{\oplus I}$
        by $Z$ with the commutator form
        \begin{equation}
            ((n_{i})_{i \in I}, (n_{i}')_{i \in I}) \mapsto \omega \left( \sum_{i
            \in I}n_{i} a_{i}, \sum_{j \in J}n_{j} a_{j} \right),
        \end{equation}
        which lifts $\omega$ from $A$ to $\mathbb{Z}^{\oplus I}$. Now let
        \begin{equation}
            \widetilde K = \{ (1, (n_{i})_{i \in I}) \in H^{\mathrm{pre}}\, | \,
            (n_{i})_{i \in I}\in K \}.
        \end{equation}
        Then $\widetilde K$ is a central subgroup of $H^{\mathrm{pre}}$
        isomorphic to $K$ and meeting $Z$ trivially. The quotient $H = H^{\mathrm{pre}}
        /\widetilde K$ is a central extension of $A$ by $Z$ with commutator form
        $\omega$.

        2. If $\chi$ is in the kernel of the map \eqref{eq:cohomology_to_alternating},
        then the corresponding extension $H$ is an abelian group. Since $Z$ is
        divisible, it splits $H$: $H \cong Z \oplus A$. Hence $\chi$ is cohomologous
        to the trivial cocycle ($\chi(a,a') = 1$ for all $a,a' \in A$).
    \end{proof}

    The following example illustrates that the assumption that $Z$ is divisible can't
    be removed in point 2. of Proposition \ref{prop:cohomology_commutator_forms}.

    \begin{exmp}
        Let $A = \mathbb{Z}_{2} \times \mathbb{Z}_{2}$. Consider the alternating
        form $\omega : A \times A \to \mathbb{Z}_{2}$ given~by
        \begin{equation}
            \omega((a,b),(a',b')) = ab'-a'b.
        \end{equation}
        There exist two non-isomorphic central extensions of $A$ by
        $\mathbb{Z}_{2}$ with commutator form $\omega$: the dihedral group of order
        $8$ and the quaternion group. In terms of linear representations, these groups
        are generated respectively by the pairs $\sigma_{x},\sigma_{z}$ and $i \sigma
        _{x}, i \sigma_{z}$, where $\sigma_{x},\sigma_{z}$ are the Pauli
        matrices:
        \begin{equation}
            \sigma_{x} =
            \begin{bmatrix}
                0 & 1 \\
                1 & 0
            \end{bmatrix}, \qquad \sigma_{z} =
            \begin{bmatrix}
                1 & 0  \\
                0 & -1
            \end{bmatrix}.
        \end{equation}
        When viewed as groups of operators, the distinction disappears upon
        adjoining all complex phase factors\footnote{In this case, adjoining just
        the imaginary unit suffices, yielding the so-called Pauli group of order
        $16$.}.
    \end{exmp}

    \begin{cor} \label{cor:central_ext_homomorph}
        Let $Z$ be a divisible group and let $A,A'$ be abelian groups with $Z$-valued
        alternating forms $\omega,\omega'$. Let $H,H'$ be the corresponding
        central extensions of $A,A'$ by $Z$. Suppose that $f : A \to A'$ is a homomorphism
        such that
        \begin{equation}
            \omega'(f(a_{1}),f(a_{2}))=\omega(a_{1},a_{2}) \qquad \text{for all }
            a_{1}, a_{2} \in A. \label{eq:symplectic_map}
        \end{equation}
        Then $f$ lifts to a homomorphism $F: H \to H'$ satisfying $F(z) =z$ for $z
        \in Z$.
    \end{cor}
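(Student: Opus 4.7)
The plan is to construct $F$ explicitly using the $2$-cocycle description of central extensions developed just above. Fix normalized representatives $T_a \in H$ for $a \in A$ (so $T_0 = 1$) and $\widetilde T_b \in H'$ for $b \in A'$, giving rise to normalized $2$-cocycles $\chi$ and $\chi'$ describing the multiplications in $H$ and $H'$ via \eqref{eq:extension_mult}. I seek $F$ in the form
\begin{equation}
F(z T_a) = z\, \phi(a)\, \widetilde T_{f(a)}, \qquad z \in Z,\ a \in A,
\end{equation}
for some function $\phi : A \to Z$. Expanding $F(zT_a \cdot z'T_{a'})$ and $F(zT_a)F(z'T_{a'})$ using the multiplication rules and the fact that $f$ is additive, one sees that $F$ is a group homomorphism if and only if
\begin{equation}
\phi(a+a')\, \phi(a)^{-1}\, \phi(a')^{-1} = \chi'(f(a),f(a'))\, \chi(a,a')^{-1}
\end{equation}
for all $a, a' \in A$. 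Thus the construction of $F$ reduces to showing that the $2$-cocycle $\psi(a,a') := \chi'(f(a),f(a'))\,\chi(a,a')^{-1}$ on $A$ is a coboundary.

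The next step is to verify that $\psi$ represents the trivial class in $H^2(A,Z)$. Using the hypothesis $\omega'(f(a_1),f(a_2)) = \omega(a_1,a_2)$ together with formula \eqref{eq:omega_from_chi} applied to both $\chi$ and $\chi'$, a straightforward calculation gives
\begin{equation}
\psi(a,a')\, \psi(a',a)^{-1} = \frac{\omega'(f(a),f(a'))}{\omega(a,a')} = 1.
\end{equation}
Hence $\psi$ lies in the kernel of the map \eqref{eq:cohomology_to_alternating}. Since $Z$ is divisible, part~2 of Proposition~\ref{prop:cohomology_commutator_forms} applies and forces $\psi$ to be cohomologous to the trivial cocycle, i.e.\ a coboundary. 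This produces a function $\phi : A \to Z$ satisfying the required identity, and therefore the homomorphism $F$.

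Finally, the condition $F(z) = z$ for $z \in Z$ translates, given the choice $T_0 = 1$, to $\phi(0) = 1$. Since both cocycles are normalized, $\psi(0,0) = 1$, and evaluating the coboundary relation $\delta\phi = \psi$ at $(0,0)$ gives $\phi(0)^{-1} = 1$, so normalization is automatic. The main conceptual step is the reduction of the lifting problem to a vanishing statement in $H^2(A,Z)$, after which the divisibility of $Z$ does all the work via Proposition~\ref{prop:cohomology_commutator_forms}(2); I do not expect any genuine obstacle beyond this bookkeeping.
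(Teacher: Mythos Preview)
Your proof is correct and follows essentially the same approach as the paper's: reduce the lifting problem to the coboundary equation \eqref{eq:mapping_cochain_exact}, observe that the right-hand side is a $2$-cocycle with trivial commutator form by the hypothesis \eqref{eq:symplectic_map}, and invoke Proposition~\ref{prop:cohomology_commutator_forms}(2). Your version is in fact slightly more careful, since you explicitly verify the normalization $\phi(0)=1$ needed for $F|_Z=\mathrm{id}$, a point the paper leaves implicit.
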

    \begin{proof}
        If $a \in A$ and $a' \in A'$, we choose representatives $T_{a} \in H$
        and $T_{a'}' \in H$. We let $\chi,\chi'$ be the $2$-cocycle characterizing
        $H$ and $H'$, respectively. The sought homomorphism $F$ has to be of the
        form
        \begin{equation}
            F(z T_{a}) = z \phi(a) T_{f(a)}' \label{eq:symplectic_map_lift}
        \end{equation}
        for some function $\phi : A \to Z$. In order for \eqref{eq:symplectic_map_lift}
        to be a homomorphism, $\phi$ has to satisfy
        \begin{equation}
            \phi(a_{1}+a_{2}) \phi(a_{1})^{-1}\phi(a_{2})^{-1}= \chi'(f(a_{1}),f(
            a_{2})) \chi(a_{1},a_{2})^{-1}. \label{eq:mapping_cochain_exact}
        \end{equation}
        The right hand side of this equation is a $2$-cocycle on $A$ whose
        associated commutator form vanishes by the assumption \eqref{eq:symplectic_map}.
        By Proposition \ref{prop:cohomology_commutator_forms} such cocycles are
        cohomologous to the trivial one, i.e.\ a function $\phi$ satisfying
        \eqref{eq:mapping_cochain_exact} exists.
    \end{proof}

    \subsection{Clock and shift systems}

    Let $\mathbb{C} \mathbb{Z}_{n}$ be a vector space with basis
    $(e_{k})_{k \in \mathbb{Z}_n}$. We define operators $X$ and $Z$ on $\mathbb{C}
    \mathbb{Z}_{n}$ by
    \begin{equation}
        Ze_{k} = e^{\frac{2 \pi i k}{n}}e_{k}, \qquad X e_{k} = e_{k+1}.
    \end{equation}
    These operators are sometimes called the \textbf{clock} and \textbf{shift} operators
    (the eigenvalue of $Z$ can be thought of as the hand of the clock, and $X$ shifts
    it forward). They satisfy the relations
    \begin{equation}
        X^{n}=Z^{n}=1, \qquad ZX=XZ e^{\frac{2\pi i}{n}}.
    \end{equation}
    The group generated by $Z,X$ and all complex phase factors is a~central extension
    of $\mathbb{Z}_{n} \oplus \mathbb{Z}_{n}$ by the circle group. When $n=2$, the
    operators $X$ and $Z$ coincide with the Pauli matrices $\sigma_{x}$ and $\sigma
    _{z}$. For this reason, the group they generate--perhaps together with some
    complex phase factors--is called the Pauli group.

    The tensor product $\bigotimes_{j=1}^{t} \mathbb{C} \mathbb{Z}_{n_j}$ carries
    an action of $t$ pairs of clock and shift operators $Z_{j},X_{j}$, each
    acting on the corresponding tensors factors. The group $H$ generated by all these
    operators is a central extension of $A = \bigoplus_{j=1}^{t} (\mathbb{Z}_{n_j}
    \oplus \mathbb{Z}_{n_j})$ by the circle group. The~commutator form
    associated to this extension is explicitly given by
    \begin{equation}
        \omega \left( (a_{j},b_{j})_{j=1}^{t}, (a_{j}',b_{j}')_{j=1}^{t} \right)
        = \exp \left( 2 \pi i \sum_{j=1}^{t} \frac{a_{j} b_{j}' -a_{j}' b_{j}}{n}
        \right). \label{eq:canonical_omega}
    \end{equation}

    It turns out that up to isomorphisms, every central extension by
    $\mathbb{T}$ of a finite abelian group described by a nondegenerate
    commutator form is isomorphic to the group generated by some number of pairs
    of clock and shift operators. This fact is formalized in the statement below.

    \begin{prop}
        Let $A$ be a finite abelian group and let $\omega : A \times A \to \mathbb{T}$
        be an alternating form on $A$ which is nondegenerate, i.e.\ if $a \in A$
        is such that $\omega(a,a')=1$ for all $a' \in A$, then $a=0$.
        \begin{enumerate}
            \item There exists a group isomorphism
                \begin{equation}
                    \phi \, : \, \bigoplus_{j=1}^{t} (\mathbb{Z}_{n_j}\oplus \mathbb{Z}
                    _{n_j}) \to A
                \end{equation}
                such that $\omega \circ (\phi \times \phi)$ is given by \eqref{eq:canonical_omega}.

            \item Identify $A$ with $\bigoplus_{j=1}^{t} (\mathbb{Z}_{n_j}\oplus
                \mathbb{Z}_{n_j})$ and let $H$ be a central extension by $\mathbb{T}$
                with commutator form $\omega$. Up to isomorphism, the only irreducible
                representation of $H$ on which $\mathbb{T}$ acts tautologically (i.e.
                where each complex phase $z \in \mathbb{T}$ acts as scalar multiplication
                by $z$) is the representation
                $\bigotimes_{j=1}^{t} \mathbb{C} \mathbb{Z}_{n}$.
        \end{enumerate}
    \end{prop}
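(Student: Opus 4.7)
The plan is to establish Part 1 by an induction on $|A|$ that peels off one symplectic pair at a time, and Part 2 by constructing the irreducible representation via induction from a Lagrangian subgroup and then pinning down uniqueness through a dimension count in a semisimple algebra.

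For Part 1, I first reduce to the case in which $A$ is a $p$-group via the primary decomposition $A = \bigoplus_p A_p$: because the order of $\omega(a,a')$ divides the orders of $a$ and $a'$, components at distinct primes are automatically $\omega$-orthogonal, and so $\omega$ is nondegenerate on each $A_p$ separately. Picking $a \in A$ of maximal order $n$, the map $\omega(a, \cdot) : A \to \mathbb{T}$ takes values in the group $\mu_n$ of $n$-th roots of unity since $n$ annihilates $A$; its image cannot be the proper subgroup $\mu_{n/p}$, for that would force $\omega((n/p)a, \cdot) = 0$ in violation of nondegeneracy, so the image is all of $\mu_n$. Choose $b \in A$ with $\omega(a,b)$ a primitive $n$-th root of unity. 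Then $b$ also has order $n$, and a routine check using $\omega(a, \cdot)$ and $\omega(\cdot, b)$ shows $\langle a, b \rangle \cong \mathbb{Z}_n \oplus \mathbb{Z}_n$ with the standard symplectic form. Solving two linear equations over $\mu_n$ verifies $A = \langle a, b \rangle \oplus \langle a, b \rangle^{\perp}$, with $\omega$ nondegenerate on the complement; applying the induction hypothesis completes the argument.

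For Part 2, use Part 1 to identify $A$ with $\bigoplus_j (\mathbb{Z}_{n_j} \oplus \mathbb{Z}_{n_j})$ and let $L = \bigoplus_j (\mathbb{Z}_{n_j} \oplus 0)$, a Lagrangian subgroup of $A$ with $|L|^2 = |A|$. Its preimage $\widetilde L \subset H$ is abelian (because $L$ is $\omega$-isotropic) and contains $\mathbb{T}$. Since $\mathbb{T}$ is divisible, the identity character of $\mathbb{T}$ extends to a character $\chi : \widetilde L \to \mathbb{T}$, and I set $V := \mathrm{Ind}_{\widetilde L}^H \chi$. As a vector space $V$ has dimension $[A:L] = \prod_j n_j$, and a direct comparison of the actions of clocks and shifts on a transversal basis shows $V \cong \bigotimes_j \mathbb{C}\mathbb{Z}_{n_j}$ as a representation of $H$ on which $\mathbb{T}$ acts tautologically.

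Irreducibility of $V$ and uniqueness (up to isomorphism) of the tautological irrep both follow from a dimension count in the twisted group algebra $B$ with underlying vector space $\bigoplus_{a \in A} \mathbb{C} \cdot T_a$ and multiplication inherited from a cocycle of $H$. By construction, $B$-modules correspond bijectively to representations of $H$ on which $\mathbb{T}$ acts tautologically, and $\dim_{\mathbb{C}} B = |A|$. All cocycle values of $B$ lie in $\mu_N$ with $N = \mathrm{lcm}(n_j)$ (after suitable choice of representatives), so $B$ is isomorphic to a direct summand of the group algebra of a finite central extension of $A$ by $\mu_N$; Maschke's theorem supplies semisimplicity, hence $B \cong \bigoplus_i M_{d_i}(\mathbb{C})$ with $\sum_i d_i^2 = |A|$. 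Since I have produced an irreducible $B$-module of dimension $\sqrt{|A|}$, the equality $(\sqrt{|A|})^2 = |A|$ forces $B$ to be a single matrix block, yielding both conclusions at once.

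The main technical obstacle will be the semisimplicity step: one must justify replacing the continuous extension by $\mathbb{T}$ with a finite extension without losing representation-theoretic content. Concretely, this comes down to showing that a 2-cocycle representing $H$ can be chosen with values in a finite cyclic subgroup $\mu_M \subset \mathbb{T}$; once this is arranged, Maschke applies to a genuine finite group and the rest is standard Wedderburn theory.
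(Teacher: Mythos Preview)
Your approach matches the paper's sketch closely: Part 1 is exactly the symplectic-pair induction the paper alludes to (``mimics how one brings symplectic forms on vector spaces to a canonical form''), and Part 2 follows the paper's reduction to a finite central extension of $A$ by $\mu_N$ followed by finite-group representation theory (the paper says only ``elementary character theory; we omit the details'').

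There is, however, a logical slip in your Part 2. You assert that ``irreducibility of $V$ and uniqueness both follow from a dimension count,'' and then argue ``since I have produced an irreducible $B$-module of dimension $\sqrt{|A|}$, \dots $B$ is a single matrix block.'' But the irreducibility of $V$ is one of the two conclusions the dimension count is supposed to deliver, so as written this is circular: knowing only that $V$ has dimension $\sqrt{|A|}$ and that $\sum_i d_i^2 = |A|$ does not by itself force $V$ to be simple. The easy repair is to compute the center of $B$ directly: an element $\sum_a c_a T_a$ is central iff $c_a\,\omega(a,b) = c_a$ for all $b$, which by nondegeneracy of $\omega$ forces $c_a = 0$ for $a \neq 0$; hence $Z(B) = \mathbb{C}$, so $B \cong M_{\sqrt{|A|}}(\mathbb{C})$ is a single matrix block, and then both irreducibility of $V$ and uniqueness follow immediately. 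Alternatively, establish irreducibility of $V$ first via Mackey's criterion (the conjugates $\chi^a$ for $a \in A \setminus L$ differ from $\chi$ on $\widetilde L$ precisely because $\omega$ is nondegenerate) and reserve the dimension count for uniqueness.
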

    \begin{proof}
        1. The proof of this assertion is elementary and mimics how one brings symplectic forms on
        vector spaces to a canonical form.

        2. $\mathcal{P}$ contains a subgroup $\mathcal{P}_{0}$ which is a central
        extension of $P$ by $\mathbb{Z}_{n}$, where $n$ is the least common
        multiple of $n_{1},\dots,n_{t}$. The problem reduces to studying representations
        of the finite group $\mathcal{P}_{0}$, which can be done with elementary
        character theory. We omit the details.
    \end{proof}

    \subsection{Stabilizer states}
    \label{app:stabilizer}

    In the final part of this Appendix, we construct stabilizer states and discuss their purity and uniqueness. 

    Let $A$ be an abelian group and let $\omega : A \times A \to \mathbb T$ be an alternating bilinear form valued in the circle group. We assume that $\omega$ is not identically equal $1$ to avoid the trivial case. A subgroup $L \subset A$ is said to be isotropic if $\omega(l,l')=1$ for all $l,l' \in L$, and Lagrangian if $\left. \omega(a,\cdot) \right|_L =1$ holds if and only if $a \in L$. The nontriviality assumption on $\omega$ guarantees that isotropic subgroups are proper (not all of $A$).
    
    Let $H$ be the central extension of $A$ by $\mathbb T$ with commutator form~$\omega$. We define $\mathcal A$ to be the quotient of the complex group ring $\mathbb C[H]$, obtained by identifying the circle subgroup of $H$ with the circle subgroup of the ground field~$\mathbb C$. It is an associative $\star$-algebra, with the involution extending anti-linearly the inverse operation in $H$.

    If $L \subset A$ is an isotropic subgroup, then, by Corollary \ref{cor:central_ext_homomorph}, the quotient map $H \to A$ admits a section on $L$. That is, we can choose a representative $T_l \in H$ for every $l \in L$ in such a way that the map $l \mapsto T_l$ is a group homomorphism. We can extend this map and pick a representative $T_a \in H$ for every $a \in A$, but the map $a \mapsto T_a$ is not a group homomorphism on the whole $A$.
    
    We denote the subgroup of $H$ consisting of elements $T_l$ by $\mathcal L$. We have $\mathcal L \cong L$ as abstract groups, but as a subgroup of $H$, $\mathcal L$ depends on the choice made above; it is not uniquely determined by $L$ itself. The group $\mathcal L$ intersects $\mathbb T$ trivially.

    Let $\alpha$ be any group homomorphism $\mathcal L \to \mathbb T$. We would like to find a representation of $H$ containing a joint eigenvector of all $T_l$ to eigenvalues $\alpha(l)$:
    \begin{equation}
        T_l v = \alpha(l) v.
        \label{eq:stabilizer_conditions}
    \end{equation}
    We will call such vector $v$ a stabilizer state vector. 
    
    In order to construct stabilizer state vectors, let $I_{\mathcal L, \alpha}$ be the left ideal of $\mathcal A$ generated by elements $T_l - \alpha(l)$. The quotient $V_{\mathcal L, \alpha} = \mathcal A / I_{\mathcal L,\alpha}$ is an $\mathcal A$-module with a cyclic vector satisfying \eqref{eq:stabilizer_conditions}. One can also construct an $H$-invariant sesquilinear form on $V_{\mathcal L, \alpha}$ and complete this space to obtain a unitary representation of $H$ on a~Hilbert space; we omit the details. We remark that the Hilbert space completion of $V_{\mathcal L,\alpha}$ can be identified with the GNS representation for the state (positive and unital linear functional) on $\mathcal A$ given by
    \begin{equation}
        T_a \mapsto \begin{cases}
            \alpha(a), & a \in L, \\
            0, & \text{otherwise}.
        \end{cases}
        \label{eq:GNS_state}
    \end{equation}
    In this language, the next Proposition discusses the purity of the state \eqref{eq:GNS_state}.

    \begin{prop} \label{prop:stab_rep_irreducible}
    $V_{\mathcal L,\alpha}$ is an irreducible representation (or a simple module) of $\mathcal A$ if and only if $L \subset A$ is a~Lagrangian subgroup.
    \end{prop}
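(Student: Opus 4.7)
The plan is to first describe $V_{\mathcal{L},\alpha}$ concretely as a $\mathbb{C}$-vector space with basis indexed by $A/L$, and then treat the two directions separately. Since $\mathcal{L}$ is a section of $L$ in $H$, its complex span inside $\mathcal{A}$ is isomorphic to the group algebra $\mathbb{C}[L]$, and $\mathcal{A}$ is free as a right $\mathbb{C}\mathcal{L}$-module with basis $\{T_{a_c}\}_{c \in A/L}$ obtained from any choice of coset representatives. Under the identification $V_{\mathcal{L},\alpha} \cong \mathcal{A} \otimes_{\mathbb{C}\mathcal{L}} \mathbb{C}_\alpha$, the classes $\{[T_{a_c}]\}_{c \in A/L}$ form a $\mathbb{C}$-basis. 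A direct computation using $T_l T_a = T_a T_l \omega(l,a)$ shows that for $l \in L$ each basis vector $[T_{a_c}]$ is an eigenvector of $T_l$ with eigenvalue $\alpha(l)\omega(l, a_c)$, a scalar well-defined on cosets because $L$ is isotropic.

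For the ``if'' direction, I would assume $L$ is Lagrangian and prove that any nonzero $v = \sum_{c \in S} \lambda_c [T_{a_c}]$ with finite support $S$ and all $\lambda_c \neq 0$ generates $V_{\mathcal{L},\alpha}$, by induction on $|S|$. If $|S|=1$, acting with $T_{-a_c}$ produces a nonzero multiple of $[1]$, from which every basis vector $[T_b]$ is recovered by acting with $T_b$. If $|S| \geq 2$, pick distinct $c_1, c_2 \in S$; since $a_{c_1} - a_{c_2} \notin L = L^{\perp}$, there exists $l \in L$ with $\omega(l, a_{c_1}) \neq \omega(l, a_{c_2})$. The element $T_l v - \alpha(l)\omega(l, a_{c_1}) v \in \mathcal{A} v$ then has zero coefficient at $c_1$ and nonzero coefficient at $c_2$, so its support has strictly fewer elements, completing the induction.

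For the ``only if'' direction, assume $L$ is isotropic but not Lagrangian and fix $a \in L^{\perp} \setminus L$. I will enlarge the data to $L' = L + \langle a \rangle$, build a section $\mathcal{L}' \subset H$ covering $L'$, and extend $\alpha$ to $\alpha' : \mathcal{L}' \to \mathbb{T}$. The relation $T_l T_a = T_a T_l$ holds because $a \in L^{\perp}$, so $\mathcal{L}'$ is abelian. When the order $k$ of $a$ in $A/L$ is finite, one adjusts the representative $T_a$ by a $k$th root of the scalar arising in $T_a^k T_{ka}^{-1}$, and chooses $\alpha'(T_a)$ to be a $k$th root of $\alpha(T_{ka})$; both adjustments exist by divisibility of $\mathbb{T}$. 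These choices make $\mathcal{L}'$ a genuine section and $\alpha'$ a genuine extension of $\alpha$. The inclusion $I_{\mathcal{L},\alpha} \subset I_{\mathcal{L}',\alpha'}$ induces a surjection of $\mathcal{A}$-modules $V_{\mathcal{L},\alpha} \twoheadrightarrow V_{\mathcal{L}',\alpha'}$ whose kernel contains the class of $T_a - \alpha'(T_a)$. Since $[T_a]$ and $[1]$ are distinct basis vectors in $V_{\mathcal{L},\alpha}$, this class is nonzero, so the kernel is a proper nontrivial submodule and $V_{\mathcal{L},\alpha}$ is reducible.

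The main obstacle I anticipate is the cohomological bookkeeping in the ``only if'' direction: choosing the representative $T_a$ and the value $\alpha'(T_a)$ compatibly so that $\mathcal{L}'$ remains a section with $\mathcal{L}' \cap \mathbb{T} = \{1\}$ and $\alpha'$ is a well-defined homomorphism. The divisibility of $\mathbb{T}$, which underlies Proposition~\ref{prop:cohomology_commutator_forms}, is precisely the ingredient that lets one make these adjustments simultaneously; without it, even extending $\alpha$ to a character of $\mathcal{L}'$ could fail.
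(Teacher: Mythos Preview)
Your proposal is correct and follows essentially the same approach as the paper: establish that $\{[T_{a_c}]\}_{c\in A/L}$ is a basis of $V_{\mathcal L,\alpha}$, then prove the ``if'' direction by an inductive support-reduction argument using the Lagrangian condition to separate eigenvalues of $T_l$, and the ``only if'' direction by enlarging $L$ to an isotropic $L'$, lifting to a section $\mathcal L'$ and extending $\alpha$ via divisibility of $\mathbb T$, so that $I_{\mathcal L',\alpha'}/I_{\mathcal L,\alpha}$ (equivalently, the kernel of your surjection) furnishes a proper nontrivial submodule. The only presentational difference is that you obtain the basis via the induced-module identification $V_{\mathcal L,\alpha}\cong\mathcal A\otimes_{\mathbb C\mathcal L}\mathbb C_\alpha$, whereas the paper constructs an explicit linear functional vanishing on $I_{\mathcal L,\alpha}$ to verify linear independence; both are standard and equivalent.
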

    \begin{proof}
    It is clear that every element of $V_{\mathcal L,\alpha}$ admits a representative in $\mathcal A$ of the form 
    \begin{equation}
        v = \sum_{i=1}^n c_i T_{a_i},
        \label{eq:vector_in_rep}
    \end{equation}
    where $a_1,\dots,a_n \in A$ represent distinct classes in $A/L$ and all $c_i$ are nonzero (the case of the zero class in $V_{\mathcal L, \alpha}$ is covered by the possibility $n=0$). We claim that if $n \neq 0$, the vector \eqref{eq:vector_in_rep} represents a nonzero element in $V_{\mathcal L, \alpha}$. Indeed, let $\varepsilon : \mathcal A \to \mathbb C$ be a linear map such that
    \begin{equation}
        \varepsilon(T_{a_1} T_l) = \alpha(l), \qquad \varepsilon (T_b)=0 \text{ if } b-a_1 \not \in L.
    \end{equation}
    Then $\varepsilon(v) = c_1 \neq 0$. On the other hand, it is easy to check that $I_{\mathcal L, \alpha} \subset \ker(\varepsilon)$. Hence $v \not \in I_{\mathcal L,\alpha}$.

    The above argument establishes that $V_{\mathcal L,\alpha}$ has a basis given by classes of elements $T_{a_i}$, where $\{ a_i \}_{i \in I}$ are representatives, one for each element of $A/L$.

    Suppose that $L$ is not Lagragian. Pick an element $l' \in A \setminus L$ which is orthogonal to $L$, $\omega(l,l')=1$ for all $l \in L$. We also have $\omega(l',l')=1$ because $\omega$, so the subgroup $L'$ generated by $L$ and $l'$ is isotropic and properly contains $L$. We lift $L'$ to a subgroup $\mathcal L'$ as we have for $L$ and $\mathcal L$. Since $\mathbb T$ is divisible, $\alpha$ extends to a homomorphism $\alpha' : \mathcal L' \to \mathbb T$. We consider the ideal $I_{\mathcal L',\alpha'}$. It is clear that $I_{\mathcal L,\alpha} \subset I_{\mathcal L',\alpha'}$. Using the description of bases in the respective quotients, derived in the preceding part of this proof, we see that the inclusion is proper. That is, $I_{\mathcal L',\alpha'}/I_{\mathcal L,\alpha}$ is a nontrivial $\mathcal A$-invariant subspace in $V_{\mathcal L, \alpha}$, showing that $V_{\mathcal L, \alpha}$ is not simple.

    Now suppose that $L$ is Lagrangian. We show that $V_{\mathcal L,\alpha}$ is irreducible by demonstrating that every nonzero vector in $V_{\mathcal L,\alpha}$ is cyclic. Let $v$ be as in \eqref{eq:vector_in_rep}, with $n \neq 0$, so that $v$ represents a nonzero element of $V_{\mathcal L,\alpha}$. Acting with $T_l$, $l \in L$, we get the vector
    \begin{equation}
        \alpha(l) \sum_{i=1}^n c_i \omega(l, a_i) T_{a_i} \in \mathcal A. 
    \end{equation}
    If $n \geq 2$, the assumption on $a_1,\dots,a_n$ and that $L$ is Lagrangian imply that for some $l \in L$ not all coefficients $\omega(l,a_i)$ are equal. Then some nonzero linear combination of $v$ and $T_l v$ can be expressed in terms of a proper subset of $\{ T_{a_1}, \dots , T_{a_n } \}$. In other words, the submodule of $V_{\mathcal L, \alpha}$ generated by $v$ contains an element
    \begin{equation}
        v' = \sum_{i=1}^{n'} c_i' T_{a_i'},
    \end{equation}
    where $0 < n'<n$, all $c_i'$ are nonzero, and $a_i'$ represent distinct classes in $A/L$. Repeating this step inductively we find that the submodule generated by $v$ contains an element $T_a$. This element generates $V_{\mathcal L,\alpha}$.
    \end{proof}

    Now let $L \subset A$ be Lagrangian. As one can see from the proof of Proposition~\ref{prop:stab_rep_irreducible}, the~joint eigenspace of operators $T_l$ on $V_{\mathcal L,\alpha}$ to eigenvalues $\alpha(l)$ is one-dimensional and spanned by $[1]$, the class of $1$. If $V$ is any $\mathcal A$-module and $v \in V$ a vector satisfying \eqref{eq:stabilizer_conditions}, there exists a unique module homomorphism $V_{\mathcal L,\alpha} \to V$ such that $[1] \mapsto v$. It is injective by the irreducibility of $V_{\mathcal L,\alpha}$.

\section{Metric groups} \label{app:metric_group}

Let $k$ be an algebraically closed field of characteristic zero, for example $k=\mathbb{C}$. Let $(E, +)$ be a finite abelian group. 
\begin{defn}
    A quadratic form on $E$ is a map $q: E\rightarrow k^\times$  such that $q(a)=q(-a)$ and the symmetric function
    \begin{equation}
        b(a_1, a_2):= \frac{q(a_1+a_2)}{q(a_1)q(a_2)}
        \label{eq:b_from_q}
    \end{equation}
    is a bilinear form, i.e., 
    \begin{equation}
        b(a_1+a_2, a_3)=b(a_1, a_3)b(a_2, a_3).
    \end{equation}
    We say $q$ is non-degenerate if $b(a,\cdot)=1$ implies $a=0$. 
    
    A finite abelian group equipped with a quadratic form is called a pre-metric group. A pre-metric group $(E,q)$ is called a metric group if the bilinear form associated to $q$ as in \eqref{eq:b_from_q} is non-degenerate.
\end{defn}
\begin{lem}
    For a pre-metric group $(E,q)$ and $n\in \ZZ$,
    \begin{equation}
        q(na)=q(a)^{n^2},
    \end{equation}
    for any $a\in E$.
\end{lem}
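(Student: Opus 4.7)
The plan is to prove this by induction on $n$, after first deriving two auxiliary identities from the axioms of a pre-metric group: $q(0)=1$ and $b(a,a)=q(a)^2$.

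To obtain $q(0)=1$, I would apply bilinearity at the zero argument: the identity $b(0+0,a)=b(0,a)b(0,a)$ forces $b(0,a)=1$, and then the definition $b(0,a)=q(a)/(q(0)q(a))=1/q(0)$ gives $q(0)=1$. Next, to pin down $b(a,a)$, I would compute $b(a,-a)$ in two ways. From bilinearity together with $b(a,0)=1$ we get $b(a,a)\,b(a,-a)=b(a,a-a)=b(a,0)=1$, so $b(a,-a)=b(a,a)^{-1}$. From the definition and the evenness axiom $q(-a)=q(a)$, we get $b(a,-a)=q(0)/(q(a)q(-a))=1/q(a)^2$. Comparing yields $b(a,a)=q(a)^2$.

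With these in hand, the inductive step is essentially automatic. For $n\geq 0$, assuming $q(na)=q(a)^{n^2}$, I would expand
\begin{equation}
q((n+1)a)=q(na+a)=b(na,a)\,q(na)\,q(a)=b(a,a)^n\,q(na)\,q(a),
\end{equation}
using bilinearity to pull the $n$ out of the first slot. Substituting $b(a,a)=q(a)^2$ and the inductive hypothesis gives $q((n+1)a)=q(a)^{2n}\cdot q(a)^{n^2}\cdot q(a)=q(a)^{(n+1)^2}$, closing the induction. The case $n=0$ is the identity $q(0)=1$ established above, and negative $n$ is handled by applying $q(-x)=q(x)$ to $x=|n|a$, which reduces to the nonnegative case since $(-n)^2=n^2$.

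I do not expect a serious obstacle: the only mildly nontrivial step is the derivation $b(a,a)=q(a)^2$, which depends crucially on the evenness axiom $q(-a)=q(a)$ being part of the definition of a quadratic form rather than a consequence of it. Without that axiom the identity $q(na)=q(a)^{n^2}$ would fail already at $n=-1$.
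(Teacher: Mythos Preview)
Your proof is correct and follows essentially the same route as the paper: establish $q(0)=1$ and $b(a,a)=q(a)^2$ from the axioms, then induct on $n\geq 0$ via $q((n+1)a)=b(na,a)\,q(na)\,q(a)$, reducing negative $n$ to the nonnegative case by evenness. The only differences are cosmetic (you spell out $b(na,a)=b(a,a)^n$ and the derivation of $b(0,a)=1$ a bit more explicitly).
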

\begin{proof}
    First note that
    \begin{equation}
        q(0)^{-1} = \frac{q(a+0)}{q(a)q(0)} = b(a,0)=1,
    \end{equation}
    where the last equality holds because $b$ is bilinear. Hence $q(0)=1$. Moreover,
    \begin{equation}
        \frac{q(a-a)}{q(a)q(-a)}=b(a,a)^{-1},
    \end{equation}
    so $b(a,a)=q(a)^2$. 
    
    For the main claim, we can assume that $n\geq 0$ and proceed by induction. Assuming that $q(na)=q(a)^{n^2}$, we find
    \begin{align}
        q((n+1)a) &= b(na, a)q(na)q(a)\\
            &= q(a)^{n^2+2n+1}. \nonumber
    \end{align}
\end{proof}
\begin{prop}\label{prop: roots_of_unity}
    For any $a\in E$, $q(a)$ is a root of unity in $k^\times$.
\end{prop}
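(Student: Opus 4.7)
The plan is to exploit the finiteness of $E$ together with the scaling identity already established in the preceding lemma. Since $E$ is finite, every $a \in E$ has finite additive order; let $n$ be the order of $a$, so $na = 0$. From the preceding lemma we have $q(na) = q(a)^{n^2}$, and a direct specialization of the bilinearity of $b$ shows $q(0) = 1$ (this is already noted in the proof of the preceding lemma). Combining these gives
\begin{equation}
q(a)^{n^2} = q(na) = q(0) = 1,
\end{equation}
so $q(a)$ is an $n^2$-th root of unity in $k^\times$.

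There is no real obstacle here; the statement is an immediate corollary of the scaling law $q(na) = q(a)^{n^2}$ once one observes that every element of a finite abelian group has finite order. The only thing to check is that $k^\times$ indeed contains such a root, which is automatic since $k$ is assumed algebraically closed of characteristic zero (so the polynomial $x^{n^2} - 1$ splits completely). Thus the proof is essentially a one-line invocation of the previous lemma applied to $n = \mathrm{ord}(a)$.
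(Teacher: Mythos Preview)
Your proof is correct and follows essentially the same approach as the paper: both use the finiteness of $E$ to find $n$ with $na=0$, then invoke the preceding lemma's identity $q(na)=q(a)^{n^2}$ to conclude $q(a)^{n^2}=1$. Your remark about algebraic closure is unnecessary, since the conclusion $q(a)^{n^2}=1$ already says $q(a)$ is a root of unity without needing to produce any further roots.
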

\begin{proof} 
    As $E$ is a finite group, there exists $n\in \ZZ$ with $na=0$. The lemma above implies $1=q(na)=q(a)^{n^2}$. 
\end{proof}
\begin{prop}
    A pre-metric group $(E, q)$ splits into a finite direct sum 
    \begin{equation}
    E\cong \bigoplus_{i\in I} E_i,
    \end{equation}
    such that 
    \begin{enumerate}
        \item $E_i$ is a finite $\ZZ_{p_i^{r_i}}$-module, with distinct prime numbers $\{p_i\}_{i\in I}$, 
        \item $q=\prod_{i\in I}q_i$, where $q_i: E_i \rightarrow k^\times$ is the restriction of $q$ on $E_i,$
    \end{enumerate}
\end{prop}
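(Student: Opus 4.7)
The plan is to use the primary decomposition of the finite abelian group $E$ and to show that this decomposition is automatically orthogonal with respect to the bilinear form $b$ associated to $q$. By the structure theorem for finite abelian groups, I will write $E = \bigoplus_{i \in I} E_i$, where $I$ indexes the distinct primes $p_i$ dividing $|E|$ and $E_i$ is the $p_i$-primary component, with $p_i^{r_i}$ the exponent of $E_i$. This takes care of property 1.

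Next, I will establish the crucial orthogonality: for $a \in E_i$ and $a' \in E_j$ with $i \neq j$, we have $b(a,a')=1$. Using bilinearity of $b$ together with $q(0)=1$ (established in the previous lemma, which gives $b(0,\cdot)=1$), if $p_i^{k} a = 0$ then
\begin{equation}
b(a,a')^{p_i^k} \;=\; b(p_i^k a, a') \;=\; 1,
\end{equation}
and similarly $b(a,a')^{p_j^{k'}}=1$ for a suitable $k'$. Since $\gcd(p_i^k, p_j^{k'})=1$, Bezout gives $b(a,a')=1$.

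Finally, property 2 follows from orthogonality by iterating the identity $q(x+y) = q(x)q(y)b(x,y)$: for $a = \sum_i a_i$ with $a_i \in E_i$, all cross terms $b(a_i, a_j)$ with $i \neq j$ are trivial, so
\begin{equation}
q(a) \;=\; \prod_{i \in I} q(a_i) \;=\; \prod_{i \in I} q_i(a_i),
\end{equation}
where $q_i := q|_{E_i}$. I expect no genuine obstacle in this proof, as each step is essentially formal once orthogonality of the primary decomposition is in hand; the only delicate point is the use of the earlier lemma to handle $q(0)=1$ and the values $q(na) = q(a)^{n^2}$, which is what allows the Bezout argument above to go through cleanly. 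Proposition \ref{prop: roots_of_unity} (ensuring $q(a)$ is a root of unity of order a power of $p_i$ for $a \in E_i$) is an implicit sanity check that each $q_i$ lands in the appropriate cyclotomic subgroup of $k^\times$.
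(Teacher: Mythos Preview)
Your proof is correct and follows essentially the same route as the paper: primary decomposition of $E$ for part 1, then the Bezout/coprime-order argument to show $b(a_i,a_j)=1$ for $i\neq j$, which gives the multiplicative splitting of $q$. The paper reduces to $|I|=2$ while you handle the general sum directly, but the argument is identical; note that the orthogonality step really only uses bilinearity of $b$ (so $b(0,\cdot)=1$), not the full formula $q(na)=q(a)^{n^2}$.
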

\begin{proof}
    Part 1 follows from the classification of finite abelian groups. To prove part~2, it suffices to assume $I=\{1, 2\}$. Given $a_1 \in E_1$ and $a_2 \in E_2$, 
    \begin{equation}
       b(a_1, a_2)^{p_1^{r_1}}=b(p_1^{r_1}a_1, a_2) =b(0, a_2) = 1,
    \end{equation} 
    and analogously 
    \begin{equation}
       b(a_1, a_2)^{p_2^{r_2}}=b(a_1, p_2^{r_2}a_2) =b(a_1, 0) = 1.
    \end{equation} 
    This implies $b(a_1, a_2)=1$, hence $q(a_1)q(a_2)=q(a_1+a_2)$.
\end{proof}
\begin{remark} \label{rmk: additive_q_convention}
    It is sometimes convenient to have $q$ valued in some $\ZZ_n$ (which is isomorphic to $n$th roots of unity in $\mathbb C$ as an abelian group) in order to utilize additive notation. This is the convention followed in the main text. 
\end{remark}
One motivation for studying (pre-)metric groups lies in their connection to (braided fusion or) modular tensor categories (MTC)~\cite{etingof2015tensor}. The latter are mathematical structures that arise in a variety of fields including quantum algebra, low-dimensional topology, conformal field theory, and topological field theory. Metric groups correspond to a special class of modular tensor categories, which are called pointed.

\begin{thm}[Theorem 8.4.12 in \cite{etingof2015tensor}]
There are equivalences of categories:
\begin{subequations}
\begin{align}
    \{\text{pointed braided fusion categories}\} &\longleftrightarrow \{\text{pre-metric groups}\}, \\
    \{\text{pointed modular tensor categories}\} &\longleftrightarrow \{\text{metric groups}\}.
\end{align}
\end{subequations}
\end{thm}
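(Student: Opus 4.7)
The plan is to build explicit inverse functors between the two sides and verify they preserve the structure, with the heart of the argument being the classical Eilenberg--MacLane identification of abelian group cohomology in degree three with quadratic forms.

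First I would construct the forward functor, from pointed braided fusion categories to pre-metric groups. Given such a category $\mathcal C$, let $E$ be the set of isomorphism classes of simple objects. The tensor product makes $E$ a group, which is abelian because the braiding provides natural isomorphisms $X \otimes Y \cong Y \otimes X$ for any simples $X, Y$. Define $q : E \to k^\times$ by $q([X]) = c_{X,X} \in \mathrm{End}(X \otimes X) = k^\times$, the self-braiding. A direct calculation from the two hexagon axioms yields $c_{X \otimes Y,\, X \otimes Y} = c_{X,X}\, c_{Y,Y}\, (c_{Y,X} \circ c_{X,Y})$ after suitable composition with associators, so the associated form $b([X],[Y]) = c_{Y,X}\, c_{X,Y}$ is bilinear and $q$ is a quadratic refinement of $b$ in the sense of the Appendix.

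Next I would construct the reverse functor. Given a pre-metric group $(E, q)$, fix a skeleton with simple objects $\{X_a\}_{a \in E}$ and isomorphisms $X_a \otimes X_b \cong X_{a+b}$. The associator and braiding are then encoded in scalars $\alpha : E^3 \to k^\times$ and $\beta : E^2 \to k^\times$, and the pentagon and hexagon axioms are precisely the conditions that $(\alpha, \beta)$ be an abelian $3$-cocycle in the sense of Eilenberg--MacLane. The key input is their theorem
\begin{equation}
H^3_{ab}(E, k^\times) \;\xrightarrow{\sim}\; \mathrm{Quad}(E, k^\times), \qquad [(\alpha,\beta)] \mapsto \bigl( a \mapsto \beta(a,a) \bigr).
\end{equation}
Using it, one lifts $q$ to an abelian $3$-cocycle, and this data determines a pointed braided fusion category up to braided monoidal equivalence. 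Verifying that the two constructions are mutually inverse up to natural isomorphism then reduces to identifying the quadratic form read off from the reconstructed braiding with the original $q$, which is immediate from the Eilenberg--MacLane map.

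The main obstacle is the Eilenberg--MacLane theorem itself. Surjectivity (every quadratic form is realised) can be shown by writing down an explicit abelian cocycle representative after decomposing $E$ into cyclic factors and handling each factor by hand. Injectivity (cohomologous cocycles have the same $q$, and conversely) is more delicate and is most cleanly obtained via the computation $H^3(K(E,2), k^\times) \cong \mathrm{Quad}(E, k^\times)$, which requires either the Postnikov analysis of $K(E,2)$ or the original combinatorial argument of Eilenberg--MacLane. For the modular refinement, I would show that for a pointed braided fusion category the categorical $S$-matrix is essentially the Gram matrix of $b$, so $\mathcal C$ is modular if and only if $b$ is nondegenerate, which is exactly the condition that $(E,q)$ be a metric group. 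Finally, functoriality on morphisms is straightforward: a braided monoidal equivalence is determined by its action on simple objects and preserves both the group structure and the self-braiding, so it induces an isomorphism of pre-metric groups, and conversely any such isomorphism lifts (using the Eilenberg--MacLane identification) to a braided equivalence.
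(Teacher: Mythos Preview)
The paper does not supply its own proof of this theorem; it is quoted verbatim as Theorem~8.4.12 of Etingof--Gelaki--Nikshych--Ostrik and used as a black box to motivate the appendix on metric groups. There is therefore nothing in the paper to compare your argument against.

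That said, your outline is essentially the standard proof one finds in the cited reference: extract the group $E$ of simples and the self-braiding $q(a)=c_{a,a}$, invoke the Eilenberg--MacLane isomorphism $H^3_{ab}(E,k^\times)\cong\mathrm{Quad}(E,k^\times)$ to pass back, and identify modularity with nondegeneracy of the associated bilinear form via the $S$-matrix. The sketch is correct in spirit, with the usual caveat that the Eilenberg--MacLane computation is the real work and you have (appropriately) flagged it as such rather than carried it out. One small point: for the categorical equivalence you also need to say what the morphisms are on the pre-metric-group side (isomorphisms of pairs $(E,q)$) and check that braided tensor equivalences of pointed categories are detected precisely by these; you mention this at the end, and it is indeed routine once the cohomological identification is in place.
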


Modular tensor categories correspond to~(2+1)-dimensional topological quantum field theories via the Reshetikhin–Turaev construction~\cite{reshetikhin1991invariants}, which assigns a TQFT to each MTC. Freed and Teleman~\cite{freed2021gapped} show that such a TQFT can support a gapped boundary if and only if the underlying MTC is equivalent to the Drinfeld center of a~fusion category—that is, when the theory also arises from a~Turaev–Viro model~\cite{turaev1992state}. In the case of pointed MTCs, their condition for supporting a gapped boundary translates to the corresponding metric group being metabolic. This criterion has a natural counterpart in abelian Chern–Simons theory, where gapped boundaries correspond to Lagrangian subgroups of the charge lattice, as studied by Kapustin and Saulina~\cite{kapustin2011topological}.

\begin{defn}
    Let $(E, q)$ be a metric group. A subgroup \(L \subseteq E\) is called \textbf{Lagrangian} if:
\begin{itemize}
  \item \(q|_L = 1\), i.e., \(q(\ell) = 1\) for all \(\ell \in L\) , and
  \item \(L = L^\perp\), where \(L^\perp := \{x \in E \mid b(x, \ell) = 1 \text{ for all } \ell \in L\}\).
\end{itemize}
A metric group \((E, q)\) is said to be \textbf{metabolic} if it contains a Lagrangian subgroup.
\end{defn}
On the other hand, for a 2D (or 2+1 spacetime dimensional) stabilizer code, the condition for supporting a boundary is related to the metabolicity of a quasi-symplectic module \((P_\partial, \Omega_\partial)\). Remarkably, these two notions—arising from lattice models and from topological quantum field theory, respectively—agree perfectly, as~we discuss in this article.
\printbibliography

@article{ruba2024homological,
  title={Homological invariants of pauli stabilizer codes},
  author={Ruba, Blazej and Yang, Bowen},
  journal={Communications in Mathematical Physics},
  volume={405},
  number={5},
  pages={126},
  year={2024},
  publisher={Springer}
}

@book{ranicki1992lower,
  title={Lower K-and L-theory},
  author={Ranicki, Andrew},
  number={178},
  year={1992},
  publisher={Cambridge University Press}
}

@article{ranicki1973algebraicI,
  title={Algebraic L-Theory, I: Foundations},
  author={Ranicki, AA},
  journal={Proceedings of the London Mathematical Society},
  volume={3},
  number={1},
  pages={101--125},
  year={1973},
  publisher={Oxford University Press}
}

@article{ranicki1973algebraicII,
  title={Algebraic L-Theory, II: Laurent Extensions},
  author={Ranicki, Andrew A},
  journal={Proceedings of the London Mathematical Society},
  volume={3},
  number={1},
  pages={126--158},
  year={1973},
  publisher={Oxford University Press}
}

@book{cartan1999homological,
  title={Homological algebra},
  author={Cartan, Henri and Eilenberg, Samuel},
  volume={19},
  year={1999},
  publisher={Princeton university press}
}

@article{haah2021clifford,
  title={Clifford quantum cellular automata: Trivial group in 2D and Witt group in 3D},
  author={Haah, Jeongwan},
  journal={Journal of Mathematical Physics},
  volume={62},
  number={9},
  pages={092202},
  year={2021},
  publisher={AIP Publishing LLC}
}

@article{reshetikhin1991invariants,
  title={Invariants of 3-manifolds via link polynomials and quantum groups},
  author={Reshetikhin, Nicolai and Turaev, Vladimir G},
  journal={Inventiones mathematicae},
  volume={103},
  number={1},
  pages={547--597},
  year={1991}
}

@article{turaev1992state,
  title={State sum invariants of 3-manifolds and quantum 6j-symbols},
  author={Turaev, Vladimir G and Viro, Oleg Ya},
  journal={Topology},
  volume={31},
  number={4},
  pages={865--902},
  year={1992},
  publisher={Pergamon}
}

@article{haah2023invertible,
  title={Invertible subalgebras},
  author={Haah, Jeongwan},
  journal={Communications in Mathematical Physics},
  volume={403},
  number={2},
  pages={661--698},
  year={2023},
  publisher={Springer}
}

@article{yang2025categorifying,
  title={Categorifying Clifford QCA},
  author={Yang, Bowen},
  journal={arXiv preprint arXiv:2504.14811},
  year={2025}
}

@article{freed2021gapped,
  title={Gapped boundary theories in three dimensions},
  author={Freed, Daniel S and Teleman, Constantin},
  journal={Communications in Mathematical Physics},
  volume={388},
  number={2},
  pages={845--892},
  year={2021},
  publisher={Springer}
}

@book{etingof2015tensor,
  title={Tensor categories},
  author={Etingof, Pavel and Gelaki, Shlomo and Nikshych, Dmitri and Ostrik, Victor},
  volume={205},
  year={2015},
  publisher={American Mathematical Soc.}
}

@book{scharlau2012quadratic,
  title={Quadratic and Hermitian forms},
  author={Scharlau, Winfried},
  volume={270},
  year={2012},
  publisher={Springer Science \& Business Media}
}

@article{haah2025topological,
  title={Topological phases of unitary dynamics: Classification in Clifford category},
  author={Haah, Jeongwan},
  journal={Communications in Mathematical Physics},
  volume={406},
  number={4},
  pages={76},
  year={2025},
  publisher={Springer}
}

@article{liang2024operator,
  title={Operator algebra and algorithmic construction of boundaries and defects in (2+ 1) D topological Pauli stabilizer codes},
  author={Liang, Zijian and Yang, Bowen and Iosue, Joseph T and Chen, Yu-An},
  journal={arXiv preprint arXiv:2410.11942},
  year={2024}
}

@article{haah2023nontrivial,
  title={Nontrivial quantum cellular automata in higher dimensions},
  author={Haah, Jeongwan and Fidkowski, Lukasz and Hastings, Matthew B},
  journal={Communications in Mathematical Physics},
  volume={398},
  number={1},
  pages={469--540},
  year={2023},
  publisher={Springer}
}

@article{haah2013commuting,
  title={Commuting Pauli Hamiltonians as maps between free modules},
  author={Haah, Jeongwan},
  journal={Communications in Mathematical Physics},
  volume={324},
  pages={351--399},
  year={2013},
  publisher={Springer}
}

@article{schuster2023holographic,
  title={A holographic view of topological stabilizer codes},
  author={Schuster, Thomas and Tantivasadakarn, Nathanan and Vishwanath, Ashvin and Yao, Norman Y},
  journal={arXiv preprint arXiv:2312.04617},
  year={2023}
}

@book{Bourbaki,
  author       = {N. Bourbaki},
  title        = {Commutative Algebra},
  publisher    = {Addison-Wesley Publishing Company},
  address      = {Paris},
  year         = {1972}
}

@article{Masiek,
  author       = {V. Masiek},
  title        = {Gorenstein dimension and torsion of modules over commutative Noetherian rings},
  journal      = {Communications in Algebra},
  volume       = {28},
  year         = {2000}
}

@article{WenPlaquette,
  title = {Quantum Orders in an Exact Soluble Model},
  author = {Xiao-Gang Wen},
  journal = {Phys. Rev. Lett.},
  volume = {90},
  issue = {1},
  pages = {016803},
  year = {2003},
  publisher = {American Physical Society},
  doi = {10.1103/PhysRevLett.90.016803}
}

@book{Bruns_Herzog,
place={Cambridge}, 
edition={2}, 
series={Cambridge Studies in Advanced Mathematics}, 
title={Cohen-Macaulay Rings}, 
publisher={Cambridge University Press}, author={Bruns, Winfried and Herzog, H. Jürgen}, 
year={1998}, 
collection={Cambridge Studies in Advanced Mathematics}
}

@article{kapustin2011topological,
  title={Topological boundary conditions in abelian Chern--Simons theory},
  author={Kapustin, Anton and Saulina, Natalia},
  journal={Nuclear Physics B},
  volume={845},
  number={3},
  pages={393--435},
  year={2011},
  publisher={Elsevier}
}

@book{luck2024surgery,
  title={Surgery theory: Foundations},
  author={L{\"u}ck, Wolfgang and Macko, Tibor},
  volume={362},
  year={2024},
  publisher={Springer Nature}
}

@article{vijay2016fracton,
  title={Fracton topological order, generalized lattice gauge theory, and duality},
  author={Vijay, Sagar and Haah, Jeongwan and Fu, Liang},
  journal={Physical Review B},
  volume={94},
  number={23},
  pages={235157},
  year={2016},
  publisher={APS}
}

@inproceedings{quillen2006higher,
  title={Higher algebraic K-theory: I},
  author={Quillen, Daniel},
  booktitle={Higher K-Theories: Proceedings of the Conference held at the Seattle Research Center of the Battelle Memorial Institute, from August 28 to September 8, 1972},
  pages={85--147},
  year={2006},
  organization={Springer}
}

@article{CRSS,
  title = {Quantum Error Correction and Orthogonal Geometry},
  author = {Calderbank, A. R. and Rains, E. M. and Shor, P. W. and Sloane, N. J. A.},
  journal = {Phys. Rev. Lett.},
  volume = {78},
  issue = {3},
  pages = {405--408},
  numpages = {0},
  year = {1997},
  publisher = {American Physical Society},
  doi = {10.1103/PhysRevLett.78.405},
  url = {https://link.aps.org/doi/10.1103/PhysRevLett.78.405}
}

@article{Gottesman,
title = {Stabilizer Codes and Quantum Error Correction},
author = {Gottesman, Daniel},
journal={arXiv preprint arXiv:quant-ph/9705052},
year={1997}
}

@article{KITAEV2003,
title = {Fault-tolerant quantum computation by anyons},
journal = {Annals of Physics},
volume = {303},
number = {1},
pages = {2-30},
year = {2003},
issn = {0003-4916},
doi = {https://doi.org/10.1016/S0003-4916(02)00018-0},
url = {https://www.sciencedirect.com/science/article/pii/S0003491602000180},
author = {Kitaev, A. Yu.},
abstract = {A two-dimensional quantum system with anyonic excitations can be considered as a quantum computer. Unitary transformations can be performed by moving the excitations around each other. Measurements can be performed by joining excitations in pairs and observing the result of fusion. Such computation is fault-tolerant by its physical nature.}
}

@article{nandkishore2019fractons,
  title={Fractons},
  author={Nandkishore, Rahul M and Hermele, Michael},
  journal={Annual Review of Condensed Matter Physics},
  volume={10},
  number={1},
  pages={295--313},
  year={2019},
  publisher={Annual Reviews}
}

@article{pretko2020fracton,
  title={Fracton phases of matter},
  author={Pretko, Michael and Chen, Xie and You, Yizhi},
  journal={International Journal of Modern Physics A},
  volume={35},
  number={06},
  pages={2030003},
  year={2020},
  publisher={World Scientific}
}

@article{ellison2022pauli,
  title={Pauli stabilizer models of twisted quantum doubles},
  author={Ellison, Tyler D and Chen, Yu-An and Dua, Arpit and Shirley, Wilbur and Tantivasadakarn, Nathanan and Williamson, Dominic J},
  journal={PRX Quantum},
  volume={3},
  number={1},
  pages={010353},
  year={2022},
  publisher={APS}
}

@article{shirley2022three,
  title={Three-dimensional quantum cellular automata from chiral semion surface topological order and beyond},
  author={Shirley, Wilbur and Chen, Yu-An and Dua, Arpit and Ellison, Tyler D and Tantivasadakarn, Nathanan and Williamson, Dominic J},
  journal={PRX quantum},
  volume={3},
  number={3},
  pages={030326},
  year={2022},
  publisher={APS}
}

@article{wickenden2025excitation,
  title={Excitation-detector principle and the algebraic theory of planon-only abelian fracton orders},
  author={Wickenden, Evan and Shirley, Wilbur and Beaudry, Agn{\`e}s and Hermele, Michael},
  journal={arXiv preprint arXiv:2506.21773},
  year={2025}
}

@article{ellison2023pauli,
  title={Pauli topological subsystem codes from Abelian anyon theories},
  author={Ellison, Tyler D and Chen, Yu-An and Dua, Arpit and Shirley, Wilbur and Tantivasadakarn, Nathanan and Williamson, Dominic J},
  journal={Quantum},
  volume={7},
  pages={1137},
  year={2023},
  publisher={Verein zur F{\"o}rderung des Open Access Publizierens in den Quantenwissenschaften}
}

@article{haah2021classification,
  title={Classification of translation invariant topological Pauli stabilizer codes for prime dimensional qudits on two-dimensional lattices},
  author={Haah, Jeongwan},
  journal={Journal of Mathematical Physics},
  volume={62},
  number={1},
  year={2021},
  publisher={AIP Publishing}
}

@article{levin2003fermions,
  title={Fermions, strings, and gauge fields in lattice spin models},
  author={Levin, Michael and Wen, Xiao-Gang},
  journal={Physical Review B},
  volume={67},
  number={24},
  pages={245316},
  year={2003},
  publisher={APS}
}

@article{wickenden2024planon,
  title={Planon-modular fracton orders},
  author={Wickenden, Evan and Qi, Marvin and Dua, Arpit and Hermele, Michael},
  journal={arXiv preprint arXiv:2412.14320},
  year={2024}
}

@article{liang2025planar,
  title={Planar quantum low-density parity-check codes with open boundaries},
  author={Liang, Zijian and Eberhardt, Jens Niklas and Chen, Yu-An},
  journal={arXiv preprint arXiv:2504.08887},
  year={2025}
}

@article{liang2025generalized,
  title={Generalized toric codes on twisted tori for quantum error correction},
  author={Liang, Zijian and Liu, Ke and Song, Hao and Chen, Yu-An},
  journal={PRX Quantum},
  volume={6},
  number={2},
  pages={020357},
  year={2025},
  publisher={APS}
}

@article{kobayashi2024generalized,
  title={Generalized statistics on lattice},
  author={Kobayashi, Ryohei and Li, Yuyang and Xue, Hanyu and Hsin, Po-Shen and Chen, Yu-An},
  journal={arXiv preprint arXiv:2412.01886},
  year={2024}
}
\end{document}